\documentclass[11pt]{article}
\usepackage{amsmath,amssymb}
\usepackage{authblk}
\usepackage{color}
\usepackage{mathrsfs}
\usepackage{mathptmx}
\usepackage{verbatim}
\usepackage{epsfig}
\usepackage{CJK}
\usepackage{bm}
\usepackage{url}
\usepackage{amsfonts}
\usepackage{amsthm}
\input{epsf.sty}
\usepackage{epsfig}
\usepackage[scr=boondox]{mathalfa}
\usepackage{framed}
\usepackage{ulem}
\usepackage{appendix}

\allowdisplaybreaks

\def\<{\langle}
\def\>{\rangle}

\def\be{\begin{equation}}
\def\ee{\end{equation}}
\def\ba{\begin{array}}
\def\ea{\end{array}}

\newtheorem{theorem}{Theorem}[section]
\newtheorem{lemma}{Lemma}[section]
\newtheorem{proposition}{Proposition}[section]

\theoremstyle{definition}
\newtheorem{remark}{Remark}[section]
\newtheorem{example}{Example}[section]

\numberwithin{equation}{section}

\usepackage{amsmath}
\usepackage{amsthm}
\usepackage{amsfonts}
\usepackage{pifont}
\usepackage{color}
\usepackage{bbm}
\usepackage{CJK}
\usepackage{mathrsfs}
\usepackage{fancyhdr}
\usepackage{graphicx}
\usepackage{pdfsync}
\usepackage{psfrag}
\usepackage{time}
\usepackage{txfonts}
\usepackage{tikz-cd}

\usepackage{stmaryrd}
\usepackage{mathrsfs}
\usepackage{txfonts}
\usepackage{amsfonts}

\def\be{\begin{equation}}
\def\ee{\end{equation}}
\def\br{\begin{eqnarray}}
\def\er{\end{eqnarray}}

\title{Variational Principles for Shock Dynamics in Compressible Euler Flows}

\author{Fran\c{c}ois Gay-Balmaz\thanks{Division of Mathematical Sciences, Nanyang Technological University, 21 Nanyang Link, Singapore 637371; 
 e-mail: \tt{francois.gb@ntu.edu.sg} }~
 and Cheng Yang\thanks{Division of Mathematical Sciences, Nanyang Technological University, 21 Nanyang Link, Singapore 637371; 
 e-mail: \tt{cheng.yang@ntu.edu.sg} }
\\
}

\date{\today}

\begin{document}

\maketitle

\begin{abstract}
Hamilton’s principle plays a central role in fluid mechanics as a fundamental tool for deriving governing equations, analyzing conservation laws, and designing structure-preserving numerical schemes. However, its classical formulation is restricted to smooth solutions and does not directly accommodate shock discontinuities. Addressing this limitation within a variational framework has long been a challenging issue. In this paper, we develop a variational framework that extends Hamilton’s principle to shock solutions in compressible fluid dynamics. For the barotropic Euler equations, we introduce a modified action principle that incorporates additional contributions localized at discontinuities. This allows the Rankine–-Hugoniot conditions for mass and momentum to emerge directly from unrestricted variations, without imposing continuity across shocks. The additional term admits a natural interpretation as a dissipation potential, linking the variational structure to energy loss at shocks. 
We then extend the approach to the full compressible Euler equations. Using a variational formulation of nonequilibrium thermodynamics together with suitable variational and phenomenological constraints, we recover the Rankine--Hugoniot relations for mass, momentum, and energy. This yields a unified variational description of shock dynamics in compressible fluids and highlights a fundamental distinction between barotropic and full compressible models in the treatment of energy and entropy at discontinuities.
\end{abstract}

\tableofcontents

\section{Introduction}

Shock waves are singular solutions of compressible fluid models characterized by discontinuities across codimension-one moving interfaces. For the Euler equations, such discontinuities are constrained by the Rankine–Hugoniot conditions, which encode the balance of mass, momentum, and energy across the shock surface. These relations are fundamental in the theory of hyperbolic conservation laws \cite{cofr,lali,whit} and provide the admissibility conditions for weak solutions \cite{evan}.

The Rankine--Hugoniot jump conditions originate from the classical works of Rankine and Hugoniot \cite{Rankine1870,Hugoniot1887,Hugoniot1889}, and arise in the modern theory as compatibility conditions for weak solutions of conservation laws; see also \cite{Lax1957,Godunov1959}. The classical derivation of the Rankine–Hugoniot conditions relies on integral balance laws or distributional formulations of the governing equations. While mathematically robust, this approach does not arise from a variational principle and therefore does not directly reflect the geometric and energetic structure of continuum mechanics. In particular, the incorporation of moving discontinuities into Lagrangian formulations remains subtle.

The variational derivation of the equations of motion for compressible inviscid flow began with Herivel \cite{Herivel1955}, whose early application of Hamilton’s principle was later refined by Serrin \cite{Serrin1959} to account for non-isentropic regimes. Serrin observed that the utility of the variational approach lies in its ability to manage kinematic constraints that exceed the reach of the standard PDEs, effectively bridging the gap between smooth flows and the integral conservation laws required at shock fronts. While these jump conditions are algebraically defined by the Rankine--Hugoniot equations \cite{Taub1948}, their variational representation requires a careful treatment of the underlying configuration space. The work of Taub \cite{Taub1954} and Newcomb \cite{Newcomb1962} extended variational methods to relativistic fluids and magnetohydrodynamics, both within the setting of smooth dynamics. Similarly, the Lie group theoretic geometric approach initiated by Arnold \cite{Arnold1965}, based on the diffeomorphism group formulation, is developed from the outset under the assumption of smooth fluid motion.

Recent work has addressed related aspects of nonsmooth and interface dynamics. Multisymplectic formulations for systems with shock discontinuities have been proposed in \cite{fema}, while constrained variational principles for smooth continua with free boundaries have been developed via Lagrangian reduction in \cite{gama}. In \cite{IzKh} a geodesic, group-theoretic, and
Hamiltonian framework is developed for inviscid
incompressible fluid
flows with vortex sheets. However, a direct variational formulation of shock dynamics for compressible Euler systems, long sought in the literature, remains unavailable.

The purpose of this paper is to formulate variational principles for shock solutions of the compressible Euler equations and to derive the associated Rankine–Hugoniot conditions directly from stationarity of the action. Our approach treats the shock surface, denoted $\Gamma(t)$, as an evolving interface separating two smooth flow regions $M_\pm(t)$ and allows for independent variations on either side of the discontinuity.
We consider both the compressible barotropic Euler equations and the full compressible Euler system with entropy.

For the barotropic case, we introduce the critical action principle:
\[
\begin{aligned}
&\delta \int_{t_0}^{t_1} \left[\int_{M_-(t)}\left(\mathscr l (u_-,\rho_-) +\lambda_- + \rho_- D_t w \right){\rm d}x +\right.\left. \int_{M_+(t)}\left(\mathscr l (u_+,\rho_+) +\lambda_+ + \rho_+ D_t w \right){\rm d}x \right]{\rm d}t=0
\end{aligned}
\]
with respect to arbitrary variations $\delta\Gamma$, $\delta\rho_\pm$, $\delta w$ and constrained variations of the velocity fields 
$$
\delta u_\pm = \partial_t\xi_\pm + \mathcal{L}_{u_\pm}\xi_\pm,\;\;\; \delta\lambda_\pm = - \operatorname{div}(\lambda_\pm \xi_\pm),
$$ 
and continuity conditions for the derivatives $\partial_t w$ and $\nabla_x w$ of the auxiliary field $w$ across the shock interface $\Gamma(t)$.

Allowing variations of the bulk fields (the term bulk fields refers to fields defined on the smooth subdomains $M_\pm(t)$, in contrast to interface quantities defined on the shock surface) and of the interface geometry yields the Rankine–Hugoniot conditions for mass and momentum. The induced energy balance takes the form
$$
\frac{d}{dt}\mathscr E=\int_{\Gamma} (v_s\left[\!\left[\lambda\right]\!\right]-\left[\!\left[\lambda u\right]\!\right]\cdot n )\,{\rm d}a=-\frac{d}{dt}\mathcal{V}_{\rm shock},
$$
where $\mathcal V_{\rm shock}$ is an interface contribution depending on the Lagrangian flow maps via its domain of integration
$$
-\mathcal{V}_{\rm shock}(\varphi_\pm, M_\pm) = \int_{\varphi_-(t)^{-1}(M_-(t))}\Lambda_-(X)\;{\rm d}X + \int_{\varphi_+(t)^{-1}(M_+(t))}\Lambda_+(X)\;{\rm d}X.
$$
In general, this potential is not proportional to the volume functional (i.e., we cannot choose $\Lambda(X)=1$), as can be verified explicitly for one-dimensional barotropic flows. Consequently, the variational formulation leads to a modified energy balance when full interface variations are permitted.

For the full compressible Euler system, we introduce the critical action principle:
\[
\begin{aligned}
&\delta \int_{t_0}^{t_1}
\left[
    \int_{M_-(t)}
        \left( \mathscr l (u_-,\rho_-,s_-)
        + \rho_- D_t w_-
        + (s_- - \varsigma_-) D_t \gamma_- \right)\, {\rm d}x
\right. \\
&\qquad \left.
   + \int_{M_+(t)}
        \left( \mathscr l (u_+,\rho_+,s_+)
        + \rho_+ D_t w_+
        + (s_+ - \varsigma_+) D_t \gamma_+ \right)\, {\rm d}x
\right] {\rm d}t = 0,
\end{aligned}
\]
subject to the advection constraints
\[
\bar D_t\varsigma_\pm=0,\qquad \bar D_\delta\varsigma_\pm=0.
\]
These constraints can also incorporate an entropy flux $j_s$, which may include contributions from heat conduction and other irreversible mechanisms.

In this setting, stationarity with respect to general bulk and interface variations yields the Rankine–Hugoniot conditions for mass, momentum, energy, and the entropy-related variable $s-\varsigma$. Moreover, the total energy satisfies the exact conservation law
$$
\frac{d}{dt}\mathscr E=0.
$$

This reveals a fundamental structural distinction between the barotropic and full compressible Euler systems at the variational level. In the barotropic case, the absence of entropy degrees of freedom necessitates the introduction of an additional interface contribution in order to recover the full Rankine–Hugoniot relations under general variations, resulting in a modified energy balance. Specifically, a dissipative potential $\mathcal{V}_{\rm shock}$ is incorporated into the Lagrangian within Hamilton's principle. By contrast, in the full compressible system, the entropy-related variable provides sufficient flexibility to accommodate general interface variations while preserving energy conservation. In this setting, Hamilton's principle can be consistently extended using a variational formulation for nonequilibrium thermodynamics, as developed in \cite{GBYo2019}.

The framework developed here embeds shock dynamics into classical Lagrangian mechanics and provides a geometric interpretation of the Rankine–Hugoniot conditions. It establishes a foundation for further analytical developments, including structure-preserving discretizations, variational approximations of discontinuous solutions, and connections with thermodynamic admissibility criteria.

\paragraph{Content of the paper.}
In \S\ref{sec: pre}, we review the Euler fluid systems in conservative form and recall the associated Rankine--Hugoniot conditions, emphasizing the key differences between compressible and barotropic flows.  
In \S\ref{sec_3}, we develop the underlying geometric framework in terms of an infinite-dimensional configuration space and formulate a variational principle for the barotropic Euler equations in the presence of shocks. From this principle, the bulk equations, the Rankine--Hugoniot conditions across discontinuities, and the boundary conditions all arise as critical point conditions. Both fluids in fixed domains and fluids with free boundaries are considered.  
In \S\ref{section_EB}, we provide an interpretation of this variational principle in terms of a dissipation potential, thereby clarifying the energy balance for barotropic fluids with shocks. The approach is illustrated in the one-dimensional setting, which also sheds light on the role of the choice of the potential.  
In \S\ref{sec_5}, we extend the analysis to the full compressible Euler system. This setting naturally falls within the framework of nonequilibrium thermodynamics and requires a corresponding extension of the variational formulation to account for irreversible processes while preserving total energy. The bulk equations, Rankine--Hugoniot conditions for mass, momentum, and energy, and the boundary conditions are obtained as critical point conditions.   Finally, in \S\ref{sec:entropy}, we demonstrate that the proposed variational framework can accommodate irreversible processes in fluids by considering heat conduction. An appendix provides background material on Rankine--Hugoniot conditions and weak solutions of conservation laws.

\section{Preliminaries: shocks in compressible and barotropic Euler equations}\label{sec: pre}

In this section, we present the Euler fluid systems in conservative form together with their associated Rankine--Hugoniot conditions, and discuss the key distinctions between compressible and barotropic models. For standard introductions to the compressible and barotropic Euler equations, see \cite{Dafermos,lali,Lax,Majda,Serre}.

\subsection{The compressible Euler equations}

We consider the compressible Euler equations describing the motion of an inviscid fluid. Let
\(\rho(x,t)\) denote the mass density,
\(u(x,t)\in\mathbb R^n\) the velocity field,
and $S(x,t)$ the specific entropy. The governing equations take the conservative form
\begin{align}
\partial_t \rho + \nabla\cdot(\rho u) &= 0, 
\label{euler1}\\
\partial_t(\rho u) + \nabla\cdot(\rho u\otimes u) + \nabla p &= 0,
\label{euler2}\\
\partial_t E + \nabla\cdot\big((E+p)u\big) &= 0,
\label{euler3}
\end{align}
where the total energy density is given by
\[
E=\tfrac12 \rho|u|^2 + \rho e(\rho,S),
\]
and \(e(\rho,S)\) denotes the specific internal energy. Here $p= \rho^2\frac{\partial e}{\partial\rho}$ is the pressure, \(\rho u\otimes u\) is the momentum flux tensor and \(\nabla p = \nabla\cdot(p \mathbf I)\), with \(\mathbf I\) the identity matrix.

Writing the compressible Euler equations in conservative form \eqref{euler1}--\eqref{euler3} highlights the fundamental conservation laws of mass, momentum, and energy, which remain valid even for weak solutions with discontinuities such as shocks.

\paragraph{Entropy considerations.} For smooth solutions, the energy equation \eqref{euler3} can be rewritten as an internal energy balance
\begin{equation}\label{euler4}
\partial_t e + u\cdot\nabla e + \frac{p}{\rho}\nabla\cdot u=0,
\end{equation}
or, equivalently,
\begin{equation}\label{euler4alt}
\partial_t(\rho e)+\nabla\cdot(\rho e\,u)=-p\nabla\cdot u.
\end{equation}
Using the material derivative and the thermodynamic identity
\(
de=T\,dS+\frac{p}{\rho^2}d\rho
\)
together with the continuity equation \eqref{euler1}, we obtain
\[
\frac{de}{dt}=T\frac{dS}{dt}-\frac{p}{\rho}\nabla\cdot u.
\]
Substituting this into \eqref{euler4} gives the entropy transport equation
\(
T\frac{dS}{dt}=0,
\)
or equivalently,
\begin{equation}\label{euler5}
\partial_t S+u\cdot\nabla S=0,
\end{equation}
or in conservative form,
\begin{equation}\label{euler5_div}
\partial_t(\rho S)+\nabla\cdot(\rho S\,u)=0.
\end{equation}
Therefore, for smooth solutions the specific entropy $S$ is advected along particle trajectories.

When shock discontinuities are present, the entropy density $s=\rho S$ is no longer conserved in the classical sense and entropy is instead produced across shocks in accordance with the second law of thermodynamics. In this case, the conservative Euler system \eqref{euler1}--\eqref{euler3} remains valid in the weak sense, and physically admissible solutions satisfy an entropy inequality.

\paragraph{Rankine--Hugoniot conditions.} We review the derivation of the jump conditions across a shock front.
Let $\Omega\subseteq \mathbb R^{n+1}$ be a spacetime domain, and let $\Sigma\subset\Omega$ be a codimension-one hypersurface representing the shock. We parametrize $\Sigma$ as a moving interface $\Gamma(t)$ with normal
velocity $v_s$ and spatial unit normal $n$.

\begin{figure}[h]
\centering
\begin{tikzpicture}
    % Draw the arbitrary region \Omega
    \draw[fill=lightgray!50] plot[smooth cycle, tension=0.8] coordinates {(0,1) (1,3) (2,4) (4,3.5) (4.5,1.5) (3,0.5) (1.5,0.2)};
    
    % Draw the curve \Sigma that fully separates the region
    \draw[thick] (-0.1,1.5) .. controls (2,3) and (3.5,1) .. (4.8,2.5);
    
    % Draw the regions \Omega_1 and \Omega_2
    \node at (1.5,0.5) {\textbf{\(\Omega_{-}\)}};
    \node at (3,3.5) {\textbf{\(\Omega_{+}\)}};
    
    % Draw label for \Gamma
    \node at (2.5, 2.5) {\(\Sigma\)};
    
    % Draw label for \Omega
    \node at (4.8, 0.5) {\(\Omega\)};
    
\end{tikzpicture}
\caption{ Singular surface (shock front) in a domain}
\label{fig:shock}
\end{figure}
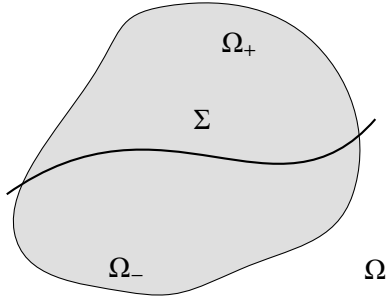

\begin{theorem}[Rankine--Hugoniot conditions for the compressible Euler equations]
\label{thm:RHEuler}
Let $(\rho,u,E)$ be a weak solution of the compressible Euler system
\eqref{euler1}--\eqref{euler3} possessing a jump discontinuity across a smooth,
time-dependent hypersurface $\Gamma(t)\subset \mathbb{R}^n$.
Let $n$ denote the unit normal vector to $\Gamma(t)$, oriented from the
``minus'' side to the ``plus'' side, and let $v_s$ be the normal velocity of
$\Gamma(t)$.

Then the Rankine--Hugoniot jump condition across $\Gamma(t)$ is
\begin{equation}\label{genRH}
n\cdot \big[\!\big[ F(U) \big]\!\big]- v_s\big[\!\big[U \big]\!\big]=0,
\end{equation}
where $U=(\rho,\rho u,E)$ is the vector of conserved
variables, $F(U)$ is the associated flux tensor, and
$[\![\,\cdot\,]\!]$ denotes the jump across $\Gamma(t)$.

Equivalently, the jump conditions \eqref{genRH} can be written componentwise as:
\begin{enumerate}
\item \emph{Mass conservation:}
\begin{equation}\label{eulerRH1}
n\cdot\big[\!\big[ \rho(u-v_s n) \big]\!\big]=0.
\end{equation}

\item \emph{Momentum conservation:}
\begin{equation}\label{eulerRH2}
n\cdot\big[\!\big[ \rho u\otimes(u-v_s n) + p \mathbf I \big]\!\big]=0.
\end{equation}

\item \emph{Energy conservation:}
\begin{equation}\label{eulerRH3}
n\cdot\big[\!\big[ E(u-v_s n) + p u \big]\!\big]=0.
\end{equation}
\end{enumerate}
These conditions characterize the conservation of mass, momentum, and energy
across shock discontinuities in compressible inviscid flow.
\end{theorem}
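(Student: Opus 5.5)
The plan is the standard weak-formulation argument, applied to the system written as $\partial_t U + \nabla\cdot F(U)=0$ with $U=(\rho,\rho u,E)$ and
$$F(U)=\big(\rho u,\ \rho u\otimes u+p\mathbf I,\ (E+p)u\big).$$

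\textbf{Step 1: weak formulation.} By the definition of weak solution (cf.\ the appendix), for every test function $\phi\in C_c^\infty(\Omega)$ and each component,
$$\int_\Omega \big(U\,\partial_t\phi + F(U)\cdot\nabla\phi\big)\,{\rm d}x\,{\rm d}t=0 .$$
I will assume, as the statement implicitly does, that $U$ is $C^1$ up to $\Sigma$ on each side, with one-sided traces $U_\pm$.

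\textbf{Step 2: splitting and integration by parts.} Split $\Omega=\Omega_-\cup\Sigma\cup\Omega_+$ as in Figure~\ref{fig:shock}. On each of $\Omega_\pm$ the solution is smooth, so choosing $\phi$ supported inside $\Omega_\pm$ shows that the equations \eqref{euler1}--\eqref{euler3} hold classically there. Now take an arbitrary $\phi$ and apply the spacetime divergence theorem to the field $(F(U)\phi,\,U\phi)$ on $\Omega_-$ and on $\Omega_+$ separately. The bulk terms vanish by the classical equations, and since $\phi$ has compact support only the integrals over $\Sigma$ survive.

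\textbf{Step 3: the spacetime normal.} This is the main point to do carefully. Parametrize $\Sigma$ by $\{(x,t): x\in\Gamma(t)\}$. The spacetime normal pointing into $\Omega_+$ is proportional to $(n,-v_s)$, with spatial part $n$ and time part $-v_s$. This follows by writing $\Gamma(t)=\{f(x,t)=0\}$ locally, with $n=\nabla f/|\nabla f|$ and $v_s=-\partial_t f/|\nabla f|$. The surface measure factors as ${\rm d}\sigma=\sqrt{1+v_s^2}\,{\rm d}a\,{\rm d}t$, and this factor cancels against the normalization of the normal. The boundary contribution from $\Omega_-$ uses the outward normal $(n,-v_s)$ with trace $U_-$; the one from $\Omega_+$ uses $-(n,-v_s)$ with trace $U_+$. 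Adding them gives
$$\int_{t}\int_{\Gamma(t)}\Big(n\cdot[\![F(U)]\!]-v_s[\![U]\!]\Big)\,\phi\,{\rm d}a\,{\rm d}t=0 .$$
I will be careful with the sign conventions: the jump is $[\![\cdot]\!]=(\cdot)_+-(\cdot)_-$ and $n$ points from minus to plus.

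\textbf{Step 4: conclusion.} Since $\phi|_\Sigma$ is arbitrary and the integrand is continuous on $\Sigma$, the fundamental lemma of the calculus of variations gives \eqref{genRH} pointwise.

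\textbf{Step 5: componentwise form.} Insert the components of $F$ and use $n\cdot(p\mathbf I)=pn$:
\begin{itemize}
\item the mass component gives $[\![\rho u\cdot n-v_s\rho]\!]=n\cdot[\![\rho(u-v_sn)]\!]=0$, which is \eqref{eulerRH1};
\item the momentum component gives $[\![(\rho u\otimes u)\cdot n+pn-v_s\rho u]\!]=0$, which is \eqref{eulerRH2} after using the identity $(\rho u\otimes(u-v_sn))^{\!\top}n=\rho u\,(u\cdot n-v_s)$ (with the appropriate index convention for the contraction);
\item the energy component gives $[\![(E+p)u\cdot n-v_sE]\!]=0$, which is \eqref{eulerRH3}.
\end{itemize}

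The main obstacle is only the geometric bookkeeping in Step 3: obtaining the normal and its orientation correctly, and checking that the time-component sign yields $-v_s[\![U]\!]$ rather than $+v_s[\![U]\!]$. The rest is routine.
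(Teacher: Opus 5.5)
Your proposal is correct, but it takes a different route from the paper's proof of this theorem. The paper passes from the conservative form $\partial_t U+\nabla\cdot F(U)=0$ to the integral balance law over fixed control volumes $D$, and then invokes Lemma~\ref{lem:cons_law_RH}. There, a control volume straddling $\Gamma(t)$ is split into two parts. The transport theorem for the moving interface produces the term $v_s[\![U]\!]$, and the spatial divergence theorem produces $n\cdot[\![F(U)]\!]$. Arbitrariness of the patch of $\Gamma(t)$ then gives the jump relation. Everything happens in space, with only $n$, $v_s$ and ${\rm d}a$ appearing.

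Your argument instead starts from the distributional definition and applies the spacetime divergence theorem on $\Omega_\pm$. This is the vector-valued version of the paper's appendix Proposition~\ref{weaksolRH}, not of the argument the paper uses for this theorem.

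The advantage of your route is that it uses exactly the stated hypothesis, a weak solution. The paper tacitly identifies weak solutions with solutions of the integral balance law; this equivalence holds for piecewise $C^1$ functions but is not proved there. The paper's route, in turn, avoids spacetime geometry. Through the converse direction of Lemma~\ref{lem:cons_law_RH}, it also shows that the bulk equations plus the jump conditions are sufficient for the integral conservation law.

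Your Step~3 bookkeeping is right: the normal $(n,-v_s)$ points into $\Omega_+$, and ${\rm d}\sigma=\sqrt{1+v_s^2}\,{\rm d}a\,{\rm d}t$. You could skip the measure computation entirely, since a positive weight on $\Sigma$ does not affect the pointwise conclusion of the fundamental lemma. You are also right to flag the contraction convention in the momentum component: $n$ must be contracted with the second factor of $u\otimes(u-v_s n)$, a point the paper leaves implicit.
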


\begin{proof}
The compressible Euler equations \eqref{euler1}, \eqref{euler2}, and \eqref{euler3}
can be written in conservative form as
\begin{equation}\label{diveq}
\partial_t U + \nabla \cdot F(U) = 0,
\end{equation}
where \(U = (\rho,\rho u,E)\) is the vector of conserved variables and
\(F(U)\) is the corresponding flux.

The associated integral conservation law reads
\begin{equation}\label{diveq2}
\frac{d}{dt}\int_D U{\rm d}x
=
- \int_{\partial D} F(U) \cdot n \,{\rm d}a,
\end{equation}
for any fixed control volume \(D \subset \mathbb{R}^n\), possibly intersecting the
moving singular surface \(\Gamma(t)\).

Applying Lemma \ref{lem:cons_law_RH} to \eqref{diveq2} yields the jump relation
\begin{equation}\label{genRH2}
[\![ F(U) ]\!] \cdot n
=
v_s \,[\![ U ]\!],
\end{equation}
which is equivalent to \eqref{genRH}. Substituting the explicit expressions for
\(U\) and \(F(U)\) corresponding to the compressible Euler
system then yields the componentwise jump conditions
\eqref{eulerRH1}, \eqref{eulerRH2}, and \eqref{eulerRH3}.
\end{proof}

\subsection{The Barotropic Euler equations}

The compressible barotropic Euler equations describe the motion of a compressible fluid for which the pressure depends only on the density,
\(
p=p(\rho).
\)
In this setting no separate entropy or energy equation is required, and the dynamics are governed by the conservation of mass and momentum.

Letting $\rho(x,t)$ denote the density and
$u(x,t)\in\mathbb{R}^n$ the velocity field, the equations take the conservative form
\begin{equation}\label{eq:baro_nd}
\begin{aligned}
\partial_t \rho + \nabla\cdot(\rho u) &= 0,\\
\partial_t(\rho u) + \nabla\cdot(\rho u\otimes u) + \nabla p(\rho) &= 0.
\end{aligned}
\end{equation}

\begin{remark}[Mechanical energy conservation]\label{rmk:energy_conservation}
For barotropic flows, the relevant conserved quantity is the mechanical energy.  
Define the energy density
\[
E = \tfrac12 \rho |u|^2 + \rho e(\rho),
\qquad
e(\rho) := \int^{\rho} \frac{p(r)}{r^2}\,{\rm d}r .
\]
For smooth solutions of \eqref{eq:baro_nd}, the energy density $E$ satisfies the conservation law
\begin{equation}\label{eq:baro_energy}
\partial_t E
+ \nabla\cdot\big((E+p(\rho))u\big)=0 .
\end{equation}
Although no thermodynamic variables appear explicitly, the function $e(\rho)$ plays a role analogous to internal energy in the isentropic setting.
\end{remark}

By applying Lemma~\ref{lem:cons_law_RH} to the integral conservation law associated with \eqref{eq:baro_nd}, i.e., with $U = (\rho, \rho u)$, yields the
Rankine--Hugoniot jump conditions for the barotropic Euler equations:
\begin{enumerate}
\item \emph{Mass conservation:}
$$
n\cdot\big[\!\big[ \rho(u-v_s n) \big]\!\big]=0.
$$

\item \emph{Momentum conservation:}
$$
n\cdot\big[\!\big[ \rho u\otimes(u-v_s n) + p \mathbf I \big]\!\big]=0.
$$
\end{enumerate}   
%\end{remark}

\section{Variational formulation for the barotropic Euler equations with shocks}\label{sec_3}

This section establishes a variational principle for the barotropic Euler equations in the presence of shocks, from which the bulk equations, the Rankine–Hugoniot conditions across discontinuities, and the boundary conditions arise as critical point conditions. We consider both fluids evolving in a fixed domain and fluids with free boundaries. Our approach is valid for general Lagrangian densities.

It is well known that smooth fluid motion in the Lagrangian description can be derived from Hamilton's principle
\begin{equation}\label{HP}
\delta \int_{t_0}^{t_1}L(\varphi, \dot \varphi ) \,{\rm d}t=0, 
\end{equation}
where the Lagrangian $L:TQ\rightarrow\mathbb{R}$ is given in terms of a Lagrangian density by
\begin{equation}\label{L_density}
L(\varphi, \dot{\varphi})= \int_M\mathcal{L}(\varphi, \dot \varphi, \nabla\varphi, \varrho)\, {\rm d}X .
\end{equation}
Here, $Q$ denotes the (infinite-dimensional) configuration manifold of the fluid. For a fluid moving in a fixed domain $M$, one has $Q=\operatorname{Diff}(M)$, the diffeomorphism group of $M$; for a free-boundary fluid, $Q=\operatorname{Emb}(M, \mathbb{R}^n)$ the manifod of embeddings.

The Lagrangian density in \eqref{L_density} depends on the mass density $\varrho$ in Lagrangian coordinates. In the smooth setting, this density is time-independent due to the mass conservation constraint and is therefore not varied in the variational principle; it can be prescribed a priori.

To extend this framework to flows with shocks, the first step is to identify the appropriate infinite-dimensional geometric setting.

\subsection{Geometric setting}

\paragraph{Fluids in a fixed domain.}
We assume that the fluid occupies a fixed spatial domain $M\subset\mathbb{R}^n$, diffeomorphic to a closed ball, while a time-dependent
shock surface $\Gamma(t)$ partitions $M$ into two subdomains $M_\pm(t)$.  
The boundary $\partial M$ is fixed, but it is decomposed into two time-dependent
subsets $S_\pm(t)\subset\partial M$ according to the position of the shock.
We consider two labels spaces $\mathcal{B}_\pm$ (each diffeomorphic to $\mathbb{R}^n_+$), corresponding to labels of fluid particles that are associated with $M_\pm(t)$. In presence of shocks, it is crucial to observe that the set of labels corresponding to fluid particle actually in $M_\pm(t)$ at time $t$, i.e. the domain $\varphi_{\pm}(t)^{-1} (M_\pm (t)) \subset \mathcal{B}_\pm$ depends on time. Since the fluid has fixed boundary, the preimages
of the boundary components under the flow maps are constrained as follows:
\[
\varphi_\pm(t)^{-1}(S_\pm(t))\subset \partial\mathcal{B}_\pm.
\]
Thus the labels of particles moving on the physical boundary are fixed once and for all.

Based on these considerations, given a domain $M$ with boundary $\partial M$ and label spaces
$\mathcal{B}_\pm$ (each diffeomorphic to $\mathbb{R}^n_+$), the configuration manifold
for a fluid with fixed boundary and a shock surface is
\begin{equation}\label{Qfix}
Q_{\rm fix}
=\Big\{(\varphi_-,\varphi_+,\Gamma)\in
\operatorname{Diff}(\mathbb{R}^n)\times\operatorname{Diff}(\mathbb{R}^n)\times\mathcal{S}
\;\Big|\;
\varphi_\pm^{-1}(S_\pm)\subset \partial\mathcal{B}_\pm
\Big\},
\end{equation}
where $\mathcal{S}$ denotes the manifold of hypersurfaces that split $M$ into two
subdomains, and where $S_\pm$ are determined by $\Gamma$.

Note that rather than describing the motion by time-dependent embeddings
\[
\varphi_\pm(t)\colon B_\pm(t)\subseteq\mathbb{R}^n \longrightarrow M_\pm(t)\subseteq\mathbb{R}^n
\]
defined on variable reference domains, we adopt a global formulation in which the
motions $\varphi_\pm(t)$ are treated as diffeomorphisms of the entire space
$\mathbb{R}^n$.  
In this approach, the restriction to the physically relevant particles is not
encoded in the configuration space itself but is instead enforced at the level of
the Lagrangian, which selects only those portions of the diffeomorphism corresponding
to particles lying in the smooth regions at a given time.

\paragraph{Free-boundary fluids.}
We now allow the spatial domain occupied by the fluid to evolve in time.
Let $M(t)\subseteq\mathbb{R}^n$ denote the fluid domain at time $t$, with a moving
boundary $\partial M(t)$.  
A shock surface $\Gamma(t)$ partitions $M(t)$ into two subdomains $M_\pm(t)$, and the
boundary $\partial M(t)$ is accordingly decomposed into time-dependent components
$S_\pm(t)\subset\partial M(t)$.

As in the fixed-boundary case, the preimages of these boundary components under the
flow maps are constrained:
\[
\varphi_\pm(t)^{-1}(S_\pm(t))\subset \partial\mathcal{B}_\pm.
\]
This describes the time dependent set consisting of labels of particles moving on the free boundary of the fluid.

The configuration manifold for a free-boundary fluid with a shock surface is then
\begin{equation}\label{Qfree}
Q_{\rm free}
=\Big\{(\varphi_-,\varphi_+,\partial M,\Gamma)\in
\operatorname{Diff}(\mathbb{R}^n)\times\operatorname{Diff}(\mathbb{R}^n)\times\mathcal{BS}
\;\Big|\;
\varphi_\pm^{-1}(S_\pm)\subset \partial\mathcal{B}_\pm
\Big\},
\end{equation}
where $\mathcal{BS}$ denotes the bundle whose elements $(\partial M,\Gamma)$ consist
of a boundary $\partial M$ together with a hypersurface $\Gamma$ that splits the
associated domain $M$ into two parts.  
The boundary components $S_\pm$ are determined by $(\partial M,\Gamma)$.

\subsection{Action functional and variations}\label{action_functional}

We present a variational formulation for the barotropic Euler equations that extend the Hamilton principle \eqref{HP} of the smooth case to the treatment of shocks. It produces the Rankine--Hugoniot jump conditions for mass and momentum as critical conditions, while permitting arbitrary variation of the variables in the configuration manifold. We then consider the variational principle induced in Eulerian coordinates, via relabeling symmetries.

\medskip

The following points are used to construct the appropriate extension of the Lagrangian \eqref{L_density} to the treatment of shocks:
\begin{itemize}
\item[(i)] The Lagrangian involves the sum of two contributions: one for each of the subdomains;
\item[(ii)] Each of these contributions must take into account only of the fluid labels $\varphi_\pm^{-1}(M_\pm(t))$ 
occupying the smooth regions $M_\pm(t)$ at time $t$;
\item[(iii)] As opposed to the smooth case, one cannot a priori fix a time-independent mass density in Lagrangian coordinates as in \eqref{L_density}. It has to be considered as a part of the variational principle and mass conservation has to be enforced weekly in each smooth subdomains, via the introduction of the scalar variable $W$. This leads to the enlargement of the configuration manifold $Q_{\rm fix}$ to $\mathcal{Q}_{\rm fix}$ described below, similarly for $Q_{\rm free}$ and $\mathcal{Q}_{\rm free}$;
\item[(iv)] The inclusion of a potential term, specific to the energetic of the  barotropic shocks, whose physical interpretation will be given later.
\end{itemize}

\paragraph{Hamilton's principle for shocks in Lagrangian coordinates.} From the above considerations, the full configuration manifold in the fixed-domain case is
\[
\mathcal{Q}_{\mathrm{fix}} = Q_{\mathrm{fix}} \times C^{\infty}(\mathbb{R}^{n+1}) \times C^{\infty}(\mathbb{R}^{n+1}) \times C^{\infty}(\mathbb{R}^{n+1})\times C^{\infty}(\mathbb{R}^{n+1}),
\]
and similarly, in the free-boundary case, one defines $\mathcal{Q}_{\mathrm{free}}$ based on $Q_{\mathrm{free}}$. A typical configuration $q \in \mathcal{Q}$ is of the form
\begin{equation}\label{q_variables}
q =
\begin{cases}
(\varphi_-, \varphi_+, \Gamma, \rho_-, \rho_+, W_-,W_+),
& \text{if } q \in \mathcal{Q}_{\mathrm{fix}}, \\[4pt]
(\varphi_-, \varphi_+, \partial M, \Gamma, \rho_-, \rho_+, W_-,W_+),
& \text{if } q \in \mathcal{Q}_{\mathrm{free}}.
\end{cases}
\end{equation}
We will assume that the Eulerian function $w$ on $M$ piecewise defined by $W_-\circ \varphi_-^{-1}$ and $W_+\circ \varphi_+^{-1}$ is a $C^1$ function.

Based on points (i)--(iv), the total Lagrangian is a function
\[
L : T\mathcal{Q} \to \mathbb{R},
\]
where $\mathcal{Q}$ stands for either $\mathcal{Q}_{\mathrm{fix}}$ or $\mathcal{Q}_{\mathrm{free}}$. It is defined by
\begin{equation}\label{Lagrangian}
\begin{aligned}
L(q, \dot{q}) 
&= \int_{\varphi_-^{-1}(M_-(t))}
\Big( \mathcal{L}(\varphi_-, \dot{\varphi}_-, \nabla \varphi_-, \varrho_-) + \varrho_- \dot{W}_- \Big)\, \mathrm{d}X
+ \int_{\varphi_-^{-1}(M_-(t))} \Lambda_-(X)\, \mathrm{d}X \\
&\qquad + \int_{\varphi_+^{-1}(M_+(t))}
\Big( \mathcal{L}(\varphi_+, \dot{\varphi}_+, \nabla \varphi_+, \varrho_+) + \varrho_+ \dot{W}_+ \Big)\, \mathrm{d}X
+ \int_{\varphi_+^{-1}(M_+(t))} \Lambda_+(X)\, \mathrm{d}X,
\end{aligned}
\end{equation}
where, $\Lambda_\pm(X)$ are prescribed functions on the reference configuration.

Hamilton’s principle then takes the usual form
\begin{equation}\label{HP_shocks}
\delta \int_{t_0}^{t_1} L(q, \dot{q}) \, {\rm d}t = 0,
\end{equation}
for arbitrary variations of the curve $q(t)$ with fixed endpoint, where $q(t)$ is as in \eqref{q_variables}.

\paragraph{Hamilton's principle for shocks in Eulerian coordinates.} Passing to the Eulerian description, we denote by $\lambda_\pm$ the Eulerian
counterparts of the function $\Lambda_\pm$.
The critical action principle then takes the form
\begin{equation}\label{critical_AP}
\begin{aligned}
&\delta \int_{t_0}^{t_1} \left[\int_{M_-(t)}\left(\mathscr l (u_-,\rho_-) +\lambda_- + \rho_- D_t w_{-} \right){\rm d}x +\right.\left. \int_{M_+(t)}\left(\mathscr l (u_+,\rho_+) +\lambda_+ + \rho_+ D_t w_{+} \right){\rm d}x \right]{\rm d}t=0
\end{aligned}
\end{equation}
Here $u_\pm$ and $\rho_\pm$ denote the Eulerian velocity and density fields in
$M_\pm(t)$, respectively, $\mathscr l $ is the reduced Lagrangian density, and we introduce the notation
\begin{equation}\label{mat_der}
D_t=\partial_t+u\cdot\nabla
\end{equation}
for the material derivative.
The variations are arbitrary with respect to $\delta\Gamma$, $\delta\rho_\pm$,
and $\delta w$, while the variations of the velocity and $\lambda_\pm$ are
of the usual constrained form
\begin{equation}\label{delta_u_lambda}
\delta u_\pm=\partial_t\xi_\pm+\mathscr{L}_{u_\pm}\xi_\pm,
\qquad
\delta\lambda_\pm=-\nabla\cdot (\lambda_\pm\xi_\pm),
\end{equation}
where
\[
\xi_\pm=\delta\varphi_\pm\circ\varphi_\pm^{-1}
\]
are arbitrary Eulerian vector fields.
Variations of the shock hypersurface $\Gamma$ are induced by normal
displacements of the form $\delta\Gamma=\xi_s\cdot n_+$, where $\xi_s$ is an
arbitrary vector field defined on $\Gamma$ and $n_+$ denotes the unit
normal to the shock surface.

\begin{remark}[Form of the action functional]
Note the presence of the additional term $\varrho_\pm \dot W_\pm$ in the Lagrangian. This term plays a central role in extending Hamilton's principle to the variational formulation of nonequilibrium thermodynamics (\cite{GBYo2019}) and open fluid systems (\cite{ElGBWu2025}), where arbitrary variations of the mass density $\varrho$ must be permitted.
By contrast, for smooth and reversible fluid motion in a closed domain, the mass density in the Lagrangian description may be assumed time-independent from the outset, $\varrho(t,X)= \varrho(X)$, and therefore need not enter explicitly into Hamilton's principle.
In previous works, the inclusion of the term $\varrho_\pm \dot W_\pm$ was essential for modelling diffusion processes (\cite{GBYo2019}) and mass flux across permeable boundaries (\cite{ElGBWu2025}). In the present setting, it is crucial for deriving the Rankine–Hugoniot condition associated with mass conservation. The meaning of the additional term involving $\Lambda(X)$ will be explained later in Section \ref{section_EB}.
\end{remark}

\subsection{Derivation of the Rankine--Hugoniot condition for mass and momentum}\label{sec:RH_mass}

In this section, we derive the bulk equations, the Rankine–Hugoniot conditions across discontinuities, and the associated boundary conditions from the variational principle \eqref{critical_AP}, obtained from Hamilton’s principle \eqref{HP_shocks} in Eulerian coordinates. The derivation is most naturally carried out using Lie derivatives; the resulting equations are then reformulated in standard vector notation in a subsequent paragraph.

\begin{theorem}\label{RHvar_modi}
The critical action principle \eqref{critical_AP} yields the following equations:
\begin{itemize}
\item[(1)] \textbf{Bulk equations.} The fluid equations in the bulk are
\begin{equation}\label{genEulerLag}
\left\{
\begin{aligned}
&\partial_t \rho_\pm + \mathscr L_{u_\pm}\rho_\pm = 0,\\
&\partial_t \frac{\partial\mathscr l}{\partial u_\pm}
+ \mathscr L_{u_\pm}\frac{\partial\mathscr l}{\partial u_\pm}
= \rho_\pm \, d\frac{\partial\mathscr l}{\partial \rho_\pm}.
\end{aligned}
\right.
\end{equation}

\item[(2)] \textbf{Boundary conditions.} 
For free-boundary fluids, the boundary condition on $S_\pm$ is
\begin{equation}\label{genBounCon}
\mathscr l(u_\pm,\rho_\pm) - \rho_\pm \frac{\partial\mathscr l}{\partial \rho_\pm} = 0.
\end{equation}
For fluids in a fixed domain, no additional boundary conditions arise. In particular, the condition $u_\pm \cdot n_\pm = 0$ on $S_\pm$, where $n_\pm$ denotes the outward unit normal vector to $S_\pm$, follows directly from the choice of configuration manifold.

\item[(3)] \textbf{Rankine--Hugoniot condition for momentum.}
On the singular surface $\Gamma$, one has
\begin{equation}\label{genLagRH}
v_s\left[\!\left[\frac{\partial\mathscr l}{\partial u}\right]\!\right](\cdot)
=\left(\left[\!\left[\frac{\partial\mathscr l}{\partial u}\otimes u\right]\!\right]
+\left[\!\left[\mathscr l - \rho\frac{\partial\mathscr l}{\partial \rho}\right]\!\right]\delta\right)\cdot n,
\end{equation}
where $v_s$ denotes the normal speed of $\Gamma$, $\delta$ is the $(1,1)$ identity tensor characterized by $\delta^i_j = 1$ if $i=j$ and $\delta^i_j=0$ otherwise, and $n$ is the unit normal vector field along $\Gamma$.

\item[(4)] \textbf{Rankine--Hugoniot condition for mass.}
On the singular surface $\Gamma$, one has
\begin{equation}\label{eq:RH_mass}
v_s\left[\!\left[\rho\right]\!\right]=\left[\!\left[\rho u\right]\!\right]\cdot n.
\end{equation}
\end{itemize}
\end{theorem}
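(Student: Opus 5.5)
We compute the first variation of the action in \eqref{critical_AP} directly on the moving domains $M_\pm(t)$. Variations of an integral over a time-dependent region produce two kinds of terms: bulk terms from varying the integrands, and boundary terms from moving the region's boundary. On $\Gamma$ this boundary motion is the normal displacement $\delta\Gamma=\xi_s\cdot n_+$. On $S_\pm$ it is the boundary displacement, which for free-boundary fluids is $\xi_\pm\cdot n_\pm$; for a fixed domain it is absent.

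\textbf{Bulk variations.} First write $\int\rho_\pm D_t w\,{\rm d}x$ and integrate by parts in time and space. With $\delta u_\pm=\partial_t\xi_\pm+\mathscr L_{u_\pm}\xi_\pm$, the variation of this term contributes $\rho_\pm\nabla w\cdot\delta u_\pm$ along with the $\delta\rho_\pm$ and $\delta w$ terms. The bulk critical conditions are then as follows.
\begin{itemize}
\item Arbitrary $\delta w$ gives $-(\partial_t\rho_\pm+\nabla\cdot(\rho_\pm u_\pm))$, the continuity equation.
\item $\delta\rho_\pm$ gives $\partial\mathscr l/\partial\rho_\pm+D_tw=0$.
\item $\delta\lambda_\pm=-\nabla\cdot(\lambda_\pm\xi_\pm)$ integrates by parts to $\lambda_\pm$-dependent boundary terms only.
\end{itemize}
For the $\xi_\pm$ terms, set $m_\pm=\partial\mathscr l/\partial u_\pm+\rho_\pm dw$. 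The standard Euler--Poincar\'e integration by parts gives $-(\partial_t+\mathscr L_{u_\pm})m_\pm$. Then use $\mathscr L_u(\rho\,dw)=\rho\,d(D_tw-\partial_t w)$ together with mass conservation and the $\delta\rho$ relation. This eliminates $w$ and yields the momentum equation in \eqref{genEulerLag}.

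\textbf{Interface and boundary terms.} We collect all terms on $\Gamma$ with care about signs, using $n=n_+=-n_-$. There are three sources.
\begin{itemize}
\item Integrating by parts in space and time over the moving subdomains gives fluxes of the form $m_\pm\cdot\xi_\pm\,(u_\pm\cdot n-v_s)$ and $\rho_\pm\delta w\,(u_\pm\cdot n-v_s)$, via the Reynolds transport theorem in spacetime.
\item Varying $\Gamma$ gives $[\![\mathscr l+\lambda+\rho D_tw]\!]\,\delta\Gamma$.
\item The $\lambda$ terms give $\lambda_\pm\xi_\pm\cdot n$.
\end{itemize}
Since $w$ is $C^1$ across $\Gamma$, $\delta w$ is continuous there. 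Its coefficient $[\![\rho(u\cdot n-v_s)]\!]$ must therefore vanish, which is \eqref{eq:RH_mass}.

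The variations $\xi_-$, $\xi_+$ and $\xi_s$ are independent. Each gives a condition linking the $\lambda_\pm$ jumps to the fluid quantities, and these conditions together fix $\lambda_\pm$ on $\Gamma$. Eliminating $\lambda$ is done by combining the $\xi_\pm$ conditions with the $\delta\Gamma$ condition. Here we substitute $D_tw=-\partial\mathscr l/\partial\rho$ and use the continuity of $\nabla w$ and $\partial_t w$ to cancel the $w$-dependent parts through the mass relation. The result is \eqref{genLagRH}, with the pressure-like combination $\mathscr l-\rho\,\partial\mathscr l/\partial\rho$.

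On $S_\pm$ in the free-boundary case, $\xi_\pm\cdot n_\pm$ is arbitrary and $\delta w$ contributes nothing there. The coefficient is $\mathscr l-\rho\partial\mathscr l/\partial\rho$ after the same substitution, giving \eqref{genBounCon}. For a fixed domain, $\xi_\pm\cdot n_\pm=0$ on $S_\pm$ and the condition $u_\pm\cdot n_\pm=0$ comes from $Q_{\rm fix}$. Here we need $\lambda_\pm$ to vanish on the outer boundary, or to be treated as in the free-boundary case.

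\textbf{Main obstacle.} The delicate step is the interface bookkeeping. We must express each $\xi_\pm$ at $\Gamma$ relative to the motion of $\Gamma$, and correctly treat $\delta w$ and its derivatives as continuous while $\rho$, $u$ and $\lambda$ jump. The auxiliary $\lambda_\pm$ terms must then be shown to serve exactly to absorb the tangential and normal contributions of $\xi_\pm$, so that only the momentum jump relation survives. We will first try writing every interface term in the form $(\text{coefficient})\cdot\xi_\pm$ plus $(\text{coefficient})\,\delta\Gamma$, and then solve for the $\lambda$ jumps.
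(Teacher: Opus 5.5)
Your proposal is correct and follows the paper's proof essentially step by step. The common steps are: spacetime integration by parts on $\Omega_\pm$ with the $\rho_\pm\nabla w_\pm$ correction to the momentum; continuity from $\delta w$ and $D_tw=-\partial\mathscr l/\partial\rho$ from $\delta\rho$; the mass jump from continuity of $\delta w$ across $\Gamma$; and the momentum jump obtained by subtracting the two one-sided $\xi_\pm$ interface identities and eliminating $[\![\lambda]\!]$ through the $\xi_s$ condition $[\![\mathscr l+\lambda+\rho D_tw]\!]=0$.

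There are two small slips to fix. First, the identity should read $\mathscr L_u(\rho\,dw)=(\mathscr L_u\rho)\,dw+\rho\,d(u\cdot\nabla w)$. Then $(\partial_t+\mathscr L_u)(\rho\,dw)=(\bar D_t\rho)\,dw+\rho\,dD_tw$, and it is exactly the first term that mass conservation removes. Second, on a fixed boundary no condition on $\lambda_\pm$ is needed, because every $\lambda$ contribution on $S_\pm$ carries the factor $\xi_\pm\cdot n_\pm=0$; in the free-boundary case the boundary-motion term and the $\delta\lambda$ term cancel.
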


\begin{proof}
Let $q(t)\in \mathcal{Q}_{\rm free}$ (or $\mathcal{Q}_{\rm fix}$)
be a curve in the configuration manifold and let $q_\epsilon$ be an arbitrary variation with fixed endpoints. 
\iffalse
Recall that the curve $\eta(t)$ encodes all the $\varphi_\pm(t)$, $\partial M(t)$, $\Gamma(t)$, as well as $\varrho_\pm(t)$ and $W(t)$.
\fi
The stationarity condition is given by
\begin{equation}\label{stan_eq_modi}
\begin{aligned}
0=&\delta\int_{t_0}^{t_1}{\rm d}t\int_{M_+}\big(\mathscr{l}(u_+,\rho_+)+\lambda_++\rho_+ D_t w\big)\,{\rm d}x
+\int_{M_-}\big(\mathscr{l}(u_-,\rho_-)+\lambda_-+\rho_- D_t w\big)\,{\rm d}x\\
=&\int_{t_0}^{t_1}{\rm d}t\int_{M_+}
\left(\frac{\partial\mathscr l}{\partial u_+}\cdot\delta u_+
+\frac{\partial\mathscr l}{\partial \rho_+}\delta \rho_+
+\delta\lambda_+
+\delta\rho_+D_t w_+
+\rho_+\delta D_t w_+\right)\,{\rm d}x\\
&+\int_{t_0}^{t_1}{\rm d}t\left(\int_{S_+}
(\mathscr{l}(u_+,\rho_+)+\lambda_++\rho_+ D_t w_+)\,\xi_+\cdot n_+\,{\rm d}a+\int_{\Gamma}
(\mathscr{l}(u_+,\rho_+)+\lambda_++\rho_+ D_t w_+)\,\xi_s\cdot n_+\,{\rm d}a\right)\\
&+\int_{t_0}^{t_1}{\rm d}t\int_{M_-}
\left(\frac{\partial\mathscr l}{\partial u_-}\cdot\delta u_-
+\frac{\partial\mathscr l}{\partial \rho_-}\delta \rho_-
+\delta\lambda_-
+\delta\rho_-D_t w_-
+\rho_-\delta D_t w_-\right)\,{\rm d}x\\
&+\int_{t_0}^{t_1}{\rm d}t\left(\int_{S_-}
(\mathscr{l}(u_-,\rho_-)+\lambda_-+\rho_- D_t w_-)\,\xi_-\cdot n_-\,{\rm d}a+\int_{\Gamma}
(\mathscr{l}(u_-,\rho_-)+\lambda_-+\rho_- D_t w_-)\,\xi_s\cdot n_-\,{\rm d}a\right).
\end{aligned}
\end{equation}
We recall that $\delta\Gamma= \xi_s\cdot n_+$ denotes the variation of the shock interface $\Gamma$
Note that if the domain is fixed, then on $S_\pm$, $\xi_\pm\cdot n_\pm=0$.

We now regard $\Omega=\Omega_+\cup\Omega_-$ as a spacetime domain with singular submanifold $\Sigma$ (see Figure~\ref{fig:shock}). A direct computation, using the variations $\delta u_\pm$ and $\delta \lambda_\pm$ given in \eqref{delta_u_lambda}, yields
\begin{align}\label{first_eq_modi}
%\begin{split}
&\iint_{\Omega_\pm}\left(\frac{\partial\mathscr l}{\partial u_\pm}\cdot\delta u_\pm
+\frac{\partial\mathscr l}{\partial \rho_\pm}\cdot\delta \rho_\pm
+\delta\lambda_\pm+\delta\rho_\pm D_t w_\pm +\rho_\pm\delta D_tw_\pm\right)\,{\rm d}x{\rm d}t\notag\\
=&\iint_{\Omega_\pm}\left(\frac{\partial\mathscr l}{\partial u_\pm}\cdot (\partial_t\xi_\pm+[u_\pm,\xi_\pm])
+\left(\frac{\partial\mathscr l}{\partial \rho_\pm}+D_t w_\pm\right)\delta\rho_\pm
-\mathscr L_{\xi_{\pm}}\lambda_\pm\right)\,{\rm d}x{\rm d}t\notag\\
&\quad +\iint_{\Omega_\pm}\rho_\pm\left(D_t\delta w_\pm+(\partial_t \xi_\pm+\mathscr L_{u_\pm}\xi_\pm,\nabla) w_\pm\right)\,{\rm d}x{\rm d}t\notag\\
=&\iint_{\Omega_\pm}\left(\left(-\partial_t \frac{\partial\mathscr l}{\partial u_\pm}
-\mathscr L_{u_\pm}\frac{\partial\mathscr l}{\partial u_\pm}\right)\cdot\xi_\pm
+\left(\frac{\partial\mathscr l}{\partial \rho_\pm}+D_t w_\pm\right)\delta\rho_\pm\right)\,{\rm d}x{\rm d}t\notag\\
&\quad+\iint_{\Omega_\pm}\left(\bar{D}_t(\rho_\pm\delta w_\pm)
-\bar{D}_t\rho_\pm\delta w_\pm
+\bar{D}_t(\rho_\pm\xi_\pm\cdot\nabla w_\pm)
-\xi_\pm\cdot(\partial_t(\rho_\pm\nabla w_\pm)+\mathscr L_{u_\pm} (\rho_\pm\nabla w_\pm))\right)\,{\rm d}x{\rm d}t\notag\\
&\quad+\int_{\partial\Omega_\pm}\left(\frac{\partial\mathscr l}{\partial u_\pm}\cdot \xi_\pm n_t^\pm
+ \frac{\partial\mathscr l}{\partial u_\pm}(\xi_\pm) \;u_\pm \cdot n_x^{\pm}
-\lambda_\pm \xi_\pm \cdot n_x^\pm\right)\,{\rm d}a(x,t)\notag\\
=&\iint_{\Omega_\pm}\left(\left(-\partial_t \frac{\partial\mathscr l}{\partial u_\pm}
-\mathscr L_{u_\pm}\frac{\partial\mathscr l}{\partial u_\pm}
- (\partial_t(\rho_\pm\nabla w_\pm)+\mathscr L_{u_\pm} (\rho_\pm\nabla w_\pm))\right)\cdot\xi_\pm
+\left(\frac{\partial\mathscr l}{\partial \rho_\pm}+D_t w_\pm\right)\delta\rho_\pm
-\bar{D}_t\rho_\pm\delta w_\pm\right)\,{\rm d}x{\rm d}t\notag\\
&\quad+\int_{\partial\Omega_\pm}\left(\frac{\partial\mathscr l}{\partial u_\pm}\cdot \xi_\pm n_t^\pm
+ \frac{\partial\mathscr l}{\partial u_\pm}(\xi_\pm) \;u_\pm \cdot n_x^{\pm}
-\lambda_\pm \xi_\pm \cdot n_x^\pm\right)\,{\rm d}a(x,t)\notag\\
&\quad+\int_{\partial\Omega_\pm} \left(\rho_\pm\delta w_\pm n_t^\pm
+\rho_\pm\delta w_\pm u_\pm \cdot n_x^\pm
+\rho_\pm\xi_\pm\cdot\nabla w_\pm n_t^\pm
+\rho_\pm\xi_\pm\cdot\nabla w_\pm u_\pm \cdot n_x^\pm\right)\,{\rm d}a(x,t),
\end{align}
where $\bar D_t$ denotes the material derivative of a density:
\begin{equation}\label{mat_der_dens}
\bar D_t= \partial_t + \mathscr{L}_u,
\end{equation}
which corresponds to $\bar D_t(\cdot)= \partial_t(\cdot) + \nabla\cdot (\,\cdot\, u)$ in usual vector calculus notation, see below.
In \eqref{first_eq_modi}, we used the following formulas
\begin{equation}\label{useful_formulas}
\begin{aligned}
\delta D_t w
&=\delta\partial_t w+(u,\nabla)\delta w+(\delta u,\nabla)w
=D_t\delta w+(\partial_t \xi+\mathscr L_{u}\xi,\nabla) w\\
\rho D_t \delta w&=\bar D_t(\rho\delta w)-\bar D_t\rho \delta w\\
\rho(\partial_t \xi+\mathscr L_{u}\xi,\nabla) w
&=\bar{D}_t(\rho\xi\cdot\nabla w)-\xi\cdot(\partial_t(\rho\nabla w)+\mathscr L_u (\rho\nabla w)),
\end{aligned}
\end{equation}
In the integration by parts performed above, we recall that 
$\frac{\partial \mathscr l }{\partial u}$ is interpreted as a $1$-form density rather than as an ordinary $1$-form and $\mathscr L_u \frac{\partial \mathscr l }{\partial u}$ denotes the Lie derivative of a 1-form density.

\medskip

\noindent\textit{Bulk equations.} Plugging \eqref{first_eq_modi} into \eqref{stan_eq_modi}, we get in $\Omega_\pm$:
\begin{align*}
\xi_\pm:&\;\;\;
\partial_t \frac{\partial\mathscr l}{\partial u_\pm}
+\mathscr L_{u_\pm}\frac{\partial\mathscr l}{\partial u_\pm}
=\rho_\pm d \frac{\partial\mathscr l}{\partial \rho_\pm},\\
\delta \rho_\pm:&\;\;\;
D_t w_\pm=-\frac{\partial\mathscr l}{\partial \rho_\pm},\\
\delta w_\pm:&\;\;\;
\bar D_t \rho_\pm=0.
\end{align*}
The form of the RHS of the first equation arises from the following equality
\[
\partial_t(\rho\nabla w)+\mathscr L_u (\rho\nabla w)=\bar{D_t}\rho\nabla w+\rho \nabla D_t w
=-\rho d\frac{\partial\mathscr l}{\partial \rho}.
\]
We hence get Equations \eqref{genEulerLag}.
Recall that we regard $\rho$ as a density rather than as a scalar function, and therefore $\mathscr L_u(\rho\,\nabla w)$ is the Lie derivative of a $1$-form density.

\medskip

Let us split the contribution of $\partial\Omega_\pm$ as follows:
\[
\int_{\partial \Omega_\pm}=\int_{t_0}^{t_1}\int_{S_\pm}+\int_\Sigma.
\]

\noindent\textit{Boundary conditions.} We can get the boundary conditions on $S_\pm$ from the following equation (note that $\lambda_\pm$ are cancelled out on $S_\pm$):
\[
\begin{aligned}
0=&\int_{t_0}^{t_1}\int_{S_\pm}
\bigg(\frac{\partial\mathscr l}{\partial u_\pm}\cdot \xi_\pm n_t^\pm
+ \frac{\partial\mathscr l}{\partial u_\pm}(\xi_\pm) \;u_\pm \cdot n_x^{\pm}
+\left(\mathscr l_\pm-\rho_\pm\frac{\partial\mathscr l}{\partial \rho_\pm}\right) \xi_\pm \cdot n_x^\pm\\
&\;\;+\rho_\pm\xi_\pm\cdot\nabla w_\pm n_t^\pm
+\rho_\pm\xi_\pm\cdot\nabla w_\pm u_\pm \cdot n_x^\pm\bigg)\,{\rm d}a(x,t).
\end{aligned}
\]

For fluids in a fixed domain, along $S_\pm$ one has 
\[
n_t^\pm = -\,u_\pm\cdot n_x^\pm = 0,
\qquad 
\xi_\pm\cdot n_x^\pm = 0 .
\]
Under these conditions, the above equation is automatically satisfied.

For free-boundary fluids, along $S_\pm$ one has only
\[
n_t^\pm = -\,u_\pm\!\cdot\! n_x^\pm.
\]
Therefore, we obtain
\[
\rho_\pm \,\xi_\pm\!\cdot\!\nabla w_\pm \, n_t^\pm
+\rho_\pm \,\xi_\pm\!\cdot\!\nabla w_\pm \, u_\pm \!\cdot\! n_x^\pm
=0
\]
and
\[
\frac{\partial\mathscr l}{\partial u_\pm}\!\cdot\! \xi_\pm \, n_t^\pm
+\frac{\partial\mathscr l}{\partial u_\pm}(\xi_\pm)\, u_\pm \!\cdot\! n_x^\pm
=0.
\]
Consequently, the natural boundary condition along $S_\pm$ reduces to
Equation~\eqref{genBounCon}.

\medskip

\noindent\textit{Rankine–Hugoniot condition for mass.} Assuming that $\delta w_\pm$ is continuous across $\Gamma$,
i.e.\ $\delta w_+=\delta w_-=\delta w$, the boundary terms on $\Gamma$ imply
\[
\delta w:\;\;\;
\rho_+ n_t^++\rho_+ u_+\cdot n_x^+
=\rho_- n_t^-+\rho_- u_-\cdot n_x^-.
\]
Using $n_t^\pm=\mp v_s$ and $n_x^\pm=\pm n$, we obtain the Rankine--Hugoniot condition for mass \eqref{eq:RH_mass}.6

\medskip

\noindent\textit{Rankine–Hugoniot condition for momentum.}
Finally, we derive the conditions on the singular surface $\Gamma$. For any $\xi_\pm$ on $\Gamma$, we have
\[
0=\frac{\partial\mathscr l_\pm}{\partial u_\pm}(\xi_\pm)n_t
+ \frac{\partial\mathscr l}{\partial u_\pm}(\xi_\pm) \;u_\pm \cdot n_x^{\pm}
-\lambda_\pm \xi_\pm\cdot n_x
+\rho_\pm\xi_\pm\cdot\nabla w_\pm n_t^\pm
+\rho_\pm\xi_\pm\cdot\nabla w_\pm u_\pm\cdot n_x^\pm.
\]
Using the geometric identities
\[
n_t=-v_s,\qquad n_x= n,
\]
we arrive at
\[
v_s\frac{\partial\mathscr l_\pm}{\partial u_\pm}(\cdot)
=\left(\frac{\partial\mathscr l_\pm}{\partial u_\pm}\otimes u_\pm-\lambda_\pm\delta\right)\cdot n
-v_s\rho_\pm\nabla w_\pm(\cdot)
+\rho_\pm\nabla w_\pm(\cdot) u_\pm\cdot n.
\]
Subtracting these two equations, we obtain
$$
v_s\left[\!\left[\frac{\partial\mathscr l}{\partial u}\right]\!\right](\cdot)
=\left(\left[\!\left[\frac{\partial\mathscr l}{\partial u}\otimes u\right]\!\right]
+\left[\!\left[-\lambda\right]\!\right]\delta\right)\cdot n-(v_s\left[\!\left[\rho\right]\!\right]-\left[\!\left[\rho u\right]\!\right]\cdot n)\nabla w(\cdot).
$$

Then, for any $\xi^s=\xi_s\cdot n$ on $\Gamma$, we get
\begin{equation}\label{ell_lambda}
\mathscr l_++\lambda_++\rho_+ D_t w_+
=\mathscr l_-+\lambda_-+\rho_- D_t w_-,\quad \text{i.e.,}\quad\left[\!\left[-\lambda\right]\!\right]=\left[\!\left[\mathscr l +\rho D_tw\right]\!\right].
\end{equation}
Together with the Rankine--Hugoniot condition for mass, we obtain the Rankine--Hugoniot condition for momentum on $\Gamma$, given by Equation \eqref{genLagRH}. 
\end{proof}

\paragraph{Bulk, interface, and boundary equations in standard PDE notation.}
The system \eqref{genEulerLag} corresponds to equations of motion associated with the Lagrangian $\mathscr l (u_\pm,\rho_\pm)$ for the bulk fluid on each side of the interface $\Gamma$. The first equation,
\[
\partial_t \rho_\pm + \mathscr{L}_{u_\pm}\rho_\pm = 0,
\]
represents the continuity equation expressed in terms of the Lie derivative, which in standard vector calculus notation reads
\[
\partial_t \rho_\pm + \nabla\cdot(\rho_\pm u_\pm) = 0.
\]
The second equation,
\[
\partial_t \frac{\partial \mathscr l}{\partial u_\pm}
+ \mathscr{L}_{u_\pm}\frac{\partial \mathscr l}{\partial u_\pm}
= \rho_\pm d \frac{\partial \mathscr l}{\partial \rho_\pm},
\]
is the momentum equation derived variationally. Defining
\(
m_\pm := \frac{\partial \mathscr l}{\partial u_\pm},
\)
and using $\mathscr L_u m = \nabla_u m + \nabla u^T m + m \operatorname{div}u =  \nabla\cdot (m \otimes u)+\nabla u^T m$,
it can be written in the form
\begin{equation}\label{m_u_form}
\partial_t m_\pm
+ \nabla\cdot (m_\pm \otimes u_\pm)+\nabla u_\pm^T m_\pm
- \rho_\pm \nabla \frac{\partial \mathscr l}{\partial \rho_\pm} = 0.
\end{equation}
For the Lagrangian density $\mathscr l =\frac 12\rho|u|^2-\varepsilon(\rho)$, we get
$\nabla u_\pm^T m_\pm
- \rho_\pm \nabla \frac{\partial \mathscr l}{\partial \rho_\pm}=\rho_\pm \nabla (\varepsilon'(\rho_\pm))=\nabla p_\pm$, where
$$
p=\mathscr l -\rho\frac{\partial\mathscr l }{\partial\rho}=-\varepsilon+\rho\varepsilon'(\rho),
$$
is the pressure.
Hence \eqref{m_u_form} reduces to the standard Euler momentum equation.

Equation \eqref{genBounCon} for free-boundary fluids represents the natural boundary condition on $S_\pm$, which enforces the vanishing of the normal flux of the momentum on the boundary:
\[
\mathscr l_\pm - \rho_\pm \frac{\partial \mathscr l}{\partial \rho_\pm}= 0.
\]
For example, if $\mathscr l $ is the standard kinetic-minus-potential Lagrangian, this reduces to the pressure boundary condition $p_\pm = 0$ on $S_\pm$.

Equation \eqref{genLagRH} expresses the momentum jump across the interface $\Gamma$:
\[
v_s [\![m]\!]
= [\![m \otimes u]\!] n
+ [\![\mathscr l  - \rho \partial_\rho \mathscr l ]\!] n,
\]
where $v_s$ is the normal speed of the discontinuity and $[\![\cdot]\!]$ denotes the jump across $\Gamma$. This is the 
standard Rankine--Hugoniot condition ensuring conservation of momentum across the moving discontinuity.

Similarly, equation \eqref{eq:RH_mass} is the classical Rankine--Hugoniot condition for mass, stating that the normal mass flux across the interface balances the motion of the discontinuity:
\[
v_s [\![\rho]\!] = [\![\rho u]\!] \cdot  n.
\]

In summary, Theorem \ref{RHvar_modi} demonstrates that a single variational framework yields the bulk equations, boundary conditions, and Rankine–Hugoniot jump conditions, which in standard PDE notation correspond to the compressible barotropic Euler equations, associated boundary conditions, and classical mass and momentum jump conditions at the interface.

\section{Energy balance and dissipation: role of $\lambda_\pm$}\label{section_EB}

In this section, we interpret the potential term involving $\Lambda_\pm$ in the Lagrangian formulation (see \eqref{Lagrangian}), together with its Eulerian counterpart $\lambda_\pm = \Lambda_\pm \circ \varphi_\pm^{-1}, J\varphi_\pm^{-1}$ (see \eqref{critical_AP}), as a dissipation potential for barotropic shocks. The contribution of this potential is entirely localized on the shock surface.
This interpretation is obtained by deriving the energy balance for a barotropic fluid in the presence of shocks (\S\ref{dissip_pot}). We then illustrate the approach in the one-dimensional setting (\S\ref{1D_example}).

\subsection{Dissipation potential for barotropic shocks}\label{dissip_pot}

In the presence of shocks, the energy is no longer conserved, and its rate of change becomes a key quantity in the analysis. For each side of the shock, we define the total energy by
\[
\mathscr E_\pm(t)=\int_{M_\pm(t)}E_\pm\,{\rm d}x,
\qquad
\mathscr E(t)=\mathscr E_+(t)+\mathscr E_-(t),
\]
where the total energy density is defined from the Lagrangian density as
\[
E_\pm =\frac{\partial \mathscr l_\pm}{\partial u_\pm}\cdot u_\pm - \mathscr l_\pm.
\]

\begin{proposition}\label{prop:energy}
The time derivative of the total energy satisfies
\begin{equation}\label{time_change_E}
\frac{d}{dt}\mathscr E=\int_{\Gamma_t} \left(v_s\left[\!\left[E\right]\!\right]-\left[\!\left[(E+p)u\right]\!\right]\cdot n\right)\,{\rm d}a-\int_{\partial M} p u\cdot n\,{\rm d}a.
\end{equation}
\end{proposition}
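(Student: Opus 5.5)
We differentiate each of $\mathscr E_\pm(t)=\int_{M_\pm(t)}E_\pm\,{\rm d}x$ with the Reynolds transport theorem on the moving domains, use the smooth energy conservation law in the bulk, and collect the resulting flux terms on $\Gamma_t$ and on $\partial M$.

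\textbf{Step 1: transport theorem.} The boundary of $M_\pm(t)$ is $S_\pm(t)\cup\Gamma(t)$. The outward normal of $M_\pm$ along $\Gamma$ is $\mp n$, and the outward normal speed of $\Gamma$ seen from $M_\pm$ is $\mp v_s$. For a fixed domain, $S_\pm$ does not move in the normal direction. Hence
\[
\frac{d}{dt}\mathscr E_\pm=\int_{M_\pm}\partial_t E_\pm\,{\rm d}x\mp\int_{\Gamma}v_s E_\pm\,{\rm d}a .
\]
In the free-boundary case $S_\pm$ moves with normal speed $u_\pm\cdot n$. That term then combines with the divergence term below, and one would have to use \eqref{genBounCon}, i.e.\ $p=0$, on $S_\pm$.

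\textbf{Step 2: bulk energy law.} In each $M_\pm(t)$ we need $\partial_t E_\pm+\nabla\cdot((E_\pm+p_\pm)u_\pm)=0$ with $p=\mathscr l-\rho\,\partial\mathscr l/\partial\rho$. For the standard Lagrangian $\mathscr l=\frac12\rho|u|^2-\varepsilon(\rho)$ this is Remark~\ref{rmk:energy_conservation}. For a general $\mathscr l$ we would derive it from the bulk equations \eqref{genEulerLag} in the form \eqref{m_u_form}. The computation is
\[
\partial_t(m\cdot u-\mathscr l)=\partial_t m\cdot u-\tfrac{\partial\mathscr l}{\partial\rho}\partial_t\rho ,
\]
followed by substitution of \eqref{m_u_form} and the continuity equation, and then regrouping:
\[
-u\cdot\nabla\cdot(m\otimes u)-u\cdot\nabla u^Tm+\rho u\cdot\nabla\tfrac{\partial\mathscr l}{\partial\rho}+\tfrac{\partial\mathscr l}{\partial\rho}\nabla\cdot(\rho u)
\]
must equal $-\nabla\cdot((m\cdot u-\rho\,\partial_\rho\mathscr l)u)$. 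Verifying this identity, where the term $u\cdot\nabla u^Tm$ cancels against part of $\nabla\cdot((m\cdot u)u)$, is the main computational step and the place I expect sign and bookkeeping care to be needed.

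\textbf{Step 3: divergence theorem and collection.} We apply the divergence theorem to $\int_{M_\pm}\nabla\cdot((E_\pm+p_\pm)u_\pm)$ over $M_\pm$. On $\Gamma$ this gives $\pm\int_\Gamma (E_\pm+p_\pm)u_\pm\cdot n$ with the sign fixed by the outward normal $\mp n$, and on $S_\pm$ it gives $\int_{S_\pm}(E_\pm+p_\pm)u_\pm\cdot n_\pm$. Summing over $\pm$, the $\Gamma$ terms assemble into the jump $\int_\Gamma\big(v_s[\![E]\!]-[\![(E+p)u]\!]\cdot n\big){\rm d}a$, using the convention $[\![f]\!]=f_+-f_-$ with $n$ pointing from $-$ to $+$. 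On $\partial M=S_+\cup S_-$ we get $-\int_{\partial M}(E+p)u\cdot n$. The statement keeps the term $-\int_{\partial M}pu\cdot n$, which is consistent since it vanishes under $u\cdot n=0$ in the fixed case and under $p=0$ in the free case.

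For a free boundary, however, the transport term $\int_{S_\pm}E_\pm\,u_\pm\cdot n_\pm$ cancels the advective part $\int_{S_\pm}E_\pm\,u_\pm\cdot n_\pm$ of the flux, leaving exactly $-\int_{\partial M}pu\cdot n$. This gives \eqref{time_change_E} in both settings.
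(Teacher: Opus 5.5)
Your strategy is the paper's: Reynolds transport on $M_\pm(t)$, the bulk energy law, the divergence theorem, and collection of terms on $\Gamma_t$ and $\partial M$. Your Step 2 is a useful addition, since the paper takes $\partial_tE+\nabla\cdot((E+p)u)=0$ for granted, and the identity you propose there does hold. Your treatment of $S_\pm$ is also fine; note that $p=0$ is not needed, because the statement keeps $-\int_{\partial M}pu\cdot n$.

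The step that fails is the final assembly on $\Gamma_t$. You fix $n$ pointing from $-$ to $+$, $[\![f]\!]=f_+-f_-$, and outward normal $\mp n$ and outward speed $\mp v_s$ for $M_\pm$. With these conventions your Steps 1 and 3 give the $\Gamma$-supported part of $\frac{d}{dt}\mathscr E_\pm$ as $\int_\Gamma\big(\mp v_sE_\pm\pm(E_\pm+p_\pm)u_\pm\cdot n\big)\,{\rm d}a$. Summing over $\pm$ gives $\int_\Gamma\big(-v_s[\![E]\!]+[\![(E+p)u]\!]\cdot n\big)\,{\rm d}a$, which is the negative of what you assert. A one-dimensional check confirms this. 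Take $M_-=(a,s(t))$, $M_+=(s(t),b)$, $n=+1$ and $q=(E+p)u$. The Leibniz rule together with $\partial_tE_\pm=-\partial_xq_\pm$ produces, at $x=s(t)$, the terms $\dot s(E_--E_+)+q_+-q_-=-v_s[\![E]\!]+[\![q]\!]$.

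The stated formula is correct only for the opposite orientation, which is the one the paper's proof actually uses. There $n=n_+=-n_-$, with $n_\pm$ the outward normals of $M_\pm$, so $n$ points from $M_+$ into $M_-$ and $v_s$ is the normal speed along this $n$. The proof of Theorem~\ref{RHvar_modi} uses the same orientation, with $n_x^\pm=\pm n$ and $n_t^\pm=\mp v_s$. With that choice the $\Gamma$-contributions become $\pm v_sE_\pm\mp(E_\pm+p_\pm)u_\pm\cdot n$, and they sum exactly to $v_s[\![E]\!]-[\![(E+p)u]\!]\cdot n$.

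The convention you adopted is the one announced in Theorem~\ref{thm:RHEuler}. There it is harmless, because the Rankine--Hugoniot relations are invariant under $(n,v_s)\mapsto(-n,-v_s)$. The energy-rate integrand, however, changes sign under this flip, so you must fix the orientation explicitly and use it consistently. This is not cosmetic: the sign decides whether the interface term is an energy loss. The paper relies on that sign afterwards, in the entropy inequality yielding $\frac{d}{dt}\mathscr E\le 0$ and in the one-dimensional formula $\frac{d}{dt}\mathscr E=(E_R+p_R)u_R-(E_L+p_L)u_L$.
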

\begin{proof}
We have
$$
\frac{d}{dt}\mathscr E_\pm=\int_{\partial M_\pm(t)} E_\pm u^s\cdot n_\pm\,{\rm d}a-\int_{\partial M_\pm(t)}(E_\pm+p_\pm)u_\pm\cdot n_\pm\,{\rm d}a.
$$
On the shock surface $\Gamma_t$ one has $u^s\cdot n_\pm=v_s$, while on $S_\pm$ the boundary velocity coincides with the fluid velocity. Summing the $+$ and $-$ contributions and using $n=n_+=-n_-$ on $\Gamma_t$ yields
$$
\frac{d}{dt}\mathscr E=\frac{d}{dt}\mathscr E_++\frac{d}{dt}\mathscr E_-=\int_{\Gamma_t} \left(v_s\left[\!\left[E\right]\!\right]-\left[\!\left[(E+p)u\right]\!\right]\cdot n\right)\,{\rm d}a-\int_{\partial M} p u\cdot n\,{\rm d}a,
$$
as desired.
\end{proof}

The result above follows only from the energy balance in the smooth regions and does not depend on the interface or boundary conditions.
We now relate the energy dissipation \eqref{time_change_E} to the variable $\lambda_\pm$ introduced in the variational formulation. This result uses the critical conditions involving $\lambda_\pm$ as derived in Theorem \ref{RHvar_modi}.

\begin{theorem}
The rate of change of the total energy can be written as
\begin{equation}\label{time_change_E_lambda}
\frac{d}{dt}\mathscr E=\int_{\Gamma_t} \left(v_s\left[\!\left[\lambda\right]\!\right]-\left[\!\left[\lambda u\right]\!\right]\cdot n\right)\,{\rm d}a.
\end{equation}
\end{theorem}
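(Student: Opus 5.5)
The plan is to start from Proposition~\ref{prop:energy} and show that its right-hand side reduces to \eqref{time_change_E_lambda}. That requires two things: the boundary integral over $\partial M$ must vanish, and the integrand on $\Gamma_t$ must equal $v_s[\![\lambda]\!]-[\![\lambda u]\!]\cdot n$ pointwise.

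\emph{Boundary term.} For a fluid in a fixed domain, the configuration manifold forces $u\cdot n=0$ on $S_\pm$, so the integrand $p\,u\cdot n$ vanishes. For a free-boundary fluid, condition \eqref{genBounCon} states that $p_\pm=\mathscr l_\pm-\rho_\pm\partial\mathscr l/\partial\rho_\pm=0$ on $S_\pm$. In both cases $\int_{\partial M}p\,u\cdot n\,{\rm d}a=0$.

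\emph{Interface term.} I will work with the one-sided identities from the proof of Theorem~\ref{RHvar_modi}, taken \emph{before} the two sides are subtracted. Write $m=\partial\mathscr l/\partial u$, $E=m\cdot u-\mathscr l$ and $p=\mathscr l-\rho\,\partial\mathscr l/\partial\rho$. The bulk equation $D_tw=-\partial\mathscr l/\partial\rho$ gives $p=\mathscr l+\rho D_tw$. On each side the $\xi_\pm$ boundary condition on $\Gamma$ reads
\[
v_s m_\pm=(m_\pm\otimes u_\pm-\lambda_\pm\delta)\cdot n-(v_s-u_\pm\cdot n)\rho_\pm\nabla w_\pm .
\]
I pair this covector with $u_\pm$ to get
\[
v_s\, m\cdot u-(m\cdot u)(u\cdot n)=-\lambda\, u\cdot n-(v_s-u\cdot n)\rho\, u\cdot\nabla w
\]
on each side. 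Substituting this into $v_sE-(E+p)u\cdot n=v_s(m\cdot u-\mathscr l)-(m\cdot u+\rho D_tw)\,u\cdot n$ and expanding $D_tw=\partial_tw+u\cdot\nabla w$, the terms in $u\cdot\nabla w$ combine. The goal of this rearrangement is the one-sided identity
\[
v_sE-(E+p)u\cdot n=v_s\lambda-\lambda\,u\cdot n-v_s\big(\mathscr l+\lambda+\rho D_tw\big)+\rho\,\partial_tw\,(v_s-u\cdot n).
\]

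The final step is to take jumps across $\Gamma$. The quantity $\mathscr l+\lambda+\rho D_tw$ is continuous by the interface condition \eqref{ell_lambda}, so its jump vanishes. The time derivative $\partial_tw$ is continuous because $w$ is assumed $C^1$. The jump $[\![\rho(v_s-u\cdot n)]\!]$ vanishes by the mass condition \eqref{eq:RH_mass}. Hence only $v_s[\![\lambda]\!]-[\![\lambda u]\!]\cdot n$ survives, which gives \eqref{time_change_E_lambda}.

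The main obstacle is spotting that one must use the unsubtracted one-sided momentum relations, paired with $u_\pm$, rather than the final condition \eqref{genLagRH}. The jump $[\![\lambda u]\!]\cdot n$ is not determined by $[\![\lambda]\!]=-[\![p]\!]$ alone. The bookkeeping of the $\nabla w$ and $\partial_tw$ terms also has to be arranged so that exactly the three continuity facts above are what absorb them.
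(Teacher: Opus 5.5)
Your proposal is correct and follows essentially the same route as the paper: you remove the $\partial M$ term, pair the unsubtracted one-sided $\xi_\pm$ interface relation with $u_\pm$, and absorb the $w$-terms using \eqref{ell_lambda}, continuity of $\partial_t w$, and the mass condition \eqref{eq:RH_mass}. Two small differences from the paper: you package everything as a single pointwise one-sided identity before taking jumps, which makes visible that only continuity of $\partial_t w$ (not of $\nabla_x w$) is used, and you also treat the fixed-domain boundary term via $u\cdot n=0$, whereas the paper only invokes the free-boundary condition $p=0$.
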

\begin{proof}
Recall the variational identities
\[
E=\frac{\partial\mathscr l}{\partial u}\cdot u-\mathscr l,
\qquad
p=\mathscr l-\rho\frac{\partial\mathscr l}{\partial\rho}.
\]
Moreover, the natural boundary condition on $\partial M$ implies $p=0$ there. Hence \eqref{time_change_E} reduces to
$$
\frac{d}{dt}\mathscr E=\int_{\Gamma_t} \left(v_s\left[\!\left[E\right]\!\right]-\left[\!\left[(E+p)u\right]\!\right]\cdot n\right)\,{\rm d}a=\int_{\Gamma_t} \left(v_s\left[\!\left[\frac{\partial \mathscr l}{\partial u}\cdot u-\mathscr l\right]\!\right]-\left[\!\left[\left(\frac{\partial \mathscr l}{\partial u}\cdot u- \rho\frac{\partial \mathscr l}{\partial \rho}\right)u\right]\!\right]\cdot n\right)\,{\rm d}a.
$$
Using the Rankine--Hugoniot relations derived in Theorem~\ref{RHvar_modi}, one has on $\Gamma_t$
$$
v_s\frac{\partial\mathscr l_\pm}{\partial u_\pm}(\cdot)=\left(\frac{\partial\mathscr l_\pm}{\partial u_\pm}\otimes u_\pm-\lambda_\pm\delta\right)\cdot n_x^\pm+\rho_\pm\nabla w_\pm(\cdot) n_t^\pm+\rho_\pm\nabla w_\pm(\cdot) u_\pm\cdot n_x^\pm.
$$
Evaluating this identity on $u_\pm$ gives
$$
v_s\frac{\partial\mathscr l_\pm}{\partial u_\pm}\cdot u_\pm
=\left(\frac{\partial\mathscr l_\pm}{\partial u_\pm}\cdot u_\pm-\lambda_\pm\right)u_\pm\cdot n_x^\pm
+\rho_\pm\nabla w_\pm\cdot u_\pm\, n_t^\pm
+\rho_\pm\nabla w_\pm\cdot u_\pm\, u_\pm\cdot n_x^\pm,
$$
and using $n_t^+=-v_s$ gives
$$
v_s\left[\!\left[\frac{\partial \mathscr l}{\partial u}\cdot u\right]\!\right]
=\left[\!\left[\left(\frac{\partial \mathscr l}{\partial u}\cdot u-\lambda\right)u\right]\!\right]\cdot n
-\left[\!\left[\rho u\cdot\nabla w\right]\!\right]v_s
+\left[\!\left[\rho u\cdot\nabla w\, u\right]\!\right]\cdot n.
$$
Substituting into the expression for $\frac{d}{dt}\mathscr E$,
$$
\begin{aligned}
\frac{d}{dt}\mathscr E
&=\int_{\Gamma_t}
\left(v_s\left[\!\left[\frac{\partial \mathscr l}{\partial u}\cdot u-\mathscr l\right]\!\right]
-\left[\!\left[\left(\frac{\partial \mathscr l}{\partial u}\cdot u
-\rho\frac{\partial \mathscr l}{\partial \rho}\right)u\right]\!\right]\cdot n\right)\,{\rm d}a\\
&=\int_{\Gamma_t}
\left(v_s\left[\!\left[-\mathscr l\right]\!\right]
+\left[\!\left[\rho\frac{\partial \mathscr l}{\partial \rho}u\right]\!\right]\cdot n
-\left[\!\left[\lambda u\right]\!\right]\cdot n
-\left[\!\left[\rho u\cdot\nabla w\right]\!\right]v_s
+\left[\!\left[\rho u\cdot\nabla w\, u\right]\!\right]\cdot n\right)\,{\rm d}a.
\end{aligned}
$$
Then, using the identity
$\left[\!\left[-\mathscr l\right]\!\right]
=\left[\!\left[\lambda+\rho D_t w\right]\!\right]$
on $\Gamma$, see \eqref{ell_lambda}, together with continuity of $\partial_t w$ and $\nabla_x w$ across $\Gamma_t$, all terms involving $w$ cancel:
$$
\begin{aligned}
&v_s\left[\!\left[\rho D_t w\right]\!\right]
+\left[\!\left[\rho\frac{\partial \mathscr l}{\partial \rho}u\right]\!\right]\cdot n
-\left[\!\left[\rho u\cdot\nabla w\right]\!\right]v_s
+\left[\!\left[\rho u\cdot\nabla w\, u\right]\!\right]\cdot n\\
&=\,v_s\left[\!\left[\rho \partial_t w\right]\!\right]
+\left[\!\left[\rho\frac{\partial \mathscr l}{\partial \rho}u\right]\!\right]\cdot n
+\left[\!\left[\rho u\cdot\nabla w\, u\right]\!\right]\cdot n\\
&=\,\partial_t w\left[\!\left[\rho u\right]\!\right]\cdot n
+\left[\!\left[\rho\frac{\partial \mathscr l}{\partial \rho}u\right]\!\right]\cdot n
+\left[\!\left[\rho u\cdot\nabla w\, u\right]\!\right]\cdot n\\
&=\,\left[\!\left[\rho D_t w\, u\right]\!\right]\cdot n
+\left[\!\left[\rho\frac{\partial \mathscr l}{\partial \rho}u\right]\!\right]\cdot n
=0.
\end{aligned}
$$
The remaining contribution is precisely \eqref{time_change_E_lambda}.
\end{proof}

\paragraph{Reinterpretation of the construction.} 
The relation \eqref{time_change_E_lambda} shows that the variables $\lambda_\pm$ encode the irreversible energy dissipation occurring at the shock. In particular, $\lambda$ does \textit{not} satisfy a conservative Rankine–Hugoniot condition:
\[
v_s\big[\!\big[\lambda\big]\!\big]\neq\big[\!\big[\lambda u\big]\!\big]\cdot n .
\]
Furthermore, the contribution associated with $\lambda_\pm$ is localized entirely on the shock surface. Accordingly, in the previously introduced variational principle,
\[
\begin{aligned}
&\delta \int_{t_0}^{t_1} \left[\int_{\varphi_-(t)^{-1}(M_-(t))}\left(\mathcal{L}(\varphi_-,\dot{\varphi}_-,\nabla \varphi_-,\varrho_-) + \varrho_- \dot W \right){\rm d}X + \int_{\varphi_-(t)^{-1}(M_-(t))}\Lambda_-(X)\;{\rm d}X\right.\\
&\qquad\qquad +\left. \int_{\varphi_+(t)^{-1}(M_+(t))}\left(\mathcal{L}(\varphi_+,\dot{\varphi}_+,\nabla \varphi_+,\varrho_+) + \varrho_+ \dot W \right){\rm d}X + \int_{\varphi_+(t)^{-1}(M_+(t))}\Lambda_+(X)\;{\rm d}X\right]{\rm d}t=0,
\end{aligned},
\]
the terms involving $\Lambda_\pm$ can be naturally interpreted as defining a \textit{dissipation potential} added to the Lagrangian:
\begin{equation}\label{dissipation_potential}
-\mathcal{V}_{\rm shock}(\varphi_\pm, M_\pm) = \int_{\varphi_-(t)^{-1}(M_-(t))}\Lambda_-(X)\;{\rm d}X + \int_{\varphi_+(t)^{-1}(M_+(t))}\Lambda_+(X)\;{\rm d}X.
\end{equation}
The corresponding Eulerian version reads
\[
-\mathcal{V}_{\rm shock}(\varphi_\pm, M_\pm)= \int_{M_+} \Lambda_+\circ\varphi_+^{-1}J\varphi_+^{-1} \,{\rm d}x + \int_{M_-} \Lambda_-\circ\varphi_-^{-1}J\varphi_-^{-1} \,{\rm d}x.
\]
Thus, the functional $\mathcal V_{\rm shock}: Q_{\rm free}\text{(or $Q_{\rm fix}$)}\rightarrow\mathbb{R}$ represents the energy dissipated at the shock. This interpretation is further clarified by the following characterization of its gradient, which follows by direct computation:

\begin{lemma}\label{gradient_V} The functional $\mathcal{V}_{\rm shock}$ satisfies
\[
- \nabla_{\varphi_\pm, M_\pm}\mathcal{V}_{\rm shock} \cdot (\delta \varphi_\pm, \delta M_\pm)%= f_{\rm shock}\cdot (\xi_\pm, \xi_s)
= \int_{\Gamma _t} \left(\xi^s \left[\!\left[\lambda\right]\!\right]-\left[\!\left[\lambda\xi\right]\!\right]\cdot n\right)\,{\rm d}a,
\]
where $\xi_\pm = \delta\varphi_\pm\circ \varphi_\pm^{-1}$ and $\xi^s=\delta\Gamma= \xi_s\cdot n_+$.
\end{lemma}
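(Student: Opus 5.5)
We compute the first variation of each term of $-\mathcal V_{\rm shock}$ separately, working in Eulerian form:
\[
-\mathcal V_{\rm shock}=\int_{M_+}\lambda_+\,{\rm d}x+\int_{M_-}\lambda_-\,{\rm d}x,
\qquad \lambda_\pm=\Lambda_\pm\circ\varphi_\pm^{-1}J\varphi_\pm^{-1}.
\]
The variation $(\delta\varphi_\pm,\delta M_\pm)$ has two effects. It changes the integrand $\lambda_\pm$ at fixed domain, and it moves the domain $M_\pm$ through its boundary $\partial M_\pm=\Gamma\cup S_\pm$.

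For the first effect, $\lambda_\pm$ is the push-forward of the fixed reference density $\Lambda_\pm(X)\,{\rm d}X$ by $\varphi_\pm$. Hence, exactly as recorded in \eqref{delta_u_lambda},
\[
\delta\lambda_\pm=-\nabla\cdot(\lambda_\pm\xi_\pm),\qquad \xi_\pm=\delta\varphi_\pm\circ\varphi_\pm^{-1}.
\]
For the second effect, we use the Reynolds-type formula for moving domains. This gives boundary contributions $\int_{\Gamma}\lambda_\pm\,\xi_s\cdot n_\pm\,{\rm d}a$ on the shock and $\int_{S_\pm}\lambda_\pm\,\xi_\pm\cdot n_\pm\,{\rm d}a$ on the outer boundary. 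The latter holds because the outer boundary moves with the particles: the constraint $\varphi_\pm^{-1}(S_\pm)\subset\partial\mathcal B_\pm$ forces the normal displacement of $S_\pm$ to be $\xi_\pm\cdot n_\pm$. This is the same splitting already used in \eqref{stan_eq_modi}.

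Next we apply the divergence theorem to the bulk term:
\[
\int_{M_\pm}-\nabla\cdot(\lambda_\pm\xi_\pm)\,{\rm d}x=-\int_{\Gamma}\lambda_\pm\xi_\pm\cdot n_\pm\,{\rm d}a-\int_{S_\pm}\lambda_\pm\xi_\pm\cdot n_\pm\,{\rm d}a .
\]
The $S_\pm$ terms cancel exactly against the domain-motion terms on $S_\pm$. This is the observation noted in the proof of Theorem~\ref{RHvar_modi} that ``$\lambda_\pm$ are cancelled out on $S_\pm$''. Only contributions on $\Gamma_t$ survive:
\[
\sum_\pm\int_{\Gamma_t}\lambda_\pm\big(\xi_s-\xi_\pm\big)\cdot n_\pm\,{\rm d}a .
\]
We then use $n_+=n$ and $n_-=-n$, and write $\xi^s=\xi_s\cdot n_+$ for the normal displacement of $\Gamma$. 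Collecting the $+$ and $-$ terms gives
\[
\xi^s(\lambda_+-\lambda_-)-(\lambda_+\xi_+-\lambda_-\xi_-)\cdot n
=\xi^s[\![\lambda]\!]-[\![\lambda\xi]\!]\cdot n,
\]
which is the claimed formula.

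The main point requiring care is the domain-variation step: the sets $\varphi_\pm^{-1}(M_\pm)$ depend on both $\varphi_\pm$ and $\Gamma$. We should justify the formula either in Lagrangian coordinates, by varying the domain $\varphi_\pm^{-1}(M_\pm)\subset\mathcal B_\pm$ directly, or equivalently in Eulerian coordinates. The check is that the normal velocity of $\partial(\varphi_\pm^{-1}(M_\pm))$ in label space is induced by $(\xi_s-\xi_\pm)\cdot n_\pm$. Written this way, the combination $\lambda_\pm(\xi_s-\xi_\pm)\cdot n_\pm$ appears at once: only relative motion of the shock with respect to the particles changes the set of labels. On $S_\pm$ this relative motion vanishes, which explains the cancellation there, and the Eulerian computation above serves as a consistency check.
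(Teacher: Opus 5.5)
Your argument is correct and is essentially the ``direct computation'' the paper invokes without writing it out. It is the same bookkeeping of the $\lambda_\pm$-terms as in the proof of Theorem~\ref{RHvar_modi}: $\delta\lambda_\pm=-\nabla\cdot(\lambda_\pm\xi_\pm)$ plus the domain-variation terms, which cancel on $S_\pm$ and leave $\lambda_\pm(\xi_s-\xi_\pm)\cdot n_\pm$ on $\Gamma_t$, with the sign fixed by taking $n=n_+=-n_-$ (the outward normal of $M_+$), as you did and as the paper does in Proposition~\ref{prop:energy}.
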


\paragraph{Augmented energy.} Formally, the above construction fits into the abstract variational framework
\[
\delta \int_0^T\Big( L(q, \dot q) - \mathcal{V}(q) \Big) \,{\rm d}t=0
\]
for which the physical energy
\[
E(q,\dot q)=\left\langle\frac{\partial L}{\partial\dot q},\dot q\right\rangle-L(q,\dot q)
\]
satisfies the dissipation law
\begin{equation}\label{dissipation_E}
\frac{d}{dt}E =- \frac{d}{dt}\mathcal{V} = - \left\langle \frac{\partial\mathcal{V}}{\partial q}, \dot q \right\rangle=\left\langle F_{\rm shock} ,\dot q\right\rangle ,
\end{equation}
where $F_{\rm shock}=-\partial\mathcal V/\partial q$. On the other hand, the \textit{augmented} energy
\[
\widehat{E}(q, \dot q) =  \left\langle\frac{\partial L}{\partial\dot q},\dot q\right\rangle - L(q, \dot q) + \mathcal{V}(q)
\]
is formally conserved
\begin{equation}\label{E_aug}
\frac{d}{dt}\widehat{E}= 0 = \frac{d}{dt}E + \frac{d}{dt}\mathcal{V}.
\end{equation}
Equation \eqref{dissipation_E} is the abstract counterpart of \eqref{time_change_E_lambda}.
In particular, the right-hand side of \eqref{time_change_E_lambda} corresponds to the action of the gradient of $\mathcal{V}_{\rm shock}$ (modulo the use of its Eulerian representation $f_{\rm shock}$):
\[
- \nabla_{\varphi_\pm, M_\pm}\mathcal{V}_{\rm shock} \cdot (\delta \varphi_\pm, \delta M_\pm)= f_{\rm shock}\cdot (\xi_\pm, \xi^s) = \int_{\Gamma _t} \left(\xi^s \left[\!\left[\lambda\right]\!\right]-\left[\!\left[\lambda\xi\right]\!\right]\cdot n\right)\,{\rm d}a,
\]
consistently with Lemma \ref{gradient_V}. On the other hand, the conservation of the augmented energy in \eqref{E_aug} corresponds to the fact that the quantity $(E-\lambda)_\pm$ is conserved in the smooth regions (since $E_\pm$ and $\lambda_\pm$ are), while they satisfy a conservative Rankine--Hugoniot condition
\[
v_s\big[\!\big[E-\lambda\big]\!\big]=\big[\!\big[(E-\lambda) u\big]\!\big]\cdot n .
\]
across the shock surface.

\begin{remark} 
Choosing $\Lambda(X)=1$ yields $\lambda_\pm = J\varphi_\pm^{-1}$. It is natural to ask whether such a trivial constant choice for $\Lambda$ is admissible. While this choice is not essential—since the resulting Euler–Lagrange equations are independent of $\Lambda$—it does affect the specific values of $\lambda_\pm$.
To clarify this point, we examine explicit shock solutions of the one-dimensional barotropic Euler equations, for which the energy dissipation can be computed directly.
\end{remark}

\subsection{One-dimensional compressible fluids}\label{1D_example}

In conservative form, the one-dimensional barotropic Euler equations are
\begin{equation}\label{eq:conservative}
\begin{cases}
\partial_t \rho + \partial_x(\rho u) = 0, \\[4pt]
\partial_t(\rho u) + \partial_x\bigl(\rho u^2 + p(\rho)\bigr) = 0.
\end{cases}
\end{equation}
We consider the polytropic pressure law $p(\rho)=K\rho^\gamma$ with $\gamma>1$.

\medskip

Let a discontinuity propagate with speed $v_s$. The Rankine--Hugoniot conditions
are
\[
\begin{aligned}
v_s\left[\!\left[\rho\right]\!\right] &= \left[\!\left[\rho u\right]\!\right], \\
v_s\left[\!\left[\rho u\right]\!\right] &= \left[\!\left[\rho u^2 + p(\rho)\right]\!\right].
\end{aligned}
\]
%where $\left[\!\left[f\right]\!\right] = f_R - f_L$.
In particular,
\[
v_s = \frac{\rho_R u_R - \rho_L u_L}{\rho_R - \rho_L},
\]
and the Hugoniot relation can be written as
\[
(\rho_R u_R^2 + p(\rho_R)) - (\rho_L u_L^2 + p(\rho_L))
= v_s(\rho_R u_R - \rho_L u_L).
\]

\begin{remark}(Entropy pair)
A convex entropy--entropy flux pair $(E,q)$ is given by
\[
E(\rho,u)=\frac12\rho u^2+\rho e(\rho),
\qquad
e'(\rho)=\frac{p(\rho)}{\rho^2},
\]
with associated flux
\[
q
=u\left(E+p\right)=
u\bigl(\tfrac12\rho u^2+\rho e(\rho)+p(\rho)\bigr).
\]
For weak solutions one has the entropy inequality
\[
\partial_t E+\partial_x q\le0,
\]
which implies $v_s\left[\!\left[E\right]\!\right]-\left[\!\left[(E+p)u\right]\!\right]\cdot n\leq 0$ and therefore $\frac{d}{dt}\mathscr E\le 0$. This condition is imposed to select the physically admissible solution among all weak solutions that satisfy the Rankine--Hugoniot jump conditions.
\end{remark}

\begin{example}
(A special shock solution)
From now on, we consider the barotropic Euler system with polytropic pressure
$p(\rho)=K\rho^\gamma$, $K>0$, $\gamma>1$, and it is easy to check that the piecewise constant states
\begin{equation}\label{choice}
(\rho_L,u_L)=(1,2),\qquad(\rho_R,u_R)=(2,1),
\end{equation}
satisfies the one-dimensional barotropic Euler equations \eqref{eq:conservative}.
These satisfy $\rho_R>\rho_L$ and $u_L>u_R$, corresponding to a compressive
shock.

\paragraph{Shock speed.}
From the mass jump condition, \eqref{choice} yields
\[
v_s=\frac{\rho_Ru_R-\rho_Lu_L}{\rho_R-\rho_L}
%=\frac{2\cdot1-1\cdot2}{2-1}
=0,
\]
so the shock is stationary.

\paragraph{Momentum jump condition.}
The second Rankine--Hugoniot condition becomes
\[
0=(2\cdot1^2+K2^\gamma)-(1\cdot2^2+K),
\]
that is,
$2+K2^\gamma=4+K$.
Solving for $K$ gives
\[
K=\frac{2}{2^\gamma-1}.
\]
With this choice, the exact weak solution is
\[
(\rho,u)(t,x)=
\begin{cases}
(1,2),&x<0,\\[4pt]
(2,1),&x>0,
\end{cases}
\]
representing a stationary shock at $x=0$.
\end{example}

\paragraph{Energy balance for the special solution.}
For barotropic Euler with $p(\rho)=K\rho^\gamma$,
\[
e(\rho)=\frac{K}{\gamma-1}\rho^{\gamma-1},
\qquad
E(\rho,u)=\frac12\rho u^2+\frac{K}{\gamma-1}\rho^\gamma.
\]
For the left and right states \((\rho_L,u_L)=(1,2)\), \((\rho_R,u_R)=(2,1)\), the left and right energy densities are
\[
E_L
= \frac12\cdot 1\cdot 2^2+\frac{K}{\gamma-1}
= 2+\frac{K}{\gamma-1},
\]

\[
E_R
= \frac12\cdot 2\cdot 1^2+\frac{K}{\gamma-1}2^\gamma
= 1+\frac{K}{\gamma-1}2^\gamma.
\]
Since $v_s=0$, using \eqref{time_change_E}, the time derivative of the total energy
$\mathscr E=\int_M E\,{\rm d}x$ is
\[
\frac{d}{dt}\mathscr{E}= (E_R+p_R)  u_R - (E_L+p_L)  u_L
\]

A direct computation gives
\[
(E_R+p_R)- 2(E_L+p_L)
=
-3+\frac{K\gamma}{\gamma-1}(2^\gamma-2).
\]
Substituting $K=\dfrac{2}{2^\gamma-1}$ yields
\[
(E_R+p_R)- 2(E_L+p_L)
=
-3+\frac{2\gamma}{\gamma-1}
-\frac{2\gamma}{(\gamma-1)(2^\gamma-1)}.
\]

Meanwhile, the time derivative of volume (in one dimension, the change of length) is $u_R-u_L$.
%\[
%(u_L - u_R)\,\Delta t = \Delta t .
%\]
This can be justified as follows. Consider
\[
\frac{d}{dt} \left(
\int_{\varphi_-(t)^{-1}(M_-(t))} \mathrm{d}X
+
\int_{\varphi_+(t)^{-1}(M_+(t))} \mathrm{d}X
\right).
\]
Passing to Eulerian variables, this becomes
\[
\frac{d}{dt}\left(
\int_{M_-(t)} \lambda_-\, \mathrm{d}x
+
\int_{M_+(t)} \lambda_+\, \mathrm{d}x
\right),
\]
where $\lambda_\pm = J\varphi_\pm$ denotes the Jacobian determinant of the flow map on each side.

Using the transport theorem for a moving interface $\Gamma$, we obtain
\[
\frac{d}{dt}
\left(
\int_{M_-(t)} \lambda_-\, \mathrm{d}x
+
\int_{M_+(t)} \lambda_+\, \mathrm{d}x
\right)
=
\int_{\Gamma}
\Big(
- v_s \llbracket \lambda \rrbracket
+
\llbracket \lambda u \rrbracket \cdot n
\Big)\,{\rm d}a .
\]
In the present situation, the motion of the specific shock solution \eqref{choice} consists simply of uniform translation with constant velocity on each side.

Therefore,
\[
\lambda_\pm = J\varphi_\pm = 1,
\]
so there is no local volumetric deformation. The only contribution to the change of length comes from the jump in velocity across the interface.

Consequently, plugging $v_s=0$ and $\lambda_\pm = 1$ into the above equation, we get the time derivative of length is $u_R-u_L=-1$.

\paragraph{Conclusion.} This example shows that, for general $\gamma$,
\[
\frac{d}{dt}\mathscr E
\neq
-\frac{d}{dt}\mathcal V_{\rm shock},
\]
when the dissipation potential is chosen to be the volume functional (corresponding to $\Lambda_\pm(X)=1$)
$$
-\mathcal{V}_{\rm shock}(\varphi_\pm, M_\pm) = \int_{\varphi_-(t)^{-1}(M_-(t))}{\rm d}X + \int_{\varphi_+(t)^{-1}(M_+(t))}{\rm d}X.
$$
Consequently, the term $\lambda$ appearing in the modified Lagrangian
$\mathscr l +\lambda$ cannot, in general, be interpreted as arising from the volume
functional alone.

\section{Variational formulation for the compressible Euler system with entropy and shocks}\label{sec_5}

In the variational framework developed above, the presence of shocks leads to a loss of total energy. While this is appropriate for the barotropic Euler equations, where no entropy variable is present, it is no longer adequate for the full compressible Euler system. In that case, total energy is conserved, and the effects of shocks are instead reflected in entropy production.

This shift places the problem within the framework of nonequilibrium thermodynamics and calls for an extension of the variational formulation to incorporate irreversible processes in a manner consistent with energy conservation.

\subsection{A variational strategy for entropy production with energy conservation}

Consistently with the above considerations, we shall use a variational formulation for nonequilibrium thermodynamics (see \cite{GBYo2019}), which is able to capture the mechanism by which energy dissipated at shocks is not lost but converted into internal energy, resulting in an increase of entropy.

\paragraph{Variational formulation with irreversible processes.} 

Abstractly, such a variational principle can be seen as an extension of Hamilton’s principle and takes the form of a constrained variational principle
\[
\delta \int_{t_0}^{t_1} L(q, \dot q, S) + (S - \Sigma)\,\dot{\Gamma}\, {\rm d}t = 0,
\]
subject to the phenomenological and variational constraints
\begin{equation}\label{constraints}
\frac{\partial L}{\partial S}\,\dot \Sigma
=
\langle F, \dot q\rangle,\qquad \frac{\partial L}{\partial S}\,\delta \Sigma
=
\langle F, \delta q\rangle,
\end{equation}
for some dissipative force $F$.

This principle yields the following conditions:
\[
\frac{d}{dt}\frac{\partial L}{\partial \dot q}
- \frac{\partial L}{\partial q}
= F, 
\qquad
\frac{\partial L}{\partial S}\,\dot S
= \langle F, \dot q\rangle,
\qquad
\dot S = \dot \Sigma,
\qquad
\dot \Gamma = - \frac{\partial L}{\partial S}.
\]

The first two equations correspond to the momentum and entropy balances, while the last two determine the auxiliary variables $\dot \Gamma$ and $\dot \Sigma$. This type of principle will be implemented for the compressible fluid with shocks below.

\begin{remark}\label{remark_dAlembert}[D'Alembert-type principle]
It is important to emphasize that the above principle involves two distinct types of constraints. The first restricts the critical curves of the action functional, while the second restricts the admissible class of variations in the action principle. These two constraints are systematically related, in complete analogy with d'Alembert's principle, through the fact that the variations (generalized virtual displacements, here $\delta \Sigma$ and $\delta q$) and the time derivatives (here $\dot{\Sigma}$ and $\dot{q}$) satisfy the same constraint.
\end{remark}

\begin{remark}[Role of $\Sigma$] In the simple setting considered above, the variable $\Sigma$ may appear superfluous. Indeed, replacing $\Sigma$ by $S$ in the constraints and omitting the term $(S-\Sigma)\dot{\Gamma}$ in the action functional leads to the same final dynamical equations. This simplification, however, does not hold in general (see, for instance, \S\ref{sec:entropy}), and $\Sigma$ plays a crucial role in nonequilibrium thermodynamics. The key point is that $\dot{S}$ and $\dot{\Sigma}$ have distinct physical meanings: $\dot{S}$ is the rate of change of the system entropy, whereas $\dot{\Sigma}$ represents the entropy production rate, which must be nonnegative by the second law of thermodynamics.
Although these quantities coincide in the present example, their distinction is essential in general (see, e.g., \cite{GBYo2019,VDPGBYoCh2026}).
\end{remark}

\begin{remark}[Case of fluids]
To gain intuition on how the principle applies to fluids, observe that the additional term $(S - \Sigma)\dot{\Gamma}$ in the action functional retains the same form in the material description, while in the Eulerian description it becomes
\[
(s - \varsigma) D_t \gamma,
\]
where $s$, $\varsigma$, and $\gamma$ denote the Eulerian fields corresponding to $S$, $\Sigma$, and $\Gamma$.
When viscous processes are considered in a nonequilibrium thermodynamic setting, the constraints in \eqref{constraints} take the form
\[
\frac{\partial \mathscr l }{\partial s} \, D_t \varsigma
=
- \sigma : \nabla u,
\qquad
\frac{\partial \mathscr l }{\partial s} \, D_\delta \varsigma
=
- \sigma : \nabla \xi,
\]
where $\sigma$ is the viscous stress tensor and
\begin{equation}\label{mat_var}
D_\delta = \delta  + \xi \cdot \nabla
\end{equation}
denotes the Eulerian variation.
These relations express that mechanical dissipation, given by $\sigma : \nabla u$, is converted into entropy production, thereby ensuring conservation of total energy.

In the present shock-driven setting, however, dissipation does not arise from a bulk contribution such as $\sigma : \nabla u$, but instead from singular contributions supported on the shock surface. Moreover, the entropy production at the shock is not determined by phenomenological laws and therefore cannot be prescribed a priori. Accordingly, the constraints reduce to the simpler form
\[
\bar D_t \varsigma = 0,
\qquad
\bar D_\delta \varsigma = 0,
\]
which serve as bookkeeping conditions ensuring that the energy dissipated at the shock is exactly re-injected into the system through entropy production.
\end{remark}

\paragraph{Action functional.} Following the above considerations and those in \S\ref{action_functional}, the action principle reads as follows:
\begin{equation}\label{Lagrangian_thermo}
\begin{aligned}
&\delta\left[\int_{t_0}^{t_1}
\int_{\varphi_-^{-1}(M_-(t))}
\Big( \mathcal{L}(\varphi_-, \dot{\varphi}_-, \nabla \varphi_-, \varrho_-) + \varrho_- \dot{W}  + (S_--\Sigma_-)\dot\Gamma_-\Big)\, \mathrm{d}X\right. \\
&\qquad\qquad  \left.+ \int_{\varphi_+^{-1}(M_+(t))}
\Big( \mathcal{L}(\varphi_+, \dot{\varphi}_+, \nabla \varphi_+, \varrho_+) + \varrho_+ \dot{W} + (S_+-\Sigma_+)\dot\Gamma_+\Big)\, \mathrm{d}X\right]{\rm d}t=0,
\end{aligned}
\end{equation}
subject to the constraints
\[
\frac{d}{dt}\Sigma=0,\qquad \delta \Sigma=0.
\]

In the Eulerian description, the resulting critical action principle is
\begin{equation}\label{VP_thermo1}
\begin{aligned}
\delta \int_{t_0}^{t_1}
\Bigg[
&\int_{M_-(t)}
\left(
\mathscr l (u_-,\rho_-,s_-)
+ \rho_- D_t w_-
+ (s_- - \varsigma_-) D_t \gamma_-
\right)\,{\rm d}x
\\
&+
\int_{M_+(t)}
\left(
\mathscr l (u_+,\rho_+,s_+)
+ \rho_+ D_t w_+
+ (s_+ - \varsigma_+) D_t \gamma_+
\right)\,{\rm d}x
\Bigg] dt = 0,
\end{aligned}
\end{equation}
subject to kinematic and variational constraints
\begin{equation}\label{constraint_thermo1}
\bar D_t \varsigma = 0,
\qquad
\bar D_\delta \varsigma = 0,
\end{equation}
for arbitrary variations $\delta \Gamma$, $\delta \rho_\pm$, and $\delta w_\pm$,
together with velocity variations of the form
\[
\delta u_\pm = \partial_t \xi_\pm + \mathscr{L}_{u_\pm}\xi_\pm.
\]
As earlier, $\xi_\pm = \delta\varphi_\pm \circ \varphi_\pm^{-1}$ are arbitrary vector
fields, and the variations of the shock hypersurface $\Gamma$ are of the form
$\delta\Gamma = \xi_s \cdot n$, where $\xi_s$ is an arbitrary vector field defined on $\Gamma$. We recall that these variational principles are of d’Alembert type (see Remark \ref{remark_dAlembert}).

\subsection{The compressible Euler equations and Rankine--Hugoniot conditions}\label{sec:compr_Euler_RH}

In the following theorem, we show how the bulk equations, boundary conditions, and Rankine--Hugoniot jump conditions arise from the variational formulation. The derivation of the Rankine--Hugoniot condition for energy is deferred to the next section.

\begin{theorem}\label{RHvar_rho_s}
The critical action principle \eqref{VP_thermo1}-\eqref{constraint_thermo1} yields the following equations:
\begin{itemize}
\item[(1)] \textbf{Bulk equations.} The fluid equations in the bulk are
\begin{equation}\label{genEulerLag_rho_s}
\left\{
\begin{aligned}
&\partial_t \rho_\pm+\mathscr L_{u_\pm}\rho_\pm=0,\\
&\partial_t s_\pm+\mathscr L_{u_\pm}s_\pm=0,\\
&\partial_t \frac{\partial\mathscr l}{\partial u_\pm}
+\mathscr L_{u_\pm}\frac{\partial\mathscr l}{\partial u_\pm}
=\rho_\pm d\frac{\partial\mathscr l}{\partial \rho_\pm}
+s_\pm d\frac{\partial\mathscr l}{\partial s_\pm}.
\end{aligned}
\right.
\end{equation}

\item[(2)] \textbf{Boundary conditions.} 
For free-boundary fluids, the conditions on the boundaries $S_\pm$ are
\begin{equation}\label{genBounCon_rho_s}
\mathscr l(u_\pm,\rho_\pm,s_\pm)
-\rho_\pm\frac{\partial\mathscr l}{\partial \rho_\pm}
-s_\pm\frac{\partial\mathscr l}{\partial s_\pm}=0.
\end{equation}
For fluids in a fixed domain, no additional boundary conditions arise. In particular, the condition $u_\pm \cdot n_\pm = 0$ on $S_\pm$, where $n_\pm$ denotes the outward unit normal vector to $S_\pm$, follows directly from the choice of configuration manifold.
\item[(3)] \textbf{Rankine--Hugoniot condition for momentum.}
On the singular surface $\Gamma$, one has
\begin{equation}\label{genLagRH_rho_s}
v_s\left[\!\left[\frac{\partial\mathscr l}{\partial u}\right]\!\right](\cdot)
=
\left(
\left[\!\left[\frac{\partial\mathscr l}{\partial u}\otimes u\right]\!\right]
+\left[\!\left[\mathscr l
-\rho\frac{\partial\mathscr l}{\partial \rho}
-s\frac{\partial\mathscr l}{\partial s}\right]\!\right]\delta
\right)\cdot n,
\end{equation}
where $v_s$ denotes the normal speed of $\Gamma$, $\delta$ is the $(1,1)$ identity tensor characterized by $\delta^i_j = 1$ if $i=j$ and $\delta^i_j=0$ otherwise, and $n$ is the unit normal vector field along $\Gamma$.

\item[(4)] \textbf{Rankine--Hugoniot condition for mass.}
On the singular surface $\Gamma$, one has
\begin{equation}\label{eq:RH_mass_rho_s}
v_s\left[\!\left[\rho\right]\!\right]=\left[\!\left[\rho u\right]\!\right]\cdot n.
\end{equation}
\item[(5)] \textbf{Rankine--Hugoniot condition for $(s-\varsigma)$.}
On the singular surface $\Gamma$, one has
\begin{equation}\label{RH_entropy_rho_s}
v_s\left[\!\left[s-\varsigma\right]\!\right]
=\left[\!\left[(s-\varsigma)u\right]\!\right]\cdot n.
\end{equation}
\end{itemize}
\end{theorem}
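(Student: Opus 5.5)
The plan is to redo the computation in the proof of Theorem~\ref{RHvar_modi}, now with the additional entropy term $(s_\pm-\varsigma_\pm)D_t\gamma_\pm$ and without $\lambda_\pm$. Take the first variation of \eqref{VP_thermo1} over $\Omega_\pm$, including the moving-domain contributions on $S_\pm$ and on $\Gamma$ produced by $\xi_\pm\cdot n_\pm$ and $\xi_s\cdot n_+$. The new ingredients are the variations $\delta s_\pm$, $\delta\gamma_\pm$ and $\delta\varsigma_\pm$. The last one is not free: the variational constraint $\bar D_\delta\varsigma_\pm=0$, read as $\delta\varsigma=-\mathscr L_\xi\varsigma$ for a density, fixes it in terms of $\xi_\pm$. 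Varying the term $(s-\varsigma)D_t\gamma$ gives three pieces:
\[
\delta s\,D_t\gamma \;-\;\delta\varsigma\,D_t\gamma\;+\;(s-\varsigma)\big(D_t\delta\gamma+(\partial_t\xi+\mathscr L_u\xi)\cdot\nabla\gamma\big).
\]
These should be integrated by parts exactly as the $\rho\,D_tw$ terms were, using the analogues of \eqref{useful_formulas} with $\rho\to s-\varsigma$ and $w\to\gamma$. The term $-\delta\varsigma\,D_t\gamma=\mathscr L_\xi\varsigma\,D_t\gamma$ is integrated by parts in space and produces $\varsigma\,dD_t\gamma\cdot\xi$ together with a boundary flux $\varsigma D_t\gamma\,\xi\cdot n_x$.

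The bulk equations come from collecting the interior terms. Variations in $\delta\rho_\pm$ and $\delta w_\pm$ give $D_tw=-\partial\mathscr l/\partial\rho$ and $\bar D_t\rho=0$, as before. The variation $\delta s_\pm$ gives $D_t\gamma=-\partial\mathscr l/\partial s$. The variation $\delta\gamma_\pm$ gives $\bar D_t(s-\varsigma)=0$, and combined with the constraint $\bar D_t\varsigma=0$ this yields $\bar D_ts=0$. For the $\xi_\pm$ equation, the $\gamma$-terms combine into
\[
-\xi\cdot\big(\bar D_t((s-\varsigma)\nabla\gamma)\big)+\varsigma\, dD_t\gamma\cdot\xi .
\]
This reduces to $-\xi\cdot\big((s-\varsigma)dD_t\gamma\big)+\varsigma\, dD_t\gamma\cdot\xi=-s\,dD_t\gamma\cdot\xi=s\,d\frac{\partial\mathscr l}{\partial s}\cdot\xi$, which gives the extra term in the momentum equation of \eqref{genEulerLag_rho_s}. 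The boundary conditions on $S_\pm$ follow exactly as in Theorem~\ref{RHvar_modi}. The kinematic relation $n_t=-u\cdot n_x$ on the free boundary kills the $n_t+u\cdot n_x$ combinations. What remains is $\mathscr l+\rho D_tw+(s-\varsigma)D_t\gamma+\varsigma D_t\gamma=\mathscr l-\rho\partial_\rho\mathscr l-s\partial_s\mathscr l$, which is \eqref{genBounCon_rho_s}.

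On $\Gamma$, assume as before that $\delta w$ and $\delta\gamma$ are continuous, and that $\partial_t w,\nabla w,\partial_t\gamma,\nabla\gamma$ are continuous across $\Gamma$. The $\delta w$ coefficient gives \eqref{eq:RH_mass_rho_s}. The $\delta\gamma$ coefficient gives $(s-\varsigma)_+(n_t^++u_+\cdot n_x^+)=(s-\varsigma)_-(\ldots)$, and with $n_t^\pm=\mp v_s$, $n_x^\pm=\pm n$ this becomes \eqref{RH_entropy_rho_s}. Next take the $\xi_\pm$ coefficients on $\Gamma$ and subtract them. The $\nabla w$ and $\nabla\gamma$ pieces then appear multiplied by the mass jump and the $(s-\varsigma)$ jump, respectively, so they vanish. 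In place of $-\lambda$ the $\varsigma D_t\gamma$ flux now enters. Finally, the $\xi_s$ coefficient gives $[\![\mathscr l+\rho D_tw+(s-\varsigma)D_t\gamma]\!]=0$, the analogue of \eqref{ell_lambda}. Combining these, and using the bulk identifications of $D_tw$ and $D_t\gamma$, reproduces the pressure $\mathscr l-\rho\partial_\rho\mathscr l-s\partial_s\mathscr l$ in \eqref{genLagRH_rho_s}.

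The main obstacle is this last combination on $\Gamma$. The $\varsigma$-contribution from the constrained variation $\delta\varsigma$ must exactly supply the $-s\partial_s\mathscr l$ part of the pressure jump. The cross terms $[\![\rho\,\partial_t w]\!]$ and $[\![(s-\varsigma)\partial_t\gamma]\!]$ must cancel through the two conservative jump conditions. I would check both of these carefully, tracking the signs of $n_t^\pm$ and the density interpretation of $\varsigma$.
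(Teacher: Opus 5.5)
Your proposal is correct and follows the paper's proof essentially step for step: the constrained variation $\delta\varsigma=-\operatorname{div}(\varsigma\xi)$ integrated by parts; the bulk conditions from $\xi_\pm,\delta\rho_\pm,\delta w_\pm,\delta s_\pm,\delta\gamma_\pm$; the kinematic cancellation on $S_\pm$; continuity of $\delta w,\delta\gamma$ for the mass and $(s-\varsigma)$ jumps; and the $\xi_\pm$ and $\xi_s$ coefficients on $\Gamma$ for momentum. Two small corrections: since $D_t\gamma\operatorname{div}(\varsigma\xi)=\operatorname{div}(\varsigma D_t\gamma\,\xi)-\varsigma\nabla D_t\gamma\cdot\xi$, the bulk term is $-\varsigma\,dD_t\gamma\cdot\xi$ (your boundary flux $+\varsigma D_t\gamma\,\xi\cdot n_x$ is right), and this sign is what makes $-(s-\varsigma)dD_t\gamma-\varsigma\,dD_t\gamma=-s\,dD_t\gamma$, whereas your written sign leaves an extra $2\varsigma\,dD_t\gamma$; also, the momentum jump needs no cancellation of $\partial_t w,\partial_t\gamma$ cross terms, because the $\xi_s$ condition together with $D_tw=-\partial\mathscr l/\partial\rho$ and $D_t\gamma=-\partial\mathscr l/\partial s$ directly gives $[\![\varsigma D_t\gamma]\!]=[\![\mathscr l-\rho\,\partial\mathscr l/\partial\rho-s\,\partial\mathscr l/\partial s]\!]$ (that cancellation is only needed later for the energy jump).
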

\begin{proof}
For arbitrary variations $q_\epsilon$ of the curve $q(t) \in \mathcal{Q}_{\rm free}$ (or $\mathcal{Q}_{\rm fix}$), see \eqref{q_variables}, with fixed endpoints, the critical action principle is given by
\begin{equation}\label{stan_eq_rho_s}
\begin{aligned}
0=&\int_{t_0}^{t_1}
\left[
    \int_{M_-(t)}
        \left( \mathscr l (u_-,\rho_-,s_-)
        + \rho_- D_t w_-
        + (s_- - \varsigma_-) D_t \gamma_- \right)\,{\rm d}x
\right. \\
&\left.
   + \int_{M_+(t)}
        \left( \mathscr l (u_+,\rho_+,s_+)
        + \rho_+ D_t w_+
        + (s_+ - \varsigma_+) D_t \gamma_+ \right)\, {\rm d}x
\right]\,{\rm d}t\\
=&\int_{t_0}^{t_1}{\rm d}t\int_{M_+}\left(\frac{\partial\mathscr l}{\partial u_+}\cdot\delta u_+ +\frac{\partial\mathscr l}{\partial \rho_+}\cdot\delta \rho_+ +\frac{\partial\mathscr l}{\partial s_+}\cdot\delta s_++\delta\rho_+D_t w_++\rho_+\delta D_tw_++\delta(s_+ - \varsigma_+)D_t\gamma_++(s_+ - \varsigma_+)\delta D_t\gamma_+\right){\rm d}x\\
+&\int_{t_0}^{t_1}{\rm d}t\left(\int_{S_+}(\mathscr l (u_+,\rho_+,s_+)+ \rho_+ D_t w_++ (s_+ - \varsigma_+) D_t \gamma_+)\xi_+\cdot n_+{\rm d}a+\int_{\Gamma}(\mathscr l (u_+,\rho_+,s_+)+ \rho_+ D_t w_++ (s_+ - \varsigma_+) D_t \gamma_+)\xi_s\cdot n_+{\rm d}a\right)\\
+&\int_{t_0}^{t_1}{\rm d}t\int_{M_-}\left(\frac{\partial\mathscr l}{\partial u_-}\cdot\delta u_- +\frac{\partial\mathscr l}{\partial \rho_-}\cdot\delta \rho_- +\frac{\partial\mathscr l}{\partial s_-}\cdot\delta s_-+\delta\rho_-D_t w_-+\rho_-\delta D_tw_-+\delta(s_- - \varsigma_-)D_t\gamma_-+(s_- - \varsigma_-)\delta D_t\gamma_-\right){\rm d} x\\
+&\int_{t_0}^{t_1}{\rm d}t\left(\int_{S_-}(\mathscr l (u_-,\rho_-,s_-)+ \rho_- D_t w_-+ (s_- - \varsigma_-) D_t \gamma_-)\xi_-\cdot n_-{\rm d}a+\int_{\Gamma}(\mathscr l (u_-,\rho_-,s_-)+ \rho_- D_t w_-+ (s_- - \varsigma_-) D_t \gamma_-)\xi_s\cdot n_-{\rm d}a\right).\\
\end{aligned}
\end{equation}
Note that if the domain is fixed, then on $S_\pm$, we have $\xi_\pm\cdot n_\pm=0$.

We consider a spacetime domain $\{M(t):t\in[t_0,t_1]\}=\Omega_+\cup\Omega_-$, and $\Sigma$ a spacetime singular submanifold as shown in Figure \ref{fig:shock}. We recall that $\delta u_\pm=\partial_t\xi_\pm+[u_\pm,\xi_\pm],$
where $\xi_\pm = \delta\varphi_\pm \circ \varphi_\pm^{-1}$
is an arbitrary time dependent vector field vanishing
at $t = t_0,\, t_1$. Hence, we have
\begin{equation}\label{first_eq_rho_s}
\begin{aligned}
&\iint_{\Omega_\pm}\left(\frac{\partial\mathscr l}{\partial u_\pm}\cdot\delta u_\pm +\frac{\partial\mathscr l}{\partial \rho_\pm}\cdot\delta \rho_\pm +\frac{\partial\mathscr l}{\partial s_\pm}\cdot\delta s_\pm+\delta\rho_\pm D_t w_\pm +\rho_\pm\delta D_tw_\pm+\delta(s_\pm - \varsigma_\pm)D_t\gamma_\pm +(s_\pm - \varsigma_\pm)\delta D_t\gamma_\pm\right)\;{\rm d}x{\rm d}t\\
=&\iint_{\Omega_\pm}\left(\frac{\partial\mathscr l}{\partial u_\pm}\cdot (\partial_t\xi_\pm+[u_\pm,\xi_\pm]) +\left(\frac{\partial\mathscr l}{\partial \rho_\pm}+D_t w_\pm\right)\delta\rho_\pm +\left(\frac{\partial\mathscr l}{\partial s_\pm}+D_t \gamma_\pm\right)\delta s_\pm - D_t\gamma_\pm \delta\varsigma_\pm\right)\;{\rm d}x{\rm d}t\\
&+\iint_{\Omega_\pm}\left(\rho_\pm\left(D_t\delta w_\pm+(\partial_t \xi_\pm+\mathscr L_{u_\pm}\xi_\pm,\nabla) w_\pm\right)+(s_\pm - \varsigma_\pm)\left(D_t\delta \gamma_\pm+(\partial_t \xi_\pm+\mathscr L_{u_\pm}\xi_\pm,\nabla) \gamma_\pm\right)\right)\;{\rm d}x{\rm d}t\\
=&\iint_{\Omega_\pm}\left(\left(-\partial_t \frac{\partial\mathscr l}{\partial u_\pm}-\mathscr L_{u_\pm}\frac{\partial\mathscr l}{\partial u_\pm}\right)\cdot\xi_\pm+\left(\frac{\partial\mathscr l}{\partial \rho_\pm}+D_t w_\pm\right)\delta\rho_\pm+\left(\frac{\partial\mathscr l}{\partial s_\pm}+D_t \gamma_\pm\right)\delta s_\pm
-\varsigma_\pm \nabla D_t\gamma_\pm\cdot\xi_\pm\right)\;{\rm d}x{\rm d}t\\
&+\iint_{\Omega_\pm}\left(\bar{D}_t(\rho_\pm\delta w_\pm)-\bar{D}_t\rho_\pm\delta w_\pm+\bar{D}_t(\rho_\pm\xi_\pm\cdot\nabla w_\pm)-\xi_\pm\cdot(\partial_t(\rho_\pm\nabla w_\pm)+\mathscr L_{u_\pm} (\rho_\pm\nabla w_\pm))\right)\;{\rm d}x{\rm d}t\\
&+\iint_{\Omega_\pm}\left(\bar{D}_t((s_\pm-\varsigma_\pm)\delta \gamma_\pm)-\bar{D}_t(s_\pm-\varsigma_\pm)\delta \gamma_\pm+\bar{D}_t((s_\pm-\varsigma_\pm)\xi_\pm\cdot\nabla \gamma_\pm)-\xi_\pm\cdot(\partial_t((s_\pm-\varsigma_\pm)\nabla \gamma_\pm)+\mathscr L_{u_\pm} ((s_\pm-\varsigma_\pm)\nabla \gamma_\pm))\right)\;{\rm d}x{\rm d}t\\
+&\int_{\partial\Omega_\pm}\left(\frac{\partial\mathscr l}{\partial u_\pm}\cdot \xi_\pm n_t^\pm + \frac{\partial\mathscr l}{\partial u_\pm}(\xi_\pm) u_\pm \cdot n_x^{\pm}+\varsigma_\pm D_t\gamma_\pm \xi_\pm\cdot n_x^\pm\right)\;{\rm d}a(x,t)\\
=&\iint_{\Omega_\pm}\left(-\partial_t \frac{\partial\mathscr l}{\partial u_\pm}-\mathscr L_{u_\pm}\frac{\partial\mathscr l}{\partial u_\pm}-\varsigma_\pm \nabla D_t\gamma_\pm-=(\partial_t(\rho_\pm\nabla w_\pm)+\mathscr L_{u_\pm} (\rho_\pm\nabla w_\pm))-(\partial_t((s_\pm-\varsigma_\pm)\nabla \gamma_\pm)+\mathscr L_{u_\pm} ((s_\pm-\varsigma_\pm)\nabla \gamma_\pm))\right)\cdot\xi_\pm\;{\rm d}x{\rm d}t\\
&+\iint_{\Omega_\pm}\left(\left(\frac{\partial\mathscr l}{\partial \rho_\pm}+D_t w_\pm\right)\delta\rho_\pm-\bar{D}_t\rho_\pm\delta w_\pm+\left(\frac{\partial\mathscr l}{\partial s_\pm}+D_t \gamma_\pm\right)\delta s_\pm-\left(\bar{D}_t(s_\pm-\varsigma_\pm)\right)\delta \gamma_\pm\right)\;{\rm d}x{\rm d}t\\
+&\int_{\partial\Omega_\pm}\left(\frac{\partial\mathscr l}{\partial u_\pm}\cdot \xi_\pm n_t^\pm + \frac{\partial\mathscr l}{\partial u_\pm}(\xi_\pm) u_\pm \cdot n_x^{\pm}+\varsigma_\pm D_t\gamma_\pm \xi_\pm\cdot n_x^\pm\right)\;{\rm d}a(x,t)\\
&+\int_{\partial\Omega_\pm} \left(\rho_\pm\delta w_\pm n_t^\pm+\rho_\pm\delta w_\pm u_\pm\cdot n_x^\pm+\rho_\pm\xi_\pm\cdot\nabla w_\pm n_t^\pm+\rho_\pm\xi_\pm\cdot\nabla w_\pm u_\pm\cdot n_x^\pm\right)\;{\rm d}a(x,t)\\
&+\int_{\partial\Omega_\pm} \left((s_\pm-\varsigma_\pm)\delta \gamma_\pm n_t^\pm+(s_\pm-\varsigma_\pm)\delta \gamma_\pm u_\pm\cdot n_x^\pm+(s_\pm-\varsigma_\pm)\xi_\pm\cdot\nabla \gamma_\pm n_t^\pm+(s_\pm-\varsigma_\pm)\xi_\pm\cdot\nabla \gamma_\pm u_\pm\cdot n_x^\pm\right)\;{\rm d}a(x,t),
\end{aligned}
\end{equation}
where we used the formulas \eqref{useful_formulas} and the variational constraint
$$
-D_t\gamma\delta\varsigma=D_t\gamma\;{\rm div}(\varsigma\xi)
=-\varsigma \nabla D_t\gamma\cdot\xi+{\rm div}\left(\varsigma D_t\gamma\xi\right).
$$

\noindent\textit{Bulk equations.} Plugging \eqref{first_eq_rho_s} into \eqref{stan_eq_rho_s}, we get in $\Omega_\pm$:
\begin{align*}
\xi_\pm:&\;\;\;
\partial_t \frac{\partial\mathscr l}{\partial u_\pm}+\mathscr L_{u_\pm}\frac{\partial\mathscr l}{\partial u_\pm}=\rho_\pm d\frac{\partial\mathscr l }{\partial \rho_\pm}+\varsigma_\pm d \frac{\partial \mathscr l }{\partial s_\pm}+(s_\pm-\varsigma_\pm)d\frac{\partial\mathscr l}{\partial s_\pm},\\
\delta \rho_\pm:&\;\;\;D_t w_\pm=-\frac{\partial\mathscr l}{\partial \rho_\pm},\\
\delta w_\pm:&\;\;\;\bar D_t \rho_\pm=0,\\
\delta s_\pm:&\;\;\;D_t \gamma_\pm=-\frac{\partial\mathscr l}{\partial s_\pm},\\
\delta \gamma_\pm:&\;\;\;\bar{D}_t(s_\pm-\varsigma_\pm)=0.
\end{align*}
The form of the RHS of the first equation arises from the following equalities
\[
\partial_t(\rho\nabla w)+\mathscr L_u (\rho\nabla w)=\bar{D_t}\rho\nabla w+\rho \nabla D_t w
=-\rho d\frac{\partial\mathscr l}{\partial \rho},
\]
\[
\partial_t((s-\varsigma)\nabla \gamma)+\mathscr L_u ((s-\varsigma)\nabla \gamma)=\bar{D_t}(s-\varsigma)\nabla \gamma+(s-\varsigma) \nabla D_t \gamma
=-(s-\varsigma)d\frac{\partial\mathscr l}{\partial s}.
\]
We hence get Equations \eqref{genEulerLag_rho_s}.
Recall that we regard $\rho$ as a density rather than as a scalar function, and therefore $\mathscr L_u (\rho\,\nabla w)$ is the Lie derivative of a $1$-form density.

\medskip

Let us split the contribution of $\partial\Omega_\pm$ as follows:
\[
\int_{\partial \Omega_\pm}=\int_{t_0}^{t_1}\int_{S_\pm}+\int_\Sigma.
\]

\noindent\textit{Boundary conditions.} We can get the boundary conditions on $S_\pm$ from the following equation 
$$
\begin{aligned}
0=&\int_{t_0}^{t_1}{\rm d}t \int_{\Sigma_\pm}\Bigg(\frac{\partial\mathscr l}{\partial u_\pm}\cdot \xi_\pm n_t^\pm + \frac{\partial\mathscr l}{\partial u_\pm}(\xi_\pm) u_\pm \cdot n_x^{\pm}+\left(\mathscr l_\pm-\rho_\pm\frac{\partial\mathscr l}{\partial \rho_\pm}-s_\pm\frac{\partial\mathscr l}{\partial s_\pm}\right) \xi_\pm\cdot n_x^\pm\\
+&\rho_\pm\xi_\pm\cdot\nabla w_\pm n_t^\pm+\rho_\pm\xi_\pm\cdot\nabla w_\pm u_\pm\cdot n_x^\pm+(s_\pm-\varsigma_\pm)\xi_\pm\cdot\nabla \gamma_\pm n_t^\pm+(s_\pm-\varsigma_\pm)\xi_\pm\cdot\nabla \gamma_\pm u_\pm \cdot n_x^\pm\Bigg)\,{\rm d}a.
\end{aligned}
$$
For fluids in a fixed domain, along $S_\pm$ one has 
\[
n_t^\pm = -\,u_\pm\cdot n_x^\pm = 0,
\qquad 
\xi_\pm\cdot n_x^\pm = 0 .
\]
Hence the boundary is stationary and the admissible variations are tangent to it. 
Under these conditions, the above equation is automatically satisfied.

For free-boundary fluids, along $S_\pm$ we have 
$$
\rho_\pm\xi_\pm\cdot\nabla w_\pm n_t^\pm+\rho_\pm\xi_\pm\cdot\nabla w_\pm u_\pm\cdot n_x^\pm=0,
$$
$$
(s_\pm-\varsigma_\pm)\xi_\pm\cdot\nabla \gamma_\pm n_t^\pm+(s_\pm-\varsigma_\pm)\xi_\pm\cdot\nabla \gamma_\pm u_\pm \cdot n_x^\pm=0,
$$
and
$$
\frac{\partial\mathscr l}{\partial u_\pm}\cdot \xi_\pm n_t^\pm + \frac{\partial\mathscr l}{\partial u_\pm}(\xi_\pm) u_\pm \cdot n_x^{\pm}=0.
$$
These identities follow from the kinematic boundary condition for a free surface,
\[
n_t^\pm = -\,u_\pm\!\cdot\! n_x^\pm \quad \text{on } S_\pm.
\]
Consequently, the natural boundary condition along $S_\pm$ reduces to
Equation~\eqref{genBounCon_rho_s}.

\medskip

\noindent\textit{Rankine–Hugoniot condition for mass.}
Assuming that  $\delta w_\pm$ is continuous across the shock surface $\Gamma$, i.e., $\delta w_+=\delta w_-=\delta w$, the boundary terms on $\Gamma$ imply 
$$
\delta w: \;\;\; \rho_+ n_t^++\rho_+ u_+\cdot n_x^+=\rho_- n_t^-+\rho_- u_-\cdot n_x^-.
$$
Using $n_t^+=-v_s=-n_t^-$ and $n_x^+= n=-n_x^-$, we obtain the Rankine--Hugoniot jump condition for mass \eqref{eq:RH_mass_rho_s}.

\medskip

\noindent\textit{Rankine–Hugoniot condition for entropy.} We assume $\delta \gamma_\pm$ are continuous across the shock surface $\Gamma$, i.e., $\delta\gamma_+=\delta\gamma_-=\delta \gamma$, then we have 
$$
\delta \gamma: \;\;\; (s_+-\varsigma_+) n_t^++(s_+-\varsigma_+) u_+\cdot n_x^+=(s_--\varsigma_-) n_t^-+(s_--\varsigma_-) u_-\cdot n_x^-,
$$
and on the shock surface $\Gamma$, we have $n_t^+=-v_s=-n_t^-$ and $n_x^+= n=-n_x^-$, which gives us the Rankine--Hugoniot jump condition for entropy \eqref{RH_entropy_rho_s}.

\medskip

\noindent\textit{Rankine--Hugoniot equations for momentum.} 
Finally, we derive the conditions on the singular surface $\Gamma$. For any $\xi_\pm$ on $\Gamma$, we have 
$$
\begin{aligned}
0=&\frac{\partial\mathscr l}{\partial u_\pm}\cdot \xi_\pm n_t^\pm + \frac{\partial\mathscr l}{\partial u_\pm}(\xi_\pm) u_\pm \cdot n_x^{\pm}-\varsigma_\pm \frac{\partial\mathscr l}{\partial s_\pm} \xi_\pm\cdot n_x^\pm +\rho_\pm\xi_\pm\cdot\nabla w_\pm n_t^\pm+\rho_\pm\xi_\pm\cdot\nabla w_\pm u_\pm\cdot n_x^\pm\\
&+(s_\pm-\varsigma_\pm)\xi_\pm\cdot\nabla \gamma_\pm n_t^\pm+(s_\pm-\varsigma_\pm)\xi_\pm\cdot\nabla \gamma_\pm u_\pm\cdot n_x^\pm.
\end{aligned}
$$

Using the geometric identities
\[
n_t=-v_s,\qquad n_x= n,
\]
on $\Gamma$, we get
$$
\begin{aligned}
v_s\frac{\partial\mathscr l_\pm}{\partial u_\pm}(\cdot)&=\left(\frac{\partial\mathscr l_\pm}{\partial u_\pm}\otimes u_\pm-\varsigma_\pm\frac{\partial\mathscr l_\pm}{\partial s_\pm}\delta\right)\cdot n-v_s\rho_\pm\nabla w_\pm(\cdot)+\rho_\pm\nabla w_\pm(\cdot) u_\pm\cdot n\\
&-v_s(s_\pm-\varsigma_\pm)\nabla \gamma_\pm(\cdot) +(s_\pm-\varsigma_\pm)\nabla \gamma_\pm(\cdot) u_\pm\cdot n.   
\end{aligned}
$$
Then, for any $\xi^s=\xi_s\cdot n$ on $\Gamma$, we get $\mathscr l_++\rho_+ D_t w_++(s_+ - \varsigma_+)D_t\gamma_+=\mathscr l_-+\rho_-D_t w_-+(s_- - \varsigma_-)D_t\gamma_-$. Together with the RH condition for entropy, we obtain the Rankine--Hugoniot condition on $\Gamma$ given by Equation \eqref{genLagRH_rho_s}.
\end{proof}

\paragraph{Bulk, interface, and boundary equations in standard PDE notation.}
Theorem~\ref{RHvar_rho_s} furnishes a variational formulation of the compressible Euler equations with entropy that is directly comparable to the classical conservation--law framework.

The first two equations in \eqref{genEulerLag_rho_s} are transport equations for mass density and entropy density,
\[
\partial_t \rho_\pm + \mathscr L_{u_\pm} \rho_\pm = 0,
\qquad
\partial_t s_\pm + \mathscr L_{u_\pm} s_\pm = 0,
\]
which, in divergence form, read
\[
\partial_t \rho_\pm + \nabla\cdot(\rho_\pm u_\pm) = 0,
\qquad
\partial_t s_\pm + \nabla\cdot(s_\pm u_\pm) = 0.
\]
The third equation in \eqref{genEulerLag_rho_s} expresses momentum balance. Defining
\(
m_\pm := \frac{\partial \mathscr l}{\partial u_\pm},
\)
and using $\mathscr L_u m = \nabla_u m + \nabla u^T m + m \operatorname{div}u =  \nabla\cdot (m \otimes u)+\nabla u^T m$,
it can be written in the form
\[
\partial_t m_\pm
+ \nabla\cdot (m_\pm \otimes u_\pm)+\nabla u_\pm^T m_\pm
- \rho_\pm \nabla \frac{\partial \mathscr l}{\partial \rho_\pm}-s_\pm \nabla \frac{\partial \mathscr l}{\partial s_\pm} = 0.
\]
For the Lagrangian density $\mathscr l =\frac 12\rho|u|^2-\varepsilon(\rho,s)$, we get
$\nabla u_\pm^T m_\pm
- \rho_\pm \nabla \frac{\partial \mathscr l}{\partial \rho_\pm}-s_\pm \nabla \frac{\partial \mathscr l}{\partial s_\pm}=\rho_\pm \nabla \frac{\partial \varepsilon}{\partial \rho_\pm}+s_\pm \nabla \frac{\partial \varepsilon}{\partial s_\pm}=\nabla p_\pm$, where
\begin{equation}\label{pressure_def_remark}
p=\mathscr l -\rho\frac{\partial\mathscr l }{\partial\rho}-s \frac{\partial \mathscr l}{\partial s}=-\varepsilon+\rho\frac{\partial \varepsilon}{\partial \rho}+s \frac{\partial \varepsilon}{\partial s}
\end{equation}
is the pressure. Hence, for standard choices of $\mathscr l $, this reduces to the familiar Euler form
\[
\partial_t(\rho u)
+\nabla\cdot(\rho u\otimes u)
+\nabla p
=0 .
\]

The boundary condition \eqref{genBounCon_rho_s} then becomes the pressure condition $p = 0$ on $S_\pm$ for free-boundary fluids.

The jump condition \eqref{genLagRH_rho_s} expresses momentum conservation across the moving interface $\Gamma$. Using the pressure definition \eqref{pressure_def_remark}, it takes the classical Rankine--Hugoniot form
\[
v_s [\![\rho u]\!]
=
[\![\rho u \otimes u]\!]  n
+ [\![p]\!]  n,
\]
where $v_s$ denotes the normal speed of $\Gamma$ and $ n$ its unit normal vector.

Similarly, \eqref{eq:RH_mass_rho_s} is the standard Rankine--Hugoniot condition for mass conservation,
\[
v_s [\![\rho]\!] = [\![\rho u]\!] \cdot  n.
\]

Altogether, Theorem \ref{RHvar_rho_s} shows that a single variational framework - viewed as a consistent extension of Hamilton's principle to irreversible processes - recovers the classical compressible Euler equations in the bulk, along with the Rankine–Hugoniot jump conditions for mass and momentum, as well as the boundary conditions. The Rankine–Hugoniot condition for energy is derived from the variational principole in the following section.

\subsection{Time derivative of the energy balance}
\begin{lemma}\label{lem:RH_E_rho_s}
The critical action principle \eqref{VP_thermo1}-\eqref{constraint_thermo1} yields the Rankine--Hugoniot condition for energy:
\begin{equation}\label{RH_E_rho_s}
v_s\left[\!\left[E\right]\!\right]
-\left[\!\left[(E+p)u\right]\!\right]\cdot n=0.
\end{equation}
\end{lemma}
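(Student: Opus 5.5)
We mimic the proof of the energy identity \eqref{time_change_E_lambda} for the barotropic case, with $\lambda_\pm$ replaced by the entropy-related term $-\varsigma_\pm\,\partial\mathscr l/\partial s_\pm$. Two ingredients from that earlier argument carry over. The first is the variational identities $E=\frac{\partial\mathscr l}{\partial u}\cdot u-\mathscr l$ and $p=\mathscr l-\rho\frac{\partial\mathscr l}{\partial\rho}-s\frac{\partial\mathscr l}{\partial s}$. The second is the per-side interface relation obtained in the proof of Theorem \ref{RHvar_rho_s} by varying $\xi_\pm$ on $\Gamma$:
\begin{align*}
v_s\frac{\partial\mathscr l_\pm}{\partial u_\pm}(\cdot)={}&\Big(\frac{\partial\mathscr l_\pm}{\partial u_\pm}\otimes u_\pm-\varsigma_\pm\frac{\partial\mathscr l_\pm}{\partial s_\pm}\delta\Big)\cdot n+\rho_\pm\nabla w_\pm(\cdot)(u_\pm\cdot n-v_s)\\
&+(s_\pm-\varsigma_\pm)\nabla\gamma_\pm(\cdot)(u_\pm\cdot n-v_s).
\end{align*}
I will evaluate this relation on $u_\pm$ and then take the jump across $\Gamma$. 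This gives
\begin{align*}
v_s\Big[\!\Big[\frac{\partial\mathscr l}{\partial u}\cdot u\Big]\!\Big]={}&\Big[\!\Big[\Big(\frac{\partial\mathscr l}{\partial u}\cdot u-\varsigma\frac{\partial\mathscr l}{\partial s}\Big)u\Big]\!\Big]\cdot n+\big[\!\big[\rho\,u\cdot\nabla w\,(u\cdot n-v_s)\big]\!\big]\\
&+\big[\!\big[(s-\varsigma)\,u\cdot\nabla\gamma\,(u\cdot n-v_s)\big]\!\big].
\end{align*}

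Next I will substitute this into $v_s[\![E]\!]-[\![(E+p)u]\!]\cdot n$, written as $v_s[\![\frac{\partial\mathscr l}{\partial u}\cdot u-\mathscr l]\!]-[\![(\frac{\partial\mathscr l}{\partial u}\cdot u-\rho\frac{\partial\mathscr l}{\partial\rho}-s\frac{\partial\mathscr l}{\partial s})u]\!]\cdot n$. The kinetic terms cancel. What remains is
\[
-v_s[\![\mathscr l]\!]+\Big[\!\Big[\Big(\rho\frac{\partial\mathscr l}{\partial\rho}+(s-\varsigma)\frac{\partial\mathscr l}{\partial s}\Big)u\Big]\!\Big]\cdot n
\]
together with the $w$- and $\gamma$-terms above.

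I then use three further facts. First, the normal-variation condition from the proof of Theorem \ref{RHvar_rho_s} gives $[\![\mathscr l]\!]=-[\![\rho D_tw+(s-\varsigma)D_t\gamma]\!]$. Second, the bulk relations give $D_tw_\pm=-\partial\mathscr l/\partial\rho_\pm$ and $D_t\gamma_\pm=-\partial\mathscr l/\partial s_\pm$. Third, I assume, as in the barotropic case, that $\partial_t w,\nabla w,\partial_t\gamma,\nabla\gamma$ are continuous across $\Gamma$.

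With these, the $w$-terms regroup exactly as in the barotropic proof. One writes $v_s[\![\rho D_tw]\!]-v_s[\![\rho u\cdot\nabla w]\!]=v_s\partial_tw[\![\rho]\!]$. Using the mass condition \eqref{eq:RH_mass_rho_s}, this equals $\partial_tw[\![\rho u]\!]\cdot n$. Combining with $[\![\rho\,u\cdot\nabla w\,u]\!]\cdot n$ then gives $[\![\rho D_tw\,u]\!]\cdot n$, which cancels $[\![\rho\frac{\partial\mathscr l}{\partial\rho}u]\!]\cdot n$. The $\gamma$-terms are handled identically: one replaces $\rho$ by $s-\varsigma$ and invokes the $(s-\varsigma)$ jump condition \eqref{RH_entropy_rho_s} instead of the mass condition. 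They regroup to $[\![(s-\varsigma)D_t\gamma\,u]\!]\cdot n$, which cancels $[\![(s-\varsigma)\frac{\partial\mathscr l}{\partial s}u]\!]\cdot n$.

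Everything therefore vanishes, which gives \eqref{RH_E_rho_s}. The structural reason is that the $\varsigma\,\partial\mathscr l/\partial s$ contribution, which in the barotropic setting left the residual $v_s[\![\lambda]\!]-[\![\lambda u]\!]\cdot n$, is here absorbed into the term with coefficient $s-\varsigma$. That coefficient obeys a conservative jump condition, so no residual survives.

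The main obstacle is the bookkeeping of the $\varsigma$ terms. The interface relation carries $-\varsigma\frac{\partial\mathscr l}{\partial s}$, while the pressure carries $-s\frac{\partial\mathscr l}{\partial s}$, and these must combine into $(s-\varsigma)\frac{\partial\mathscr l}{\partial s}$ before cancelling against the $\gamma$-terms. I will check carefully that the signs of $D_t\gamma=-\partial\mathscr l/\partial s$ make this combination work, and that the continuity of $\partial_t\gamma$ and $\nabla\gamma$ is indeed part of the hypotheses, as it is for $w$.
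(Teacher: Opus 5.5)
Your proposal is correct and follows the paper's proof essentially step for step. You evaluate the per-side interface relation on $u_\pm$, take the jump, substitute into $v_s[\![E]\!]-[\![(E+p)u]\!]\cdot n$, and cancel the $w$- and $\gamma$-groups using the $\xi_s$-condition, the bulk relations $D_tw_\pm=-\partial\mathscr l/\partial\rho_\pm$ and $D_t\gamma_\pm=-\partial\mathscr l/\partial s_\pm$, the continuity of $\partial_t w$ and $\partial_t\gamma$ across $\Gamma$, and the mass and $(s-\varsigma)$ jump conditions; the regrouping of $-\varsigma\,\partial\mathscr l/\partial s$ with $-s\,\partial\mathscr l/\partial s$ into $(s-\varsigma)\,\partial\mathscr l/\partial s$ is exactly the paper's bookkeeping, and the signs work out as you expect.
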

\begin{proof}
Recall that the energy density associated with the Lagrangian is
\(
E=\frac{\partial \mathscr l}{\partial u}\cdot u-\mathscr l,
\)
and that the pressure is given by the thermodynamic relation
\(
p=\mathscr l-\rho\frac{\partial \mathscr l}{\partial \rho}-s\frac{\partial \mathscr l}{\partial s}.
\)
Substituting the expressions for $E$ and $p$, we get
\begin{equation}\label{RH_E_integrand}
v_s\left[\!\left[E\right]\!\right]
-\left[\!\left[(E+p)u\right]\!\right]\cdot n=v_s\left[\!\left[\frac{\partial \mathscr l}{\partial u}\cdot u-\mathscr l\right]\!\right]
-\left[\!\left[\left(\frac{\partial \mathscr l}{\partial u}\cdot u
-\rho\frac{\partial \mathscr l}{\partial \rho}
-s\frac{\partial \mathscr l}{\partial s}\right)u\right]\!\right]\cdot n.
\end{equation}
From the proof of Theorem \ref{RHvar_rho_s}, we know that on $\Gamma$, one has 
$$
\begin{aligned}
v_s\frac{\partial\mathscr l_\pm}{\partial u_\pm}(\cdot)&=\left(\frac{\partial\mathscr l_\pm}{\partial u_\pm}\otimes u_\pm-\varsigma_\pm\frac{\partial\mathscr l_\pm}{\partial s_\pm}\delta\right)\cdot n-v_s\rho_\pm\nabla w_\pm(\cdot)+\rho_\pm\nabla w_\pm(\cdot) u_\pm\cdot n\\
&\quad -v_s(s_\pm-\varsigma_\pm)\nabla \gamma_\pm(\cdot) +(s_\pm-\varsigma_\pm)\nabla \gamma_\pm(\cdot) u_\pm\cdot n.   
\end{aligned}
$$
Evaluating this identity on $u_\pm$ yields
$$
\begin{aligned}
v_s\frac{\partial\mathscr l_\pm}{\partial  u_\pm}\cdot u_\pm&=\left(\frac{\partial\mathscr l_\pm}{\partial  u_\pm}\cdot  u_\pm-\varsigma_\pm\frac{\partial\mathscr l_\pm}{\partial s_\pm}\right) u_\pm\cdot  n-v_s\rho_\pm\nabla w_\pm\cdot u_\pm+\rho_\pm\nabla w_\pm\cdot u_\pm u_\pm \cdot n\\
&\quad -v_s(s_\pm-\varsigma_\pm)\nabla \gamma_\pm\cdot u_\pm +(s_\pm-\varsigma_\pm)\nabla \gamma_\pm\cdot u_\pm u_\pm \cdot n.  
\end{aligned}
$$
Taking the jump across $\Gamma$ give
$$
\begin{aligned}
v_s\left[\!\left[\frac{\partial \mathscr l}{\partial  u}\cdot u\right]\!\right]&=\left[\!\left[\left(\frac{\partial \mathscr l}{\partial  u}\cdot u-\varsigma\frac{\partial \mathscr l}{\partial s}\right) u\right]\!\right]\cdot n-\left[\!\left[\rho  u\cdot\nabla w\right]\!\right]v_s+\left[\!\left[\rho  u\cdot\nabla w  u\right]\!\right]\cdot  n\\
&\quad -\left[\!\left[(s-\varsigma)  u\cdot\nabla \gamma\right]\!\right]v_s+\left[\!\left[(s-\varsigma)  u\cdot\nabla \gamma  u\right]\!\right]\cdot  n.
\end{aligned}
$$
Substituting this expression into Equation \eqref{RH_E_integrand} yields
$$
\begin{aligned}
&v_s\left[\!\left[\frac{\partial \mathscr l}{\partial  u}\cdot u-\mathscr l\right]\!\right]-\left[\!\left[\left(\frac{\partial \mathscr l}{\partial  u}\cdot u- \rho\frac{\partial \mathscr l}{\partial \rho}-s\frac{\partial \mathscr l}{\partial s}\right) u\right]\!\right]\cdot n\\
&=v_s\left[\!\left[-\mathscr l\right]\!\right]+\left[\!\left[\rho\frac{\partial \mathscr l}{\partial \rho} u\right]\!\right]\cdot n+\left[\!\left[s\frac{\partial \mathscr l}{\partial s} u\right]\!\right]\cdot n-\left[\!\left[\varsigma\frac{\partial \mathscr l}{\partial s} u\right]\!\right]\cdot n-\left[\!\left[\rho  u\cdot\nabla w\right]\!\right]v_s+\left[\!\left[\rho  u\cdot\nabla w  u\right]\!\right]\cdot  n\\
&\quad -\left[\!\left[(s-\varsigma)  u\cdot\nabla \gamma\right]\!\right]v_s+\left[\!\left[(s-\varsigma)  u\cdot\nabla \gamma  u\right]\!\right]\cdot  n.
\end{aligned}
$$
On the shock surface $\Gamma$, we have
\[
\left[\!\left[-\mathscr l\right]\!\right]
=
\left[\!\left[\rho D_t w + (s-\varsigma)D_t\gamma\right]\!\right].
\]
Moreover, $\partial_t w$, $\nabla_x w$, $\partial_t\gamma$, and $\nabla_x\gamma$ are continuous across $\Gamma$. Using the Rankine--Hugoniot conditions for mass and entropy, we obtain
$$
\begin{aligned}
&v_s\left[\!\left[\rho D_t w\right]\!\right]+\left[\!\left[\rho\frac{\partial \mathscr l}{\partial \rho} u\right]\!\right]\cdot n-\left[\!\left[\rho  u\cdot\nabla w\right]\!\right]v_s+\left[\!\left[\rho  u\cdot\nabla w  u\right]\!\right]\cdot  n\\
&=v_s\left[\!\left[\rho \partial_t w\right]\!\right]+\left[\!\left[\rho\frac{\partial \mathscr l}{\partial \rho} u\right]\!\right]\cdot n+\left[\!\left[\rho  u\cdot\nabla w  u\right]\!\right]\cdot  n\\
&=\partial_t w\left[\!\left[\rho  u\right]\!\right]\cdot  n+\left[\!\left[\rho\frac{\partial \mathscr l}{\partial \rho} u\right]\!\right]\cdot n+\left[\!\left[\rho  u\cdot\nabla w  u\right]\!\right]\cdot  n\\
&=\left[\!\left[\rho D_t w u\right]\!\right]\cdot  n+\left[\!\left[\rho\frac{\partial \mathscr l}{\partial \rho} u\right]\!\right]\cdot n=0,
\end{aligned}
$$
and similarly, 
$$
\begin{aligned}
v_s\left[\!\left[(s-\varsigma) D_t \gamma\right]\!\right]+\left[\!\left[(s-\varsigma)\frac{\partial \mathscr l}{\partial s} u\right]\!\right]\cdot n-\left[\!\left[(s-\varsigma)  u\cdot\nabla \gamma\right]\!\right]v_s+\left[\!\left[(s-\varsigma)  u\cdot\nabla \gamma  u\right]\!\right]\cdot  n=0.
\end{aligned}
$$
All contributions therefore cancel, and we get the Rankine--Hugoniot condition \eqref{RH_E_rho_s}.
\end{proof}

From this lemma, the following result follows immediately.

\begin{theorem} The Euler–Lagrange equations associated with the variational principle \eqref{VP_thermo1}-\eqref{constraint_thermo1} imply conservation of energy, i.e.,
\begin{equation}\label{time_change_E_rho_s}
\frac{d}{dt}\mathscr E=0.
\end{equation}
\end{theorem}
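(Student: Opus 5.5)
The plan is to reuse the energy balance of Proposition~\ref{prop:energy}, now for the full compressible Lagrangian $\mathscr l(u,\rho,s)$, and then feed in Lemma~\ref{lem:RH_E_rho_s} together with the boundary conditions of Theorem~\ref{RHvar_rho_s}. I would first re-derive the bulk energy law. From the bulk equations \eqref{genEulerLag_rho_s} (continuity, entropy transport, momentum balance), the energy density $E=\frac{\partial\mathscr l}{\partial u}\cdot u-\mathscr l$ satisfies $\partial_t E+\nabla\cdot((E+p)u)=0$ in each smooth region $M_\pm(t)$, with $p=\mathscr l-\rho\frac{\partial\mathscr l}{\partial\rho}-s\frac{\partial\mathscr l}{\partial s}$. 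This is the standard Noether-type computation. One pairs the momentum equation with $u$, uses $\bar D_t\rho=0$ and $\bar D_t s=0$ to handle the $\rho\,d\frac{\partial \mathscr l}{\partial\rho}$ and $s\,d\frac{\partial\mathscr l}{\partial s}$ terms, and collects everything into a divergence. For $\mathscr l=\frac12\rho|u|^2-\varepsilon(\rho,s)$ this is just \eqref{euler3}.

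Next I would apply the Reynolds transport theorem to $\mathscr E_\pm(t)=\int_{M_\pm(t)}E_\pm\,{\rm d}x$, exactly as in the proof of Proposition~\ref{prop:energy}. On $\Gamma_t$ the boundary moves with normal speed $v_s$. On $S_\pm$ it moves with the fluid in the free-boundary case, or is fixed with $u\cdot n=0$ in the fixed-domain case. Summing the two sides with $n=n_+=-n_-$ gives
\[
\frac{d}{dt}\mathscr E=\int_{\Gamma_t}\left(v_s\left[\!\left[E\right]\!\right]-\left[\!\left[(E+p)u\right]\!\right]\cdot n\right){\rm d}a-\int_{\partial M}p\,u\cdot n\,{\rm d}a .
\]
I would then dispose of the outer boundary term. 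In a fixed domain, $u_\pm\cdot n_\pm=0$ on $S_\pm$ by the choice of configuration manifold. For a free boundary, the natural boundary condition \eqref{genBounCon_rho_s} gives $p=0$ on $S_\pm$. In either case the last integral vanishes.

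Finally, the interface integrand vanishes pointwise on $\Gamma_t$ by Lemma~\ref{lem:RH_E_rho_s}, which yields \eqref{time_change_E_rho_s}. The only step needing genuine care is the first one: checking that the variationally defined $E$ and $p$ with the entropy dependence produce an exact conservation law in the bulk. The key is to confirm that the $s\,d\frac{\partial \mathscr l}{\partial s}$ contribution combines with $\partial_t s+\nabla\cdot(su)=0$ just as the $\rho$-terms do. The remaining steps are bookkeeping identical to the barotropic case, except that there the interface term was $\lambda$-dependent, whereas here it vanishes identically.
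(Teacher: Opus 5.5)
Your proposal is correct and takes the same approach as the paper. You carry the energy balance of Proposition~\ref{prop:energy} over to $\mathscr l(u,\rho,s)$, remove the outer boundary term via the natural condition $p=0$, and kill the interface integrand with Lemma~\ref{lem:RH_E_rho_s}. You go slightly further than the paper's proof by explicitly checking the bulk law $\partial_t E+\nabla\cdot((E+p)u)=0$ including the entropy terms (the paper simply asserts this carries over), and by treating the fixed-domain case via $u\cdot n=0$ (the paper mentions only the free boundary).
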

\begin{proof}
On the free boundary $\partial M$, the natural boundary condition implies
\[
p=\mathscr l-\rho\frac{\partial \mathscr l}{\partial \rho}-s\frac{\partial \mathscr l}{\partial s}=0.
\]
Using Proposition~\ref{prop:energy}, also valid for the full compressible case, the time derivative of the total energy reduces to an integral over the shock surface $\Gamma_t$,
\[
\frac{d}{dt}\mathscr E
=\int_{\Gamma_t}
v_s\left[\!\left[E\right]\!\right]
-\left[\!\left[(E+p)u\right]\!\right]\cdot n \,{\rm d}a.
\]
Hence, by Lemma~\ref{lem:RH_E_rho_s}, we get \eqref{time_change_E_rho_s}.
\end{proof}

\section{Variational formulation for shocks with entropy production and heat conduction}\label{sec:entropy}

In this section, we show that our variational framework can incorporate irreversible processes in fluids by focusing on heat conduction. This setting highlights the essential distinction between the variables $s$ and $\varsigma$, which in this case no longer coincide.

Following the variational approach to nonequilibrium thermodynamics, the constraints \eqref{constraint_thermo1} must be modified to account for irreversible processes, in the sense of the general framework \eqref{constraints} involving phenomenological and variational constraints as developed in \cite{GBYo2019}. We denote by $j_s^\pm$ the entropy flux, which may include contributions from heat conduction and other irreversible mechanisms.

The critical action principle then takes the form
\begin{equation}\label{genLagrangian_thermo}
\begin{aligned}
\delta \int_{t_0}^{t_1} 
\Bigg[
&\int_{M_-(t)}
\left(
\mathscr l (u_-,\rho_-,s_-)
+\rho_-D_t w_-
+(s_- - \varsigma_-)D_t\gamma_-
\right) \,{\rm d}x \\
&+
\int_{M_+(t)}
\left(
\mathscr l (u_+,\rho_+,s_+)
+\rho_+D_t w_+
+(s_+ - \varsigma_+)D_t\gamma_+
\right) \,{\rm d}x
\Bigg] \,{\rm d}t
=0 .
\end{aligned}
\end{equation}
subject to the phenomenological and variational constraints
\begin{equation}\label{Constraints_js}
\frac{\partial \mathscr l_\pm}{\partial s_\pm}\,
\bar{D}_t \varsigma _\pm
=
j_s^\pm \cdot \nabla D_t \gamma _\pm
\qquad \text{and}\qquad 
\frac{\partial \mathscr l_\pm}{\partial s_\pm}\,
\bar{D}_\delta \varsigma _\pm
=
j_s^\pm \cdot \nabla D_\delta \gamma _\pm.
\end{equation}
The critical condition must hold for arbitrary variations $\delta\Gamma$, $\delta\rho_\pm$, $\delta w_\pm$, and $\delta\gamma_\pm$, together with velocity variations of the form
\[
\delta u_\pm = \partial_t \xi_\pm + \mathcal{L}_{u_\pm}\xi_\pm.
\]
We refer to Remark \ref{remark_dAlembert} for the D'Alembert structure of the principle.

\subsection{Variational derivation of the compressible Euler system with entropy production}\label{sec:compr_Euler_entropy}

The following result extends Theorem \ref{RHvar_rho_s} to the case including heat conduction. The presence of entropy flux modifies the entropy equation and the Rankine--Hugoniot relation for energy and yields a new boundary condition.

\begin{theorem}\label{RHvar_thermo} The critical action principle \eqref{genLagrangian_thermo}--\eqref{Constraints_js} implies:
\begin{itemize}
\item[(1)] \textbf{Bulk equations}.
\begin{equation}\label{genEulerLag_thermo}
\left\{
\begin{aligned}
&\partial_t \rho_\pm+\mathscr L_{u_\pm}\rho_\pm=0,\\
&\frac{\partial \mathscr l}{\partial s_\pm}\left(\partial_t s_\pm+\mathscr L_{u_\pm}s_\pm\right)
=-{\rm div}\!\left( j_s^\pm\frac{\partial \mathscr l}{\partial s_\pm}\right),\\
&\partial_t \frac{\partial\mathscr l}{\partial u_\pm}
+\mathscr L_{u_\pm}\frac{\partial\mathscr l}{\partial u_\pm}
=\rho_\pm d\frac{\partial\mathscr l}{\partial \rho_\pm}
+s_\pm d\frac{\partial\mathscr l}{\partial s_\pm}.
\end{aligned}
\right.
\end{equation}
\item[(2)] \textbf{Boundary conditions}. As in Theorem \ref{RHvar_rho_s}, together with the insulated condition 
\[
j_s^\pm\cdot n=0,
\]
for both fixed and free boundaries.
\item[(3)] \textbf{Jump conditions across interfaces}. The Rankine-Hugoniot conditions for momentum and mass \eqref{genLagRH_rho_s} and \eqref{eq:RH_mass_rho_s}, together with the modified entropy jump condition
\begin{equation}\label{RH_entropy}
v_s\left[\!\left[s-\varsigma\right]\!\right]
=
\left[\!\left[(s-\varsigma)u+j_s\right]\!\right]\cdot n.
\end{equation}
\end{itemize}
\end{theorem}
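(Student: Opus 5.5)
The plan is to repeat the computation in the proof of Theorem \ref{RHvar_rho_s} verbatim. The only change is the treatment of the term $-D_t\gamma_\pm\,\delta\varsigma_\pm$, which now has to be rewritten using the new variational constraint in \eqref{Constraints_js} instead of $\bar D_\delta\varsigma=0$. All contributions from $\mathscr l$, $\rho D_tw$ and $(s-\varsigma)D_t\gamma$ are handled exactly as in \eqref{first_eq_rho_s}, with the formulas \eqref{useful_formulas}. They produce the same bulk terms, the same $\delta\rho_\pm$, $\delta w_\pm$, $\delta s_\pm$ and $\delta\gamma_\pm$ equations, and the same boundary integrands on $\partial\Omega_\pm$.

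From $\delta s_\pm$ we get $D_t\gamma_\pm=-\partial\mathscr l/\partial s_\pm$. The variational constraint, with $\bar D_\delta\varsigma=\delta\varsigma+{\rm div}(\varsigma\xi)$ and $D_\delta\gamma=\delta\gamma+\xi\cdot\nabla\gamma$, gives
\[
-D_t\gamma\,\delta\varsigma
= D_t\gamma\,{\rm div}(\varsigma\xi) + j_s\cdot\nabla(\delta\gamma+\xi\cdot\nabla\gamma).
\]
I will integrate the last term by parts. This produces a bulk term $-{\rm div}(j_s)(\delta\gamma+\xi\cdot\nabla\gamma)$ and a boundary term $j_s\cdot n_x(\delta\gamma+\xi\cdot\nabla\gamma)$ on $\partial\Omega_\pm$.

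Collecting the bulk terms gives the equations in order.
\begin{itemize}
\item The $\delta\gamma$ coefficient gives $\bar D_t(s-\varsigma)+{\rm div}\,j_s=0$.
\item The phenomenological constraint then reads $\frac{\partial\mathscr l}{\partial s}\bar D_t\varsigma=-j_s\cdot\nabla\frac{\partial\mathscr l}{\partial s}$.
\item Adding $\frac{\partial\mathscr l}{\partial s}\bar D_t(s-\varsigma)=-\frac{\partial\mathscr l}{\partial s}{\rm div}\,j_s$ to it yields the entropy equation in \eqref{genEulerLag_thermo}.
\item In the $\xi$ coefficient, the new term $-{\rm div}(j_s)\nabla\gamma$ is absorbed. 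Indeed, the identity $\partial_t((s-\varsigma)\nabla\gamma)+\mathscr L_u((s-\varsigma)\nabla\gamma)=\bar D_t(s-\varsigma)\nabla\gamma+(s-\varsigma)\nabla D_t\gamma$ now has $\bar D_t(s-\varsigma)=-{\rm div}\,j_s$, and the two ${\rm div}(j_s)\nabla\gamma$ contributions cancel.
\end{itemize}
So the momentum equation is unchanged, and the mass equation is unchanged as well. This bookkeeping of the ${\rm div}\,j_s\,\nabla\gamma$ terms is the step I expect to be most delicate: a sign error would leave a spurious force in the momentum balance.

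For the boundary and interface conditions, the new boundary integrand is $j_s\cdot n_x(\delta\gamma+\xi\cdot\nabla\gamma)$.
\begin{itemize}
\item On $S_\pm$, $\delta\gamma$ is arbitrary and does not appear elsewhere once the fixed or free kinematic conditions $n_t=-u\cdot n_x$ are used. This forces $j_s^\pm\cdot n=0$, after which the remaining terms reduce to \eqref{genBounCon_rho_s} exactly as before.
\item On $\Gamma$, with $\delta\gamma_+=\delta\gamma_-$ continuous, the $\delta\gamma$ coefficient becomes $(s-\varsigma)(n_t+u\cdot n_x)+j_s\cdot n_x$ on each side. Using $n_t^\pm=\mp v_s$ and $n_x^\pm=\pm n$ gives \eqref{RH_entropy}.
\item The $\xi_\pm$ terms on $\Gamma$ acquire an extra $j_s\cdot n\,\nabla\gamma(\cdot)$. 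Combined with the $(s-\varsigma)\nabla\gamma$ terms, this contributes $-(v_s[\![s-\varsigma]\!]-[\![(s-\varsigma)u+j_s]\!]\cdot n)\nabla\gamma$ after taking the jump, by continuity of $\nabla\gamma$. That expression vanishes by \eqref{RH_entropy}.
\item The $\xi^s$ variation still gives the jump of $\mathscr l+\rho D_tw+(s-\varsigma)D_t\gamma$. Together with the mass relation, this yields \eqref{genLagRH_rho_s} unchanged, as in Theorem \ref{RHvar_rho_s}.
\end{itemize}
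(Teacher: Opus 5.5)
Your proposal is correct and follows the paper's proof essentially step by step. Like the paper, you rewrite $-D_t\gamma\,\delta\varsigma$ through the variational constraint in \eqref{Constraints_js}, integrate $j_s\cdot\nabla(\delta\gamma+\xi\cdot\nabla\gamma)$ by parts, cancel the two ${\rm div}\,j_s\,\nabla\gamma$ contributions in the $\xi$ equation, and read off $j_s\cdot n=0$ on $S_\pm$ and \eqref{RH_entropy} on $\Gamma$ from the $\delta\gamma$ boundary terms. Your explicit combination of $\bar D_t(s-\varsigma)+{\rm div}\,j_s=0$ with the phenomenological constraint to get the entropy balance in \eqref{genEulerLag_thermo} fills in a step the paper leaves implicit.
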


\begin{proof}
For arbitrary variations $q_\epsilon$ of the curve $q(t)\in \mathcal{Q}_{\rm free}$ (or $\mathcal{Q}_{\rm fix}$)
with fixed endpoints, the critical action principle gives
\begin{equation}\label{stan_eq_thermo}
\begin{aligned}
0=&\delta\int_{t_0}^{t_1}{\rm d}t
\left[
    \int_{M_-(t)}
        \left( \mathscr l (u_-,\rho_-,s_-)
        + \rho_- D_t w_-
        + (s_- - \varsigma_-) D_t \gamma_- \right)\,{\rm d}x
%\right. \\
%&\left.
   + \int_{M_+(t)}
        \left( \mathscr l (u_+,\rho_+,s_+)
        + \rho_+ D_t w_+
        + (s_+ - \varsigma_+) D_t \gamma_+ \right)\,{\rm d}x
\right]\\
=&\int_{t_0}^{t_1}{\rm d}t\int_{M_+}\left(\frac{\partial\mathscr l}{\partial u_+}\cdot\delta u_+ +\frac{\partial\mathscr l}{\partial \rho_+}\cdot\delta \rho_+ +\frac{\partial\mathscr l}{\partial s_+}\cdot\delta s_++\delta\rho_+D_t w_++\rho_+\delta D_tw_++\delta(s_+ - \varsigma_+)D_t\gamma_++(s_+ - \varsigma_+)\delta D_t\gamma_+ \right){\rm d}x\\
+&\int_{t_0}^{t_1}{\rm d}t\left(\int_{S_+}(\mathscr l (u_+,\rho_+,s_+)+ \rho_+ D_t w_++ (s_+ - \varsigma_+) D_t \gamma_+)\xi_+\cdot n_+ {\rm d}a +\int_{\Gamma}(\mathscr l (u_+,\rho_+,s_+)+ \rho_+ D_t w_++ (s_+ - \varsigma_+) D_t \gamma_+)\xi_s\cdot n_+ {\rm d}a\right)\\
+&\int_{t_0}^{t_1}{\rm d}t\int_{M_-}\left(\frac{\partial\mathscr l}{\partial u_-}\cdot\delta u_- +\frac{\partial\mathscr l}{\partial \rho_-}\cdot\delta \rho_- +\frac{\partial\mathscr l}{\partial s_-}\cdot\delta s_-+\delta\rho_-D_t w_-+\rho_-\delta D_tw_-+\delta(s_- - \varsigma_-)D_t\gamma_-+(s_- - \varsigma_-)\delta D_t\gamma_- \right){\rm d}x\\
+&\int_{t_0}^{t_1}{\rm d}t\left(\int_{S_-}(\mathscr l (u_-,\rho_-,s_-)+ \rho_- D_t w_-+ (s_- - \varsigma_-) D_t \gamma_-)\xi_-\cdot n_- {\rm d}a +\int_{\Gamma}(\mathscr l (u_-,\rho_-,s_-)+ \rho_- D_t w_-+ (s_- - \varsigma_-) D_t \gamma_-)\xi_s\cdot n_- {\rm d}a\right).\\
\end{aligned}
\end{equation}
Note that if the domain is fixed, then on $S_\pm$, $\xi_\pm\cdot n_\pm=0$.

We know $\delta u_\pm=\partial_t\xi_\pm+[u_\pm,\xi_\pm],$
where $\xi_\pm = \delta\varphi_\pm \circ \varphi_\pm^{-1}$
is an arbitrary time dependent vector field vanishing
at $t = t_0,\, t_1$.

\medskip

\noindent\textit{Bulk equations.} Using
\[
\delta D_t\gamma
=
D_t\delta\gamma
+
(\partial_t\xi+\mathscr L_u\xi,\nabla)\gamma,
\]
\[
\rho D_t\delta w=\bar D_t(\rho\delta w)-\bar D_t\rho\,\delta w,
\]
\[
\partial_t(\rho\nabla w)+\mathscr L_u (\rho\nabla w)=\bar{D_t}\rho\nabla w+\rho \nabla D_t w
=-\rho \nabla\frac{\partial\mathscr l}{\partial \rho},
\]
\[
\partial_t((s-\varsigma)\nabla \gamma)+\mathscr L_u ((s-\varsigma)\nabla \gamma)=\bar{D_t}(s-\varsigma)\nabla \gamma+(s-\varsigma) \nabla D_t \gamma
= -{\rm div}j_s\nabla\gamma-(s-\varsigma)\nabla\frac{\partial\mathscr l}{\partial s},
\]
together with the variational constraint
\[
\begin{aligned} &-D_t\gamma\delta\varsigma=\frac{\partial \mathscr l }{\partial s}\delta \varsigma=-\frac{\partial \mathscr l }{\partial s}{\rm div}(\varsigma\xi)+ j_s\cdot\nabla(\delta\gamma+(\xi,\nabla)\gamma)\\ =&\varsigma d\frac{\partial \mathscr l }{\partial s}\cdot\xi-{\rm div}\left(\varsigma \frac{\partial \mathscr l }{\partial s}\xi\right)-{\rm div} j_s\delta\gamma+{\rm div}\left( j_s\delta\gamma\right)-{\rm div} j_s \xi\cdot\nabla\gamma+{\rm div}\left( j_s \xi\cdot\nabla\gamma\right), 
\end{aligned}
\]
we integrate by parts in spacetime, collect the independent variations, and obtain in $\Omega_\pm$:
\begin{align*}
\xi_\pm:&\quad
\partial_t\frac{\partial\mathscr l}{\partial u_\pm}
+\mathscr L_{u_\pm}\frac{\partial\mathscr l}{\partial u_\pm}
=
-\rho_\pm\nabla D_t w_\pm
-\varsigma_\pm\nabla D_t\gamma_\pm
-{\rm div}j_s^\pm\nabla\gamma_\pm
+{\rm div}j_s^\pm\nabla\gamma_\pm+(s_\pm-\varsigma_\pm)\nabla\frac{\partial\mathscr l}{\partial s_\pm},\\
\delta\rho_\pm:&\quad
D_t w_\pm=-\frac{\partial\mathscr l}{\partial\rho_\pm},\\
\delta w_\pm:&\quad
\bar D_t\rho_\pm=0,\\
\delta s_\pm:&\quad
D_t\gamma_\pm=-\frac{\partial\mathscr l}{\partial s_\pm},\\
\delta\gamma_\pm:&\quad
\bar D_t(s_\pm-\varsigma_\pm)+{\rm div}j_s^\pm=0.
\end{align*}

The momentum equation reduces to
\[
\partial_t\frac{\partial\mathscr l}{\partial u_\pm}
+\mathscr L_{u_\pm}\frac{\partial\mathscr l}{\partial u_\pm}
=
\rho_\pm d\frac{\partial\mathscr l}{\partial\rho_\pm}
+
s_\pm d\frac{\partial\mathscr l}{\partial s_\pm},
\]
which gives \eqref{genEulerLag_thermo}.

\medskip

\noindent\textit{Boundary conditions on $S_\pm$.}
For both fixed- and free-boundary cases, we obtain
$$
j_s^\pm\cdot n_x^\pm\delta\gamma=0,\qquad \text{on }S_\pm.
$$
Since $\delta\gamma$ is is arbitrary, it follows that the insulated boundary condition
$j_s^\pm\cdot n=0$ holds.

For fluids in a fixed domain, along $S_\pm$ one has 
\[
n_t^\pm = -\,u_\pm\cdot n_x^\pm = 0,
\qquad 
\xi_\pm\cdot n_x^\pm = 0 .
\]
Hence the boundary is stationary and the admissible variations are tangent to it. 
Under these conditions, no additional boundary conditions on $p_\pm=\mathscr l_\pm
-\rho_\pm\frac{\partial\mathscr l}{\partial\rho_\pm}
-s_\pm\frac{\partial\mathscr l}{\partial s_\pm}$ arise.
%the above equation is automatically satisfied.

For free-boundary fluids, along $S_\pm$, since $n_t^\pm=-u_\pm\!\cdot\! n_x^\pm$, we have
\[
\rho_\pm\xi_\pm\!\cdot\!\nabla w_\pm n_t^\pm
+\rho_\pm\xi_\pm\!\cdot\!\nabla w_\pm u_\pm\cdot n_x^\pm=0,
\]
\[
(s_\pm-\varsigma_\pm)\xi_\pm\!\cdot\!\nabla\gamma_\pm n_t^\pm
+(s_\pm-\varsigma_\pm)\xi_\pm\!\cdot\!\nabla\gamma_\pm u_\pm\cdot n_x^\pm=0,
\]
and
\[
\frac{\partial\mathscr l}{\partial u_\pm}\!\cdot\!\xi_\pm n_t^\pm
+
\frac{\partial\mathscr l}{\partial u_\pm}(\xi_\pm) u_\pm\cdot
n_x^\pm
=0.
\]

The remaining terms reduce to
\[
\left(
\mathscr l_\pm
-\rho_\pm\frac{\partial\mathscr l}{\partial\rho_\pm}
-s_\pm\frac{\partial\mathscr l}{\partial s_\pm}
\right)\xi_\pm\cdot n_x^\pm
+
j_s^\pm\cdot n_x^\pm\nabla\gamma_\pm\!\cdot\!\xi_\pm=0,
\]
which gives the boundary condition \eqref{genBounCon_rho_s} since $j_s^\pm\cdot n_x^\pm=0$ on $S_\pm$.

\medskip

\noindent\textit{Rankine–Hugoniot conditions.} Assume $\delta w_+=\delta w_-=\delta w$ on $\Gamma$.  
Boundary terms yield
\[
\rho_+ n_t^+ + \rho_+ u_+\cdot n_x^+
=
\rho_- n_t^- + \rho_- u_-\cdot n_x^-.
\]
Since on $\Gamma$,
\[
n_t^+=-v_s=-n_t^-,
\qquad
n_x^+=n=-n_x^-,
\]
we obtain the mass Rankine–Hugoniot condition \eqref{eq:RH_mass_rho_s}.

Similarly, continuity of $\delta\gamma$ gives
\[
(s_+-\varsigma_+)n_t^+
+(s_+-\varsigma_+)u_+\cdot n_x^+
+j_s^+\cdot n_x^+
=
(s_--\varsigma_-)n_t^-
+(s_--\varsigma_-)u_-\cdot n_x^-
+j_s^-\cdot n_x^-,
\]
which yields the entropy jump condition \eqref{RH_entropy}.

For variations $\xi_\pm$ on $\Gamma$, using
\[
n_t^+=-v_s,\qquad n_x^+=n,
\]
we obtain
$$
\begin{aligned}
v_s\frac{\partial\mathscr l_\pm}{\partial u_\pm}(\cdot)=&\left(\frac{\partial\mathscr l_\pm}{\partial u_\pm}\otimes u_\pm-\varsigma_\pm\frac{\partial\mathscr l_\pm}{\partial s_\pm}\delta\right)\cdot n-v_s\rho_\pm\nabla w_\pm(\cdot)+\rho_\pm\nabla w_\pm(\cdot) u_\pm \cdot n\\
&+j_s^\pm \cdot n\nabla \gamma_\pm(\cdot)-v_s(s_\pm-\varsigma_\pm)\nabla \gamma_\pm(\cdot) +(s_\pm-\varsigma_\pm)\nabla \gamma_\pm(\cdot) u_\pm \cdot n.   
\end{aligned}
$$
Then, for any $\xi^s=\xi_s\cdot n$ on $\Gamma$, we get $\mathscr l_++\rho_+D_tw_++(s_+ - \varsigma_+)D_t\gamma_+=\mathscr l_-+\rho_-D_tw_-+(s_- - \varsigma_-)D_t\gamma_-$. 
Together with the RH conditions for mass and entropy,
these extra terms cancel, giving the momentum RH condition
\eqref{genLagRH_rho_s}.
\end{proof}

\subsection{Global energy balance and conservation across a shock}

In the smooth regions, the total energy balance for a fluid with Lagrangian $\mathscr l(u,\rho,s)$ takes the general form 
\begin{equation}\label{eq:baro_energy_thermo}
\partial_t E+\nabla \cdot\left((E+p) u\right)
=-\nabla\cdot\left( j_s T\right),
\end{equation}
where 
\[
E=\frac{\partial \mathscr l}{\partial u}\cdot u - \mathscr l, \qquad p=\mathscr l-\rho\frac{\partial\mathscr l}{\partial\rho}
-s\frac{\partial\mathscr l}{\partial s},
\qquad 
T=-\frac{\partial \mathscr l}{\partial s}.
\]
As earlier, for each side of the shock, we define the total energy by
\[
\mathscr E_\pm(t)=\int_{M_\pm(t)}E_\pm\,dx,
\qquad
\mathscr E(t)=\mathscr E_+(t)+\mathscr E_-(t),
\]
and the extension of Proposition \ref{prop:energy} to the presence of entropy flux is as follows:

\begin{proposition}\label{prop:energy_thermo} The time derivative of the total energy satisfies
\begin{equation}\label{time_change_E_thermo}
\frac{d}{dt}\mathscr E=\int_{\Gamma_t} v_s\left[\!\left[E\right]\!\right]-\left[\!\left[(E+p) u+T j_s \right]\!\right]\cdot n \,{\rm d}a-\int_{\partial M} p u\cdot n \,{\rm d}a.
%-\int_{\partial M} T j_s\cdot  n.
\end{equation}
\end{proposition}
\begin{proof}
Applying Reynolds' transport theorem on each moving domain $M_\pm(t)$ and using \eqref{eq:baro_energy_thermo}, we obtain
\[
\frac{d}{dt}\mathscr E_\pm
=\int_{\partial M_\pm}
E_\pm u^s\cdot n_\pm \,{\rm d}a
-\int_{\partial M_\pm}(E_\pm+p_\pm)u_\pm\cdot n_\pm \,{\rm d}a
-\int_{\partial M_\pm}T_\pm j_s^\pm\cdot n_\pm \,{\rm d}a .
\]
On $S_\pm$ we have $u^s\cdot n_\pm=u_\pm\cdot n_\pm$, while on $\Gamma$ we use $n=n_+=-n_-$.  
Summing the two sides and using $j_s\cdot n=0$ on $\partial M$ yields \eqref{time_change_E_thermo}.
\end{proof}

\begin{lemma}\label{lem:RH_E_lambda_thermo}
The critical action principle \eqref{genLagrangian_thermo}--\eqref{Constraints_js} yields the Rankine--Hugoniot condition for energy:
\begin{equation}\label{RH_E_lambda_thermo}
v_s\left[\!\left[E\right]\!\right]-\left[\!\left[(E+p) u+T j_s\right]\!\right]\cdot n=0.
\end{equation}
\end{lemma}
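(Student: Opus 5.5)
We follow the same strategy as in the proof of Lemma~\ref{lem:RH_E_rho_s}, now keeping track of the entropy-flux contributions. Substituting $E=\frac{\partial \mathscr l}{\partial u}\cdot u-\mathscr l$, $p=\mathscr l-\rho\frac{\partial\mathscr l}{\partial\rho}-s\frac{\partial\mathscr l}{\partial s}$ and $T=-\frac{\partial\mathscr l}{\partial s}$, the left-hand side of \eqref{RH_E_lambda_thermo} becomes
\[
v_s\left[\!\left[\frac{\partial \mathscr l}{\partial u}\cdot u-\mathscr l\right]\!\right]-\left[\!\left[\left(\frac{\partial \mathscr l}{\partial u}\cdot u-\rho\frac{\partial \mathscr l}{\partial \rho}-s\frac{\partial \mathscr l}{\partial s}\right)u-\frac{\partial\mathscr l}{\partial s}j_s\right]\!\right]\cdot n .
\]
The first step is to use the interface identity obtained in the proof of Theorem~\ref{RHvar_thermo} for variations $\xi_\pm$ on $\Gamma$, the one containing the extra term $j_s^\pm\cdot n\,\nabla\gamma_\pm(\cdot)$. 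We evaluate it on $u_\pm$ and take the jump. This expresses $v_s[\![\frac{\partial\mathscr l}{\partial u}\cdot u]\!]$ through the following quantities:
\begin{itemize}
\item $[\![(\frac{\partial\mathscr l}{\partial u}\cdot u-\varsigma\frac{\partial\mathscr l}{\partial s})u]\!]\cdot n$;
\item the $w$-terms, already treated in Lemma~\ref{lem:RH_E_rho_s};
\item the $\gamma$-terms $-v_s[\![(s-\varsigma)u\cdot\nabla\gamma]\!]+[\![(s-\varsigma)u\cdot\nabla\gamma\,u]\!]\cdot n$;
\item the new term $[\![j_s\cdot n\,u\cdot\nabla\gamma]\!]$.
\end{itemize}

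Next we use the normal-variation condition $[\![-\mathscr l]\!]=[\![\rho D_tw+(s-\varsigma)D_t\gamma]\!]$ to replace $v_s[\![-\mathscr l]\!]$. The $w$-block then cancels exactly as in Lemma~\ref{lem:RH_E_rho_s}, using the mass jump condition, the continuity of $\partial_tw$ and $\nabla w$, and $D_tw=-\partial_\rho\mathscr l$.

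For the $\gamma$-block we repeat the same manipulation. By continuity of $\partial_t\gamma$ and $\nabla\gamma$,
\[
v_s[\![(s-\varsigma)D_t\gamma]\!]-v_s[\![(s-\varsigma)u\cdot\nabla\gamma]\!]=\partial_t\gamma\, v_s[\![s-\varsigma]\!].
\]
Now we invoke the modified entropy jump condition \eqref{RH_entropy}, giving $v_s[\![s-\varsigma]\!]=[\![(s-\varsigma)u+j_s]\!]\cdot n$. Combined with $[\![(s-\varsigma)u\cdot\nabla\gamma\,u]\!]\cdot n$ and the new term $[\![j_s\cdot n\,u\cdot\nabla\gamma]\!]$, we obtain $[\![(s-\varsigma)D_t\gamma\,u+j_sD_t\gamma]\!]\cdot n$. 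Since $D_t\gamma=-\frac{\partial\mathscr l}{\partial s}$ (the $\delta s$ bulk equation, extended by continuity to the traces), this equals
\[
-[\![(s-\varsigma)\tfrac{\partial\mathscr l}{\partial s}u]\!]\cdot n-[\![\tfrac{\partial\mathscr l}{\partial s}j_s]\!]\cdot n .
\]
The first piece cancels against $[\![s\frac{\partial\mathscr l}{\partial s}u]\!]\cdot n-[\![\varsigma\frac{\partial\mathscr l}{\partial s}u]\!]\cdot n$ coming from $p$ and from the $\varsigma$-term of the interface identity. The second piece cancels the $+[\![\frac{\partial\mathscr l}{\partial s}j_s]\!]\cdot n=-[\![Tj_s]\!]\cdot n$ contribution in the energy flux. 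All terms therefore vanish, which gives \eqref{RH_E_lambda_thermo}.

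The main obstacle is the sign and bookkeeping of the $j_s$ terms. We need $j_s\cdot n\,\nabla\gamma\cdot u$ from the momentum interface identity and the $j_s$ in \eqref{RH_entropy} to combine into $j_sD_t\gamma$, and this must match $-Tj_s$ with the correct sign. We will check this first by carefully re-deriving the boundary terms of the constraint $\frac{\partial\mathscr l}{\partial s}\bar D_\delta\varsigma=j_s\cdot\nabla D_\delta\gamma$ on $\Gamma$, in particular the term ${\rm div}(j_s\,\xi\cdot\nabla\gamma)$, and the convention $n_x^\pm=\pm n$.
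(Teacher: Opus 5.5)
Your proposal is correct and follows essentially the same route as the paper's proof. Both substitute $E$, $p$ and $T=-\partial_s\mathscr l$, evaluate the $\xi_\pm$ interface identity (with its extra $j_s^\pm\cdot n\,\nabla\gamma_\pm(\cdot)$ term) on $u_\pm$ and take the jump, replace $v_s[\![-\mathscr l]\!]$ via the normal-variation condition, and cancel the $w$-block and $\gamma$-block using the mass and modified entropy jump conditions, continuity of $\partial_t w,\partial_t\gamma$, and $D_tw=-\partial_\rho\mathscr l$, $D_t\gamma=-\partial_s\mathscr l$, with the leftover $[\![D_t\gamma\,j_s]\!]\cdot n$ absorbing the $Tj_s$ flux; the sign check you flag works out, since the boundary term of ${\rm div}(j_s\,\xi\cdot\nabla\gamma)$ contributes $+\,j_s\cdot n_x^\pm\,\xi\cdot\nabla\gamma$, which with $n_x^+=n$ is exactly the $+\,j_s\cdot n\,\nabla\gamma(\cdot)$ term you use.
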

\begin{proof}
Using
\[
E=\frac{\partial\mathscr l}{\partial u}\cdot u-\mathscr l,
\qquad
p=\mathscr l-\rho\frac{\partial\mathscr l}{\partial\rho}
-s\frac{\partial\mathscr l}{\partial s},
\]
we get,
$$
\begin{aligned}
&v_s\left[\!\left[E\right]\!\right]-\left[\!\left[(E+p) u+T j_s\right]\!\right]\cdot n\\
&= v_s\left[\!\left[\frac{\partial \mathscr l}{\partial  u}\cdot u-\mathscr l\right]\!\right]-\left[\!\left[\left(\frac{\partial \mathscr l}{\partial  u}\cdot u- \rho\frac{\partial \mathscr l}{\partial \rho}-s\frac{\partial \mathscr l}{\partial s}\right) u+T j_s\right]\!\right]\cdot n.
\end{aligned}
$$
By substituting the jump relations obtained in Theorem \ref{RHvar_thermo}, we get
$$
\begin{aligned}
v_s\left[\!\left[\frac{\partial \mathscr l}{\partial  u}\cdot u\right]\!\right]=&\left[\!\left[\left(\frac{\partial \mathscr l}{\partial  u}\cdot u-\varsigma\frac{\partial \mathscr l}{\partial s}\right) u\right]\!\right]\cdot n-\left[\!\left[\rho  u\cdot\nabla w\right]\!\right]v_s+\left[\!\left[\rho  u\cdot\nabla w  u\right]\!\right]\cdot  n\\
&+\left[\!\left[ u \cdot \nabla\gamma\; j_s\right]\!\right]\cdot n-\left[\!\left[(s-\varsigma) u \cdot \nabla\gamma\right]\!\right]v_s+\left[\!\left[(s-\varsigma) u \cdot \nabla\gamma\; u\right]\!\right]\cdot  n.
\end{aligned}
$$
Using the continuity across $\Gamma$ of
$\partial_t w,\, \nabla w,\, 
\partial_t\gamma,\,\nabla\gamma,$
all terms reorganize into fluxes of the form
\[
\left[\!\left[\rho D_t w\,u\right]\!\right]\cdot n,
\qquad
\left[\!\left[(s-\varsigma)D_t\gamma\,u\right]\!\right]\cdot n,
\qquad
\left[\!\left[D_t\gamma\, j_s\right]\!\right]\cdot n,
\]
which cancel identically by the Rankine–Hugoniot conditions derived from the variational principle:
$$
0=v_s\left[\!\left[\rho D_t w\right]\!\right]+\left[\!\left[\rho\frac{\partial \mathscr l}{\partial \rho} u\right]\!\right]\cdot n-\left[\!\left[\rho  u\cdot\nabla w\right]\!\right]v_s+\left[\!\left[\rho  u\cdot\nabla w  u\right]\!\right]\cdot  n
$$
and similarly, 
$$
v_s\left[\!\left[(s-\varsigma) D_t \gamma\right]\!\right]+\left[\!\left[(s-\varsigma)\frac{\partial \mathscr l}{\partial s} u\right]\!\right]\cdot n+\left[\!\left[ u \cdot \nabla\gamma\; j_s\right]\!\right]\cdot n-\left[\!\left[(s-\varsigma) u \cdot \nabla\gamma\right]\!\right]v_s+\left[\!\left[(s-\varsigma) u \cdot \nabla\gamma\; u\right]\!\right]\cdot  n=\left[\!\left[D_t\gamma\; j_s\right]\!\right]\cdot n.
$$

All contributions therefore cancel, and hence $v_s\left[\!\left[E\right]\!\right]-\left[\!\left[(E+p) u+T j_s\right]\!\right]\cdot n=0$ as claimed.
\end{proof}

From this lemma, the following result follows immediately.

\begin{theorem}
The Euler–Lagrange equations associated with the variational principle \eqref{genLagrangian_thermo}--\eqref{Constraints_js} imply conservation of energy, i.e.,
\begin{equation}\label{time_change_E_lambda_thermo}
\frac{d}{dt}\mathscr E=0.
\end{equation}
\end{theorem}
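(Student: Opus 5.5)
The argument mirrors the proof of the analogous conservation theorem in \S\ref{sec_5}, with the heat flux included. The plan is to start from Proposition~\ref{prop:energy_thermo}, which expresses $\frac{d}{dt}\mathscr E$ as a shock-surface integral plus the boundary term $-\int_{\partial M} p\,u\cdot n\,{\rm d}a$. That proposition already uses the insulated condition $j_s\cdot n=0$ on $\partial M$, which Theorem~\ref{RHvar_thermo}(2) guarantees for both fixed and free boundaries. It therefore remains to dispose of the pressure boundary term and then of the shock-surface integrand.

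For the boundary term we distinguish the two geometric settings.
\begin{itemize}
\item For a fluid in a fixed domain, $u\cdot n=0$ on $\partial M$ by the choice of configuration manifold (Theorem~\ref{RHvar_rho_s}(2), invoked in Theorem~\ref{RHvar_thermo}), so the integrand $p\,u\cdot n$ vanishes.
\item For a free-boundary fluid, the natural boundary condition \eqref{genBounCon_rho_s} gives $p=\mathscr l-\rho\frac{\partial\mathscr l}{\partial\rho}-s\frac{\partial\mathscr l}{\partial s}=0$ on $S_\pm$. Its derivation in the proof of Theorem~\ref{RHvar_thermo} already used $j_s^\pm\cdot n=0$.
\end{itemize}
In either case $\int_{\partial M} p\,u\cdot n\,{\rm d}a=0$. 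Here $\partial M=S_+\cup S_-$ up to the measure-zero set $\Gamma\cap\partial M$, which does not affect the integral.

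What remains is
\[
\frac{d}{dt}\mathscr E=\int_{\Gamma_t}\Big(v_s\left[\!\left[E\right]\!\right]-\left[\!\left[(E+p)u+Tj_s\right]\!\right]\cdot n\Big)\,{\rm d}a .
\]
Lemma~\ref{lem:RH_E_lambda_thermo} states that this integrand vanishes pointwise on $\Gamma_t$ for critical points of \eqref{genLagrangian_thermo}--\eqref{Constraints_js}. Hence $\frac{d}{dt}\mathscr E=0$.

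The only step needing care is consistency of conventions. Proposition~\ref{prop:energy_thermo} is stated with $T=-\partial\mathscr l/\partial s$ and the bulk balance \eqref{eq:baro_energy_thermo}, and the sign of the flux $Tj_s$ must match the one appearing in Lemma~\ref{lem:RH_E_lambda_thermo}. Both use the same expression $v_s[\![E]\!]-[\![(E+p)u+Tj_s]\!]\cdot n$, so I expect no obstacle beyond confirming that \eqref{eq:baro_energy_thermo} follows from the bulk equations \eqref{genEulerLag_thermo}. That check is a routine computation: contract the momentum equation with $u$ and use the entropy equation multiplied by $\partial\mathscr l/\partial s$.
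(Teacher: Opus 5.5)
Your proposal is correct and follows the paper's proof: the boundary term $-\int_{\partial M} p\,u\cdot n\,{\rm d}a$ vanishes, Proposition~\ref{prop:energy_thermo} reduces $\frac{d}{dt}\mathscr E$ to the shock-surface integral, and Lemma~\ref{lem:RH_E_lambda_thermo} shows that its integrand vanishes. Your separate treatment of fixed and free boundaries, and your check that \eqref{eq:baro_energy_thermo} follows from the bulk equations, only make explicit what the paper asserts without comment.
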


\begin{proof}
Note that the boundary contribution in \eqref{time_change_E_thermo} is
\begin{equation}\label{E_cons_js}
-\int_{\partial M} p u\cdot n \,{\rm d}a
=0.
\end{equation}
Using Proposition~\ref{prop:energy_thermo}, the time derivative of the total energy reduces to an integral over the shock surface $\Gamma_t$,
\[
\frac{d}{dt}\mathscr E=\int_{\Gamma_t} v_s\left[\!\left[E\right]\!\right]-\left[\!\left[(E+p) u+T j_s\right]\!\right]\cdot n\,{\rm d}a.
\]
Hence by Lemma~\ref{lem:RH_E_lambda_thermo}, we get \eqref{E_cons_js}.
\end{proof}

\section{Conclusion}

In this paper, we have developed an extension of Hamilton’s principle
\begin{equation}\label{HP_concl}
\delta \int_{t_0}^{t_1} L(q,\dot q)\, {\rm d}t = 0,
\end{equation}
which is classically known to yield smooth solutions of compressible fluid models, to the setting of flows exhibiting shock discontinuities. The proposed variational framework produces, in a unified manner, the governing equations in the bulk, the associated boundary conditions, and the Rankine--Hugoniot jump conditions across discontinuities, for both fixed- and free-boundary configurations.

We have considered two distinct physical settings. First, for barotropic fluids, where the absence of an entropy variable leads to energy dissipation at shocks, we incorporate this loss through a dissipative mechanism. Second, for the full compressible Euler system, shock effects are instead captured through entropy production, reflecting the conversion of dissipated energy into internal energy. Our framework also naturally accommodates irreversible processes in the bulk, such as heat conduction.

These two settings are treated via different extensions of Hamilton’s principle:
\begin{itemize}
    \item[(1)] In the barotropic case, the action functional is augmented by a dissipative potential that encodes the energy loss at the shock. The contribution of this potential is localized on the shock surface. Formally, \eqref{HP_concl} is replaced by
    \[
    \delta \int_{t_0}^{t_1} \bigl( L(q,\dot q) + \mathcal{V}(q) \bigr)\, {\rm d}t = 0,
    \]
    for a suitable choice of dissipative potential $\mathcal{V}$.

    \item[(2)] In the full compressible case, we employ a variational formulation of nonequilibrium thermodynamics, which captures the mechanism by which energy dissipated at shocks is converted into internal energy, leading to entropy production. This results in a modified variational principle of the form
    \[
    \delta \int_0^T \Bigl( L(q,\dot q, S) + (S - \Sigma)\,\dot{\Gamma} \Bigr)\, {\rm d}t = 0,
    \]
    subject to the phenomenological and variational constraints
    \begin{equation}%\label{constraints}
    \frac{\partial L}{\partial S}\,\delta \Sigma
    =
    \langle F, \delta q\rangle,
    \qquad
    \frac{\partial L}{\partial S}\,\dot \Sigma
    =
    \langle F, \dot q\rangle,
    \end{equation}
    for a suitable dissipative force $F$.
\end{itemize}

The approach developed here provides a flexible and systematic framework applicable to a broad class of fluid models. Owing to its Lagrangian structure, it can be naturally extended to include additional advected variables and more complex physical effects. These features make it a promising foundation for further developments, both in modeling and in the design of structure-preserving numerical methods, which will be explored in future work.

\paragraph{Acknowledgements.} Fran\c{c}ois Gay-Balmaz was supported by a startup grant from Nanyang Technological University. Fran\c{c}ois Gay-Balmaz and Cheng Yang were supported by the National Research Foundation Singapore (NRF) core funding `Fusion Science for Clean Energy'.

\begin{appendices}\label{App}

\section{Background on Rankine–Hugoniot conditions and weak solutions of conservation Laws} \label{sec: lemma}

In the literature, the Rankine--Hugoniot relation are usually derived from the conservation laws in the one dimensional case. In \S\ref{A1} we prove these relations in higher dimension which is more suitable for our study. In \S\ref{A2}, we recall the definition of weak solutions, see \cite{evan}, which are helpful for the proofs of our main theorems.

\subsection{The Rankine--Hugoniot conditions and conservation laws}\label{A1}

The following lemma shows the relationship between conservation laws and Rankine--Hugoniot jump conditions across moving interfaces. Throughout this section, $M$ denotes a manifold, and $\Gamma \subset M$ is a codimension-one hypersurface separating $M$ into two open submanifolds $M_+$ and $M_-$. We write $[\![\cdot]\!]$ for the jump across $\Gamma$ and denote by $v_s$ the normal speed of $\Gamma$.
\begin{lemma}\label{lem:cons_law_RH}
Let $U$ be a vector-valued function and $F(U)$ be a matrix-valued function, both are smooth on $M_\pm$. Then the following statements are equivalent.

\medskip
\noindent
(i) For every fixed domain $D \subseteq M$,
\[
\frac{d}{dt}\int_D U \, {\rm d}x
=
-\int_{\partial D} F(U)\cdot n \, {\rm d}a .
\]

\medskip
\noindent
(ii) The pair $(U,F)$ satisfies
\[
\left\{
\begin{aligned}
&\partial_t U_\pm + \nabla \cdot F_\pm(U_\pm) = 0
\qquad \text{in } M_\pm,\\[0.3em]
&[\![F(U)]\!]\cdot n
=
[\![U]\!] \, v_s
\qquad \text{on } \Gamma .
\end{aligned}
\right.
\]
\end{lemma}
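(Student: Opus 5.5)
The equivalence will follow from a Reynolds transport computation on a fixed control volume $D$ that is cut by the moving interface $\Gamma(t)$. Write $D_\pm(t)=D\cap M_\pm(t)$. Then $\partial D_\pm(t)$ consists of a fixed part $\partial D\cap \overline{M_\pm(t)}$ and a moving part $D\cap\Gamma(t)$. On the moving part the boundary normal velocity is $v_s$ for $D_-$, whose outward normal is $n$, and $-v_s$ for $D_+$, whose outward normal is $-n$.

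Since $U$ is smooth on each side, the transport theorem applied to $D_\pm(t)$ gives
\[
\frac{d}{dt}\int_D U\,{\rm d}x=\int_{D_+}\partial_t U_+\,{\rm d}x+\int_{D_-}\partial_t U_-\,{\rm d}x-\int_{D\cap\Gamma}[\![U]\!]\,v_s\,{\rm d}a,
\]
with the convention $[\![U]\!]=U_+-U_-$. The divergence theorem on each $D_\pm$ gives
\[
\int_{\partial D}F\cdot n\,{\rm d}a=\int_{D_+}\nabla\cdot F_+\,{\rm d}x+\int_{D_-}\nabla\cdot F_-\,{\rm d}x-\int_{D\cap\Gamma}[\![F]\!]\cdot n\,{\rm d}a.
\]
Adding the two identities, (i) becomes equivalent to
\[
\sum_\pm\int_{D_\pm}\big(\partial_t U_\pm+\nabla\cdot F_\pm\big)\,{\rm d}x+\int_{D\cap\Gamma}\big([\![U]\!]v_s-[\![F]\!]\cdot n\big)\,{\rm d}a=0\qquad\text{for all } D.
\]

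For (ii)$\Rightarrow$(i), both integrands vanish under (ii), so the sum is zero for every $D$.

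For (i)$\Rightarrow$(ii), I localize in two steps. First, take $D$ contained in $M_+(t)$ or in $M_-(t)$, so that $D\cap\Gamma=\emptyset$ at the time considered. Then the volume integral of $\partial_tU_\pm+\nabla\cdot F_\pm$ vanishes over every small ball, so by continuity the bulk equation holds pointwise. Second, with the bulk terms now zero, only the interface integral remains, and it vanishes for every $D$. I then choose $D$ to be a small ball or thin pillbox centred at a point $x_0\in\Gamma(t)$. Dividing by the area of $D\cap\Gamma$ and shrinking $D$ gives the pointwise jump condition $[\![F]\!]\cdot n=[\![U]\!]v_s$ at $x_0$, provided the integrand is continuous along $\Gamma$. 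This holds if the one-sided traces are continuous.

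The main obstacle is technical regularity. I will assume that $U_\pm$ and $F_\pm$ extend continuously, with their derivatives, up to $\Gamma$ from each side, and that $\Gamma(t)$ moves smoothly. These assumptions justify the transport theorem on $D_\pm(t)$, whose boundaries are only piecewise smooth where $\partial D$ meets $\Gamma$; the corner set has measure zero in the boundary integrals. The sign conventions for $v_s$ and $n$, oriented from minus to plus, must be tracked carefully so that the interface terms combine into $[\![U]\!]v_s-[\![F]\!]\cdot n$ rather than its negative. Alternatively, one can work in spacetime, using the normal $(n_t,n_x)=(-v_s,n)$ on $\Sigma$ and the spacetime divergence theorem for the field $(U,F)$. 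This removes the moving-domain bookkeeping, and it matches the $n_t^\pm=\mp v_s$ conventions used in the proofs of Theorem~\ref{RHvar_modi} and Theorem~\ref{RHvar_rho_s}.
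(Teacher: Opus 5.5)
Your proposal follows essentially the same route as the paper: Reynolds transport and the divergence theorem on the two pieces $D_\pm$ of a fixed control volume cut by $\Gamma(t)$, localization inside $M_\pm$ for the bulk law and onto an arbitrary patch of $\Gamma$ for the jump law, and the converse read off from the same identity. One sign slip needs fixing: because the outward normal of $D_+$ along $D\cap\Gamma$ is $-n$, the divergence identity is $\int_{\partial D}F\cdot n\,{\rm d}a=\sum_\pm\int_{D_\pm}\nabla\cdot F_\pm\,{\rm d}x+\int_{D\cap\Gamma}[\![F]\!]\cdot n\,{\rm d}a$ (adding your two displays as written would give the wrong relation $[\![U]\!]v_s=-[\![F]\!]\cdot n$), so the combined identity should carry $-\int_{D\cap\Gamma}\big([\![U]\!]v_s-[\![F]\!]\cdot n\big)\,{\rm d}a$, and this overall sign on the interface term is harmless for your argument because you eliminate the bulk term first.
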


\begin{proof}
\noindent \textit{(i)} $\Rightarrow$ \textit{(ii)}.
Since $U_\pm$ and $F_\pm(U_\pm)$ are smooth in $M_\pm$, applying the divergence theorem on any fixed subdomain
$D_\pm \subseteq M_\pm$ at time $t$ yields
$$
\frac{d}{dt}\int_{D_\pm} U_\pm \, {\rm d}x
=
\int_{D_\pm} \partial_t U_\pm \, {\rm d}x,
\qquad
\int_{\partial D_\pm} F_\pm(U_\pm)\cdot n \, {\rm d}a
=
\int_{D_\pm} \nabla\cdot F_\pm(U_\pm) \, {\rm d}x.
$$
Thus $\partial_t U_\pm + \nabla\cdot F_\pm(U_\pm)=0$ in $M_\pm$.

Next, let $\mathfrak D$ be a small fixed domain intersecting $\Gamma$ along a smooth patch $\gamma$.
Decomposing $\mathfrak D=\mathfrak D_+\cup\mathfrak D_-$, we obtain
$$
\begin{aligned}
\frac{d}{dt}\int_{\mathfrak D} U\, {\rm d}x
&=\frac{d}{dt}\left(\int_{\mathfrak D_+} U_+\, {\rm d}x
+\int_{\mathfrak D_-} U_- \, dx\right)\\
&=\int_{\mathfrak D_+}\frac{\partial U_+}{\partial t}\, {\rm d}x
+\int_{\gamma} U_+\, u_s\cdot n_+\, {\rm d}a
+\int_{\mathfrak D_-}\frac{\partial U_-}{\partial t}\, {\rm d}x
+\int_{\gamma} U_-\, u_s\cdot n_-\, {\rm d}a\\
&=\int_{\mathfrak D_+}-\nabla\cdot F_+(U_+)\, {\rm d}x
+\int_{\mathfrak D_-}-\nabla\cdot F_-(U_-)\, {\rm d}x
+\int_{\gamma}[\![U]\!]\, v_s\, {\rm d}a\\
&=-\int_{\partial \mathfrak D_+}F_+(U_+)\cdot n_+\, {\rm d}a
-\int_{\partial \mathfrak D_-}F_-(U_-)\cdot n_-\, {\rm d}a
+\int_{\gamma}[\![U]\!]\, v_s\, {\rm d}a\\
&=-\int_{\partial \mathfrak D}F(U)\cdot n\, {\rm d}a
-\int_{\gamma}[\![F(U)]\!]\cdot n\, {\rm d}a
+\int_{\gamma}[\![U]\!]\, v_s\, {\rm d}a.
\end{aligned}
$$
Since
\(
\frac{d}{dt}\int_{\mathfrak D}U
=-\int_{\partial \mathfrak D}F(U)\cdot n,
\)
the integrand on $\gamma$ must vanish, yielding the Rankine--Hugoniot condition
$$
[\![F(U)]\!]\cdot n
=
[\![U]\!]\, v_s,
\qquad \text{on } \Gamma,
$$
since $\gamma$ is arbitrary in $\Gamma$.

\medskip

\noindent \textit{(ii)} $\Rightarrow$ \textit{(i)}.
If $D\subseteq M_\pm$, the identity follows directly from the divergence theorem,
$$
\frac{d}{dt}\int_D U \, {\rm d}x
=\int_D \frac{\partial U}{\partial t}\,{\rm d}x
=-\int_D \nabla\cdot F(U)\,{\rm d}x
=-\int_{\partial D}F(U)\cdot n\, {\rm d}a.
$$

If $D$ intersects $\Gamma$, the same decomposition as above shows that the Rankine--Hugoniot condition
exactly cancels the interface contribution, yielding the global conservation law
$$
\begin{aligned}
\frac{d}{dt}\int_D U \, {\rm d}x
&=\frac{d}{dt}\left(\int_{D_+} U_+ \, {\rm d}x
+\int_{D_-} U_- \, {\rm d}x\right)\\
&=\int_{D_+}\frac{\partial U_+}{\partial t}\,{\rm d}x
+\int_{\gamma} U_+\, v_s\cdot n_+\, {\rm d}a
+\int_{D_-}\frac{\partial U_-}{\partial t}\,{\rm d}x
+\int_{\gamma} U_-\, v_s\cdot n_-\, {\rm d}a\\
&=\int_{D_+}-\nabla\cdot F_+(U_+)\,{\rm d}x
+\int_{D_-}-\nabla\cdot F_-(U_-)\,{\rm d}x
+\int_{\gamma}[\![U]\!]\, v_s\, {\rm d}a\\
&=-\int_{\partial D_+}F_+(U_+)\cdot n_+\, {\rm d}a
-\int_{\partial D_-}F_-(U_-)\cdot n_-\, {\rm d}a
+\int_{\gamma}[\![U]\!]\, v_s\, {\rm d}a\\
&=-\int_{\partial D}F(U)\cdot n\, {\rm d}a
-\int_{\gamma}[\![F(U)]\!]\cdot n{\rm d}a
+\int_{\gamma}[\![U]\!]\, v_s\, {\rm d}a.
\end{aligned}
$$
Since
\(
[\![F(U)]\!]\cdot n
=
[\![U]\!]\, v_s
\)
on $\Gamma$, we obtain the conservation law (i).
\end{proof}

We now specialize to scalar conservation laws and moving domains.
\begin{lemma}\label{lem:cons_law_RH_2}
Suppose $\rho$ and $u$ satisfy
\[
\left\{
\begin{aligned}
&\partial_t \rho + \nabla\cdot(\rho u)=0
\qquad \text{in } M_\pm,\\
&[\![\rho u]\!]\cdot n
=
[\![\rho]\!]\, v_s
\qquad \text{on } \Gamma.
\end{aligned}
\right.
\]
Then the total mass in the time-dependent domain $M(t)$ is conserved:
\[
\frac{d}{dt}\int_{M(t)} \rho \, {\rm d}x = 0,
\]
provided that the boundary $\partial M(t)$ moves with the particle velocity.
\end{lemma}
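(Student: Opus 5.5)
The plan is to compute $\frac{d}{dt}\int_{M(t)}\rho\,{\rm d}x$ by splitting $M(t)=M_+(t)\cup M_-(t)$ along the shock $\Gamma(t)$. On each piece we apply the Reynolds transport theorem. This is the same computation used in the proof of Proposition~\ref{prop:energy} and in Lemma~\ref{lem:cons_law_RH}, except that the domains now move.

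The boundary of $M_\pm(t)$ consists of two parts. The portion $S_\pm(t)$ of the outer boundary $\partial M(t)$ moves with the fluid, so its normal velocity there is $u_\pm\cdot n_\pm$. The portion $\Gamma(t)$ moves with normal speed $v_s$ relative to $n=n_+=-n_-$. Transport therefore gives
\[
\frac{d}{dt}\int_{M_\pm(t)}\rho_\pm\,{\rm d}x=\int_{M_\pm(t)}\partial_t\rho_\pm\,{\rm d}x+\int_{S_\pm(t)}\rho_\pm u_\pm\cdot n_\pm\,{\rm d}a\pm\int_{\Gamma(t)}\rho_\pm v_s\,{\rm d}a .
\]

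Next we substitute the bulk equation $\partial_t\rho_\pm=-\nabla\cdot(\rho_\pm u_\pm)$ and apply the divergence theorem on $M_\pm(t)$. This produces
\[
-\int_{S_\pm(t)}\rho_\pm u_\pm\cdot n_\pm\,{\rm d}a\mp\int_{\Gamma(t)}\rho_\pm u_\pm\cdot n\,{\rm d}a .
\]
The $S_\pm$ contributions cancel exactly against the transport boundary terms, precisely because $\partial M(t)$ moves with the particle velocity. Summing over $\pm$ leaves
\[
\frac{d}{dt}\int_{M(t)}\rho\,{\rm d}x=\int_{\Gamma(t)}\bigl(v_s[\![\rho]\!]-[\![\rho u]\!]\cdot n\bigr)\,{\rm d}a ,
\]
which vanishes by the assumed jump condition.

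The only delicate point is the bookkeeping of orientations and signs on $\Gamma$. One has to check that $n_-=-n$ flips the sign of both the transport term and the flux term consistently, with the jump convention $[\![f]\!]=f_+-f_-$ as used throughout. One also needs mild regularity: $\rho_\pm,u_\pm$ should be $C^1$ up to $\Gamma$ and $\partial M$, so that traces exist and the transport and divergence theorems apply. Where $\Gamma$ meets $\partial M$, the codimension-two corner contributes nothing to the integrals.
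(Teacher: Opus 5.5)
Your proposal is correct and follows the paper's proof essentially step for step. Like the paper, you split $M(t)$ into $M_\pm(t)$, apply Reynolds' transport theorem on each piece, substitute the continuity equation and use the divergence theorem, and cancel the outer-boundary terms because $\partial M(t)$ moves with the fluid; the remaining interface integral $\int_{\Gamma}\bigl(v_s[\![\rho]\!]-[\![\rho u]\!]\cdot n\bigr)\,{\rm d}a$ then vanishes by the jump condition. Your sign bookkeeping with $n=n_+=-n_-$ matches the convention the paper uses in Proposition~\ref{prop:energy}.
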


\begin{proof}
We have
$$
\begin{aligned}
\frac{d}{dt}\int_{M(t)}\rho\,{\rm d}x
&=\frac{d}{dt}\left(\int_{M_+}\rho_+\,{\rm d}x+\int_{M_-}\rho_-\,{\rm d}x\right)\\
&=\int_{M_+}\frac{\partial \rho_+}{\partial t}\,dx
+\int_{\Gamma}\rho_+ v_s\cdot n_+\,{\rm d}a
+\int_{M_-}\frac{\partial \rho_-}{\partial t}\,{\rm d}x
+\int_{\Gamma}\rho_- v_s\cdot n_-\,{\rm d}a
+\int_{\partial M}\rho\, u\cdot n\,{\rm d}a\\
&=\int_{M_+}-\nabla\cdot(\rho_+ u_+)\,{\rm d}x
+\int_{M_-}-\nabla\cdot(\rho_- u_-)\,{\rm d}x
+\int_{\Gamma}[\![\rho]\!]v_s\,{\rm d}a
+\int_{\partial M}\rho\, u\cdot n\,{\rm d}a\\
&=-\int_{\partial M_+}\rho_+ u_+\cdot n_+\,{\rm d}a
-\int_{\partial M_-}\rho_- u_-\cdot n_-\,{\rm d}a
+\int_{\Gamma}[\![\rho]\!]v_s\,{\rm d}a
+\int_{\partial M}\rho\, u\cdot n\,{\rm d}a\\
&=-\int_{\partial M}\rho\, u\cdot n\,{\rm d}a
-\int_{\Gamma}[\![\rho u]\!]\cdot n\,{\rm d}a
+\int_{\Gamma}[\![\rho]\!]v_s\,{\rm d}a
+\int_{\partial M}\rho\, u\cdot n\,{\rm d}a.
\end{aligned}
$$
Since
\(
[\![\rho u]\!]\cdot n=[\![\rho]\!]v_s
\)
on $\Gamma$, we obtain
\(
\frac{d}{dt}\int_{M(t)}\rho\,{\rm d}x=0.
\)
\end{proof}

Conversely, conservation of mass for arbitrary material subdomains implies both the continuity equation and the Rankine--Hugoniot condition.
\begin{lemma}\label{lem:cons_law_RH_3}
Let $\rho$ be a scalar function such that
\[
\frac{d}{dt}\int_{D(t)} \rho \, {\rm d}x = 0
\]
for every material domain $D(t)\subseteq M$ whose boundary moves with the particle velocity. Then
\[
\left\{
\begin{aligned}
&\partial_t \rho + \nabla\cdot(\rho u)=0
\qquad \text{in } M_\pm,\\
&[\![\rho u]\!]\cdot n
=
[\![\rho]\!]\, v_s
\qquad \text{on } \Gamma.
\end{aligned}
\right.
\]
\end{lemma}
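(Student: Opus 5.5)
The plan is to run the proof of Lemma \ref{lem:cons_law_RH} in the opposite direction, with material (moving) control volumes replacing fixed ones. The first step is the bulk equation. Take a material domain $D(t)$ lying entirely in $M_+$ or in $M_-$ on a short time interval; this is possible because $\Gamma(t)$ moves smoothly and $D(t)$ can be chosen small and away from it. Since $\rho$ is smooth there, Reynolds' transport theorem gives
\[
0=\frac{d}{dt}\int_{D(t)}\rho\,{\rm d}x=\int_{D(t)}\partial_t\rho\,{\rm d}x+\int_{\partial D(t)}\rho\,u\cdot n\,{\rm d}a=\int_{D(t)}\big(\partial_t\rho+\nabla\cdot(\rho u)\big)\,{\rm d}x .
\]
The domain $D(t)$ is arbitrary and the integrand is continuous. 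Shrinking $D(t)$ to a point at a fixed time therefore yields $\partial_t\rho_\pm+\nabla\cdot(\rho_\pm u_\pm)=0$ in $M_\pm$.

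The second step is the jump condition. Take a small material domain $\mathfrak D(t)$ that meets $\Gamma(t)$ along a patch $\gamma(t)$, and split it as $\mathfrak D_\pm(t)=\mathfrak D(t)\cap M_\pm(t)$. The pieces $\mathfrak D_\pm$ are not material, because part of their boundary is $\gamma(t)$, which moves with normal speed $v_s$. Transport therefore applies with boundary velocity $u$ on $\partial\mathfrak D\cap M_\pm$ and normal velocity $v_s$ on $\gamma$ (with $n_+=n$, $n_-=-n$). Repeating the computation of Lemma \ref{lem:cons_law_RH_2} gives
\[
\frac{d}{dt}\int_{\mathfrak D(t)}\rho\,{\rm d}x=\sum_\pm\int_{\mathfrak D_\pm}\big(\partial_t\rho_\pm+\nabla\cdot(\rho_\pm u_\pm)\big)\,{\rm d}x+\int_{\gamma}\Big([\![\rho]\!]\,v_s-[\![\rho u]\!]\cdot n\Big)\,{\rm d}a .
\]
Here the divergence theorem on each $\mathfrak D_\pm$ yields an outer flux $\int_{\partial\mathfrak D}\rho u\cdot n$. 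That flux cancels against the transport term on the material part of the boundary, leaving only the interface terms shown.

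By step one the bulk integrals vanish, and by hypothesis the left-hand side is zero. Hence $\int_\gamma\big([\![\rho]\!]v_s-[\![\rho u]\!]\cdot n\big)\,{\rm d}a=0$ for every patch $\gamma$ that can be cut out by such a material domain. Continuity of the traces $\rho_\pm,u_\pm$ on $\Gamma$ then lets us localize by shrinking $\gamma$, which gives $[\![\rho u]\!]\cdot n=[\![\rho]\!]v_s$ pointwise.

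The main obstacle is justifying this localization with material domains. A fluid parcel does not stay attached to a given patch of $\Gamma$: particles cross the shock, so $\gamma(t)$ changes in time. To handle this, I would apply the identity only at a single instant $t$, where a domain $\mathfrak D(t)$ can be prescribed freely and then transported by the flow. The identity is pointwise in time, and the patch $\gamma(t)$ at that instant can be any small open subset of $\Gamma(t)$. I would also assume that the traces of $\rho_\pm$ and $u_\pm$ on $\Gamma$ exist and are continuous, as is implicit throughout the appendix.
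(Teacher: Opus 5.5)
Your proposal is correct and follows the paper's proof essentially step for step. You use Reynolds' transport on material domains inside $M_\pm$ to get the bulk continuity equation. You then take a material domain cut by $\Gamma$ and split it into the non-material pieces $\mathfrak D_\pm$, so the outer flux cancels and $\int_\gamma([\![\rho]\!]v_s-[\![\rho u]\!]\cdot n)\,{\rm d}a$ must vanish for arbitrary patches $\gamma$; your remarks on localizing at a fixed instant and on continuous traces only make explicit what the paper leaves implicit.
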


\begin{proof}
Taking $D(t)\subseteq M_\pm$ yields the continuity equation in each bulk region,
$$
\frac{d}{dt}\int_{D_\pm}\rho_\pm \, dx
=\int_{D_\pm}\frac{\partial \rho_\pm}{\partial t}\,{\rm d}x
+\int_{\partial D_\pm}\rho_\pm u_\pm\cdot n_\pm\,{\rm d}a
=0,
$$
and
$$
\int_{\partial D_\pm}\rho_\pm u_\pm\cdot n_\pm\,{\rm d}a
=\int_{D_\pm}\nabla\cdot(\rho_\pm u_\pm)\,{\rm d}x.
$$
Hence,
$$
\partial_t \rho + \nabla\cdot(\rho u)=0,
\qquad \text{in } M_\pm.
$$
Taking $D(t)$ intersecting $\Gamma$ and repeating the transport calculation shows that conservation of mass forces the Rankine--Hugoniot jump condition,
$$
\begin{aligned}
\frac{d}{dt}\int_{D}\rho\,{\rm d}x
&=\frac{d}{dt}\left(\int_{D_+}\rho_+\,{\rm d}x+\int_{D_-}\rho_-\,{\rm d}x\right)\\
&=\int_{D_+}\frac{\partial \rho_+}{\partial t}\,{\rm d}x
+\int_{\gamma}\rho_+ v_s\cdot n_+\,{\rm d}a
+\int_{D_-}\frac{\partial \rho_-}{\partial t}\,{\rm d}x
+\int_{\gamma}\rho_- v_s\cdot n_-\,{\rm d}a
+\int_{\partial D}\rho\, u\cdot n\,{\rm d}a\\
&=\int_{D_+}-\nabla\cdot(\rho_+ u_+)\,{\rm d}x
+\int_{D_-}-\nabla\cdot(\rho_- u_-)\,{\rm d}x
+\int_{\gamma}[\![\rho]\!]v_s\,{\rm d}a
+\int_{\partial D}\rho\, u\cdot n\,{\rm d}a\\
&=-\int_{\partial D_+}\rho_+ u_+\cdot n_+\,{\rm d}a
-\int_{\partial D_-}\rho_- u_-\cdot n_-\,{\rm d}a
+\int_{\gamma}[\![\rho]\!]v_s\,{\rm d}a
+\int_{\partial D}\rho\, u\cdot n\,{\rm d}a\\
&=-\int_{\partial D}\rho\, u\cdot n\,{\rm d}a
-\int_{\gamma}[\![\rho u]\!]\cdot n
+\int_{\gamma}[\![\rho]\!]v_s\,{\rm d}a
+\int_{\partial D}\rho\, u\cdot n\,{\rm d}a.
\end{aligned}
$$
Since $\frac{d}{dt}\int_{D(t)}\rho\,{\rm d}x=0$ and $\gamma$ is an arbitrary small part of the shock surface $\Gamma$, we conclude that
$$
[\![\rho u]\!]\cdot n
=
[\![\rho]\!]v_s,
\qquad \text{on } \Gamma.
$$
\end{proof}

\begin{remark}
The Rankine--Hugoniot conditions above are obtained directly from the conservation
form of the Euler equations and express the balance of fluxes across a moving
discontinuity.
This derivation should be contrasted with that of Proposition~\ref{weaksolRH} below,
where analogous jump conditions are recovered from the definition of weak solutions
by testing against smooth compactly supported functions.
\end{remark}

\subsection{The Rankine--Hugoniot conditions and definition of weak solutions}\label{A2}

We next show how the Rankine--Hugoniot jump condition can be derived directly from
the definition of weak solutions. To illustrate the argument in a transparent
setting, we consider the scalar conservation law
\begin{equation}\label{diveq_model}
\partial_t \rho + \nabla \cdot K(\rho) = 0,
\end{equation}
where $\rho=\rho(t,x)$ is a scalar function defined on spacetime
$(t,x)\in \mathbb{R}\times \mathbb{R}^n$, and
$K(\rho)=K(\rho(t,x)) \in \mathbb{R}^n$ is a given flux function.

Let $\Omega\subset\mathbb{R}^{n+1}$ be a spacetime domain and let
$\Sigma\subset\Omega$ be a smooth codimension-one hypersurface representing a shock.
We assume that $\Sigma$ separates $\Omega$ locally into two open subdomains
$\Omega_-$ and $\Omega_+$ and is described as the zero level set of a smooth function
$f$, namely
\[
\Sigma=\{(t,x)\in\Omega \mid f(t,x)=0\},
\]
with $\nabla_x f\neq 0$ on $\Sigma$.

\begin{lemma}\label{normspeed}
The spacetime hypersurface $\Sigma$ can be represented as a family of moving spatial
interfaces $\Gamma(t)$ propagating with normal velocity $v_s$.
If $\Gamma(t)$ is locally given by $f(t,x)=0$, then the normal propagation speed is
\begin{equation}
v_s=-\frac{\partial_t f}{|\nabla_x f|}.
\end{equation}
\end{lemma}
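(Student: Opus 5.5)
The plan is to track a point that stays on the level set $f=0$ while moving in the normal direction. Fix $t$ and a point $x_0\in\Gamma(t)$, so that $f(t,x_0)=0$. Because $\nabla_x f\neq 0$ on $\Sigma$, the implicit function theorem shows that, near $(t,x_0)$, each slice $\Gamma(t)=\{x\mid f(t,x)=0\}$ is a smooth hypersurface in $\mathbb{R}^n$ that varies smoothly with $t$. This gives the representation of $\Sigma$ as a family of moving interfaces. The unit normal is
\[
n=\frac{\nabla_x f}{|\nabla_x f|},
\]
oriented toward increasing $f$, i.e.\ toward $\Omega_+$ if we choose $f>0$ there.

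Next I would define the normal velocity through a curve. Take a smooth curve $x(\tau)$ with $x(t)=x_0$ and $x(\tau)\in\Gamma(\tau)$ for $\tau$ near $t$. The normal speed is then $v_s=\dot x(t)\cdot n$.

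This quantity does not depend on the choice of curve. Two such curves differ, to first order, by a vector tangent to $\Gamma(t)$, and such a vector has no normal component.

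To fix things concretely, I would use the curve $x(\tau)=x_0+\sigma(\tau)\,n(x_0)$ and solve $f(\tau,x_0+\sigma n)=0$ for $\sigma$. The implicit function theorem applies because $\partial_\sigma f=\nabla_x f\cdot n=|\nabla_x f|\neq 0$. This is the step that needs the hypothesis $\nabla_x f\neq0$, and it is the only nontrivial point.

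Finally, I would differentiate the identity $f(\tau,x(\tau))=0$ at $\tau=t$. This gives
\[
\partial_t f+\nabla_x f\cdot\dot x=0 .
\]
Since $\nabla_x f\cdot\dot x=|\nabla_x f|\,(\dot x\cdot n)=|\nabla_x f|\,v_s$, the identity becomes $\partial_t f+|\nabla_x f|\,v_s=0$. Dividing by $|\nabla_x f|\neq0$ yields
\[
v_s=-\frac{\partial_t f}{|\nabla_x f|}.
\]

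It is also worth noting that the spacetime normal to $\Sigma$ is proportional to $(\partial_t f,\nabla_x f)$. Normalizing its spatial part gives $(n_t,n_x)\propto(-v_s,n)$, which is consistent with the identities $n_t=-v_s$ and $n_x=n$ used in the proofs above.
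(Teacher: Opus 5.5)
Your proof is correct and follows the same route as the paper: differentiate $f(t,x(t))=0$ along a point moving with the interface, then identify $\nabla_x f\cdot\dot x=|\nabla_x f|\,v_s$ using $n=\nabla_x f/|\nabla_x f|$. Your implicit-function-theorem construction of the normal curve and your check that $v_s$ does not depend on the chosen curve make rigorous what the paper's computation $\partial_t f\,dt+\nabla_x f\,dx=0$ leaves implicit.
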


\begin{proof}
By $f(t,x)=0$, we get
$$
\partial_t f\, dt+\nabla_x f\, dx=0.
$$
Dividing by $|\nabla_x f|\,{\rm d}t$ on both sides gives
$$
\frac{\partial_t f}{|\nabla_x f|}
+\frac{\nabla_x f}{|\nabla_x f|}\cdot\frac{dx}{dt}=0.
$$
On the other hand, the normal speed satisfies
$v_s=\frac{dx}{dt}\cdot n
=\frac{\nabla_x f}{|\nabla_x f|}\cdot\frac{dx}{dt}$,
and the claim follows.
\end{proof}

We now derive the Rankine--Hugoniot condition for \eqref{diveq_model} using the weak
formulation.

\begin{proposition}\label{weaksolRH}
Let $\rho$ be a weak solution of \eqref{diveq_model} admitting a jump discontinuity
across $\Sigma$. Then $\rho$ satisfies the Rankine--Hugoniot condition
\begin{equation}\label{diveq_modelRH}
[\![\rho]\!]\, v_s = [\![K(\rho)]\!] \cdot n,
\end{equation}
where $[\![\cdot]\!]$ denotes the jump across $\Sigma$,
$v_s=-\frac{\partial_t f}{|\nabla_x f|}$ is the normal velocity, and
$n=\frac{\nabla_x f}{|\nabla_x f|}$ is the unit normal to $\Gamma(t)$.
\end{proposition}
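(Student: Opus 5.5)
The plan is the standard weak-formulation argument. Recall that $\rho$ is a weak solution of \eqref{diveq_model} if
\[
\iint_{\Omega}\big(\rho\,\partial_t\phi + K(\rho)\cdot\nabla_x\phi\big)\,{\rm d}x\,{\rm d}t=0
\]
for every $\phi\in C_c^\infty(\Omega)$. We may also assume $\rho$ is smooth up to $\Sigma$ on each side $\Omega_\pm$, with one-sided traces $\rho_\pm$ on $\Sigma$.

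\textbf{Step 1: the PDE holds classically in each region.} Choose $\phi$ supported in $\Omega_+$ (respectively $\Omega_-$) and integrate by parts. This shows that $\partial_t\rho+\nabla\cdot K(\rho)=0$ holds pointwise in $\Omega_\pm$, by the fundamental lemma of the calculus of variations.

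\textbf{Step 2: isolate the interface term.} Take an arbitrary $\phi\in C_c^\infty(\Omega)$ whose support meets $\Sigma$. Split the integral into $\Omega_+$ and $\Omega_-$ and apply the spacetime divergence theorem to the vector field $(\rho\phi, K(\rho)\phi)$ on each piece. By Step 1 the bulk terms vanish, leaving
\[
\int_{\Sigma}\phi\,\big(\rho_-\nu_t+K(\rho_-)\cdot\nu_x\big)\,{\rm d}a
+\int_{\Sigma}\phi\,\big(\rho_+(-\nu_t)+K(\rho_+)\cdot(-\nu_x)\big)\,{\rm d}a=0 .
\]
Here $\nu=(\nu_t,\nu_x)$ is the spacetime unit normal to $\Sigma$ pointing from $\Omega_-$ into $\Omega_+$, and it is outward for $\Omega_-$. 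Orient $f$ so that $f>0$ in $\Omega_+$; then $\nu=(\partial_t f,\nabla_x f)/|(\partial_t f,\nabla_x f)|$.

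\textbf{Step 3: conclude.} Since $\phi|_\Sigma$ is arbitrary, the previous identity gives
\[
[\![\rho]\!]\,\partial_t f+[\![K(\rho)]\!]\cdot\nabla_x f=0 \quad \text{on } \Sigma,
\]
after multiplying by the common positive normalization factor. Divide by $|\nabla_x f|\neq 0$ and use Lemma \ref{normspeed}, which gives $\partial_t f/|\nabla_x f|=-v_s$ and $n=\nabla_x f/|\nabla_x f|$. This yields $-v_s[\![\rho]\!]+[\![K(\rho)]\!]\cdot n=0$, which is \eqref{diveq_modelRH}.

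\textbf{Main obstacle.} The only delicate point is bookkeeping of orientations and signs in Step 2. The spacetime normal must be consistent with the convention that $n$ points from the minus side to the plus side, and the jump must be taken as $[\![\cdot]\!]=(\cdot)_+-(\cdot)_-$. I would fix these conventions at the start by orienting $f$ so that $\{f>0\}=\Omega_+$. With that choice the signs produced by Lemma \ref{normspeed} match the statement.
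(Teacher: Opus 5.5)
Your proposal is correct and follows essentially the same route as the paper. You split the weak formulation over $\Omega_\pm$, integrate by parts on each piece, drop the bulk terms because the equation holds classically in each region, read off the pointwise jump relation from the arbitrary trace of the test function on $\Sigma$, and finally divide by $|\nabla_x f|$ and invoke Lemma~\ref{normspeed}. Your Step~1 (testing with functions supported in $\Omega_\pm$) and your explicit orientation $\{f>0\}=\Omega_+$ spell out two points that the paper's proof only asserts or leaves implicit.
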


\begin{proof}
By definition, $\rho$ is a weak solution of \eqref{diveq_model} if
\[
\int_{\mathbb{R}}\int_{\mathbb{R}^n}
\bigl(\rho\,\partial_t h+K(\rho)\cdot\nabla_x h\bigr)\,{\rm d}x\,{\rm d}t=0,
\qquad
\forall\,h\in C_c^\infty(\mathbb{R}\times\mathbb{R}^n).
\]
Decomposing the spacetime integral according to the partition
$\Omega=\Omega_-\cup\Sigma\cup\Omega_+$ yields
\begin{equation}\label{weaksol1}
\iint_{\Omega_-}(\rho\partial_t h+K(\rho)\cdot\nabla_x h)\,{\rm d}x\,{\rm d}t
+
\iint_{\Omega_+}(\rho\partial_t h+K(\rho)\cdot\nabla_x h)\,{\rm d}x\, {\rm d}t=0.
\end{equation}
Integrating by parts on each subdomain gives
\begin{equation}\label{weaksol2}
\begin{aligned}
\iint_{\Omega_\pm}(\rho\partial_t h+K(\rho)\cdot\nabla_x h)\,{\rm d}x\,{\rm d}t
&=
-\iint_{\Omega_\pm}
\left(\partial_t\rho+\nabla\cdot K(\rho)\right)h\,{\rm d}x\,{\rm d}t \\
&\quad
+\int_{\Sigma}
\left(\rho h\, n_t^{\pm}
+h\,K(\rho)\cdot n_x^{\pm}\right)\,{\rm d}a,
\end{aligned}
\end{equation}
where $N^\pm=(n_t^\pm,n_x^\pm)$ denotes the outward unit normal to $\Sigma$
relative to $\Omega_\pm$.

Since $\rho$ is smooth away from $\Sigma$, it satisfies
\begin{equation}\label{weaksol3}
\partial_t\rho+\nabla\cdot K(\rho)=0
\quad\text{in }\Omega_\pm,
\end{equation}
and the volume integrals vanish. Combining
\eqref{weaksol1}--\eqref{weaksol3} yields
\[
\int_{\Sigma}
\bigl(\rho^+ n_t+K(\rho^+)\cdot n_x\bigr)h\,{\rm d}a
-
\int_{\Sigma}
\bigl(\rho^- n_t+K(\rho^-)\cdot n_x\bigr)h\,{\rm d}a
=0,
\]
for all test functions $h$. Hence,
\begin{equation}\label{weaksol4}
\rho^+ n_t+K(\rho^+)\cdot n_x
=
\rho^- n_t+K(\rho^-)\cdot n_x,
\end{equation}
where $N=(n_t,n_x)=\frac{(\partial_t f,\nabla_x f)}{|(\partial_t f,\nabla_x f)|}$.

Rewriting \eqref{weaksol4} gives
\[
[\![\rho]\!]\,\partial_t f
=
-[\![K(\rho)]\!]\cdot\nabla_x f.
\]
Dividing by $|\nabla_x f|$ and using
$v_s=-\frac{\partial_t f}{|\nabla_x f|}$ and
$n=\frac{\nabla_x f}{|\nabla_x f|}$ yields \eqref{diveq_modelRH}.
\end{proof}
\end{appendices}

\end{document}